\newcites{app}{Appendix References} 
\pgfplotsset{compat=1.17}
\DeclareMathOperator*{\argmax}{arg\,max}
\newtheorem{theorem}{Theorem}
\newtheorem*{theorem*}{Theorem}
\newtheorem{algorithm}[theorem]{Algorithm}
\newtheorem{claim}{Claim}
\newtheorem{corollary}{Corollary}
\newtheorem{definition}{Definition}
\newtheorem{example}{Example}
\newtheorem{lemma}{Lemma}
\newtheorem{proposition}{Proposition}
\newtheorem{proposition*}{Proposition}
\newtheorem{observation}{Observation}
\newtheorem{remark}[theorem]{Remark}
\newcounter{resetdummycounter}
\newcommand{\resetcounterlist}[1]{%
  \renewcommand*{\do}[1]{\counterwithin*{##1}{resetdummycounter}}%
  \docsvlist{#1}}
\newcommand{\resetcounters}{\stepcounter{resetdummycounter}}
\begin{document}

\title{Rawlsian Assignments\thanks{We are especially grateful to William Thomson, the Editor Scott Duke Kominers, and two anonymous
referees for their detailed comments and useful suggestions. 
We thank (in alphabetic order) Christian Basteck, Anna Bogomolnaia, Hector Cancela, Juan Dubra, Conal Duddy, Federico Echenique, Lars Ehlers, Di Feng, Marcelo Fernandez, Alan Frieze, Bettina Klaus, Thilo Klein, Vikram Manjunath, Hervé Moulin, Afshin Nikzad, Josué Ortega, Joaquin Paleo, Szilvia P\'apai, William Phan, Antonio Romero-Medina, Yuki Tamura, Alex Teytelboym, Juan Pablo Torres-Martinez, and Kemal Yildiz, as well as the conference/seminar participants at UNSL - IMASL, the 13th Conference on Economic Design, FEN - Universidad de Chile, the SEU 2022 meeting, the MATCH-UP 2022 workshop, 2022 Latin American Workshop in Economic Theory, 2022 Latin American Meeting of the Econometric Society (LAMES), the 2024 Matching in Practice workshop in Zürich, CEA - Universidad de Chile, Department of Economics FCS - UdelaR, Concordia University, and Universidad de Montevideo, for their comments and suggestions. Tom Demeulemeester gratefully acknowledges the support by Research Foundation -- Flanders under grant \mbox{11J8721N} and by the Swiss National Science Foundation (SNSF) through Project 100018$\_$212311. Financial support from ANII, FMV\_1\_2021\_1\_166576, is gratefully acknowledged. We also thank the members of the MTAV project (https://github.com/eze91/MTAV) for sharing the data with us, and Rodrigo Gonzalez for his excellent research assistance.}}
\author{Tom Demeulemeester\thanks{Department of Quantitative Economics, Maastricht University,  Tongersestraat 53, 6211 LM Maastricht, Netherlands.  (email: tom.demeulemeester@maastrichtuniversity.nl)} \ \ \ \ \ \ \ \ \ \ Juan S. Pereyra\thanks{Universidad de Montevideo, Prudencio de Pena 2440, Montevideo, Uruguay. (email:\ jspereyra@um.edu.uy) }}

\maketitle

\begin{abstract}
We study the assignment of indivisible goods to individuals without monetary transfers. Existing literature has mainly focused on efficiency and \textit{individually} fair assignments; consequently, egalitarian concerns have been overlooked. Drawing inspiration from the allocation of apartments in housing cooperatives—where families prioritize egalitarianism in assignments—we introduce the concept of Rawlsian assignment. We show that the Rawlsian assignment is unique and that the rule it induces is efficient and anonymous. The Rawlsian rule is not sd-strategy-proof, reflecting a fundamental incompatibility between worst-rank minimization and even weak incentive requirements. Nevertheless, it satisfies several robustness properties that limit other forms of manipulation.
 We illustrate our findings using cooperative housing preference data, showing significant improvements in egalitarian outcomes over both the probabilistic serial rule and the currently employed rule.
\bigskip

JEL\ classification: C78, D63.

Keywords: random assignment, sd-efficiency, fairness, Rawls.
\end{abstract}


\section{Introduction} 
We study the problem of allocating a set of indivisible goods to agents when monetary transfers cannot be used. The indivisibility of objects makes it 
generally impossible to achieve fairness from an ex-post perspective. As a remedy, we adopt an ex-ante perspective where each agent receives a random allocation, i.e., a lottery over the set of objects. Inspired by the Rawlsian concept of fairness, we propose a new solution concept, examine its properties, and evaluate its performance using real data.

Our primary motivation is the assignment of apartments in housing cooperatives. Housing cooperatives typically involve a group of families who participate in the construction of a building. Upon completion of the building, the problem of distributing the apartments among the families arises. The use of prices is, in general, not allowed by regulation, and the assignment process relies on the ordinal preferences of the families. In Uruguay, the regulation also states that apartments should be assigned randomly; however, participants have the option of using another procedure if they unanimously agree on it. For the Uruguayan housing cooperatives we study, the main concern of the members is the \textit{egalitarianism} of the final assignment. They aim to avoid unequal situations where, for example, some families receive their most preferred apartments while others rank their assignments very low. The spirit of this principle is not exclusive to Uruguay; it is present in many other cases as well (for more instances, we refer the reader to www.housinginternational.coop). Another example where a similar concern may exist is the assignment of faculty offices to professors in a university department. 

There have been many important contributions in the literature regarding the efficiency and envy-freeness of the assignments. However, little attention has been paid to \textit{egalitarian} concerns. In this paper, we introduce a new concept to address this issue and examine its properties. Furthermore, using data from several housing cooperatives, we demonstrate that the solution we propose is more egalitarian from a Rawlsian perspective compared to the currently implemented one. Thus, the Rawlsian solution we recommend improves upon the outcomes achieved by the current rule.

We adopt Rawls' concept of justice and assess an assignment based on the well-being of the worst-off individuals. The rank of an object on an agent's preferences is the position of the object in her preferences. Thus, a higher rank indicates lower satisfaction (for example, the most preferred object has rank $1$). 
The analysis begins by identifying, for a given assignment, who the worst-off agents are. If we were considering deterministic assignments, we could examine the rank of the assigned object. The worst-off agents would be those associated with the highest rank. Subsequently, we could select assignments that minimize the highest rank (i.e., maximize the \textit{satisfaction} of the worst-off agents). In general, there are multiple assignments that satisfy this criterion, allowing for recursive application of the same criterion, as proposed by \cite{sen2017collective}.

When we expand the analysis to include random assignments, defining the worst-off agents becomes more challenging. If agents' cardinal utilities were known, one could consider the expected utility of each agent and identify the agent with the lowest expected utility (after some normalization). However, in our problem, we only have access to agents' ordinal preferences. So instead, for each agent, we determine the rank of her least preferred object among those she receives with positive probability. We then focus on the agents for whom this rank is the highest. Among these agents, the worst-off individuals are those who receive the object with the highest probability.

To compare two assignments, we first consider the worst-off agent, that is, the one who receives her least preferred object with the highest probability. If the probabilities are the same under both assignments, we turn to the agent who receives her least preferred object with the second-highest probability. If all the probabilities associated with the least preferred object of each agent are the same, we conduct the same comparison with the two least preferred objects of each agent. If, at any point, the probability in the first assignment is strictly higher than the corresponding probability in the second assignment, we say that the first is Rawlsian-dominated by the second. An assignment is considered Rawlsian if it is not Rawlsian-dominated by any other assignment.

We first show that, in any problem, there always exists a \textbf{unique} Rawlsian assignment. Roughly, if there were two Rawlsian assignments, one would consider the average assignment. In this assignment, agents who receive (with positive probability) the objects associated with the highest rank would be assigned them with a lower probability. 

We extend preferences over objects to preferences over lotteries using first-order stochastic dominance (sd). In words, an agent sd-prefers a lottery if it guarantees her a weakly higher probability of receiving her most preferred
object, a weakly higher probability of receiving her two most preferred objects, and so on. An assignment sd-dominates another assignment if for every agent, the lottery she receives in the first assignment is sd-preferred to the lottery in the second assignment. 
We prove that the Rawlsian rule is \textbf{sd-efficient} and \textbf{anonymous}: no assignment sd-dominates the Rawlsian assignment, and agents’ allocations do not depend on their names. 

We also study the incentive properties of the Rawlsian rule. Although the Rawlsian rule is not sd-strategy-proof, we show that this limitation is not specific to our construction: there is a fundamental incompatibility between worst-rank minimization and even weak strategy-proofness requirements. At the same time, the Rawlsian rule satisfies several robustness properties that are closely connected to its objective. It satisfies group lower invariance: groups of agents cannot affect any agent's probability of receiving low-ranked objects by changing only the relative order of higher-ranked objects. It also satisfies consistency and non-bossiness. These properties distinguish the Rawlsian rule from standard alternatives such as probabilistic serial and random serial dictatorship, as well as from other rules with an egalitarian objective, such as the positive and prudent equality rules of \cite{duddy2025egalitarian}.

The construction of our concept resembles the welfarist definition of the classic probabilistic serial rule \citep{bogomolnaia2001new}, proposed by \cite{bogomolnaia2015random,bogomolnaia2018most}. The author describes the probabilistic serial (PS) rule in terms of the cumulative probabilities of each agent being assigned her top $k$ objects. She shows that the PS rule is the unique rule that lexicographically maximizes the cumulative probabilities for all agent-preference pairs $(i,k)$. The Rawlsian criterion is based on a similar procedure but starts from the least preferred objects and minimizes their probability. Moreover, the Rawlsian rule lexicographically minimizes the cumulative probabilities preference by preference, instead of lexicographically minimizing the entire vector of cumulative probabilities.

Next, we study how to compute the Rawlsian assignment for a given problem. We introduce an algorithm to compute it in polynomial time by solving at most $ \frac{n^3+n^2}{2}$ linear programs, where $n$ is the number of agents.

We illustrate our analysis using a set of preferences from housing cooperatives in Uruguay. We compare our solution with the outcomes of the PS rule, and the rule that is currently used, called the \textbf{MTAV} from the Spanish \textit{Mejor Tecnología de Asignación de Viviendas}
(``Best Housing Assignment Technology''), proposed by \cite{prino2016optimal}. Roughly, the MTAV runs as follows. First, it selects all deterministic assignments that minimize the maximum rank. 
Second, among these assignments, it considers those that maximize the sum of the families' utilities, assuming that the utility of the apartment ranked in position $k$ is $n-k$ for each family. Finally, if multiple assignments are selected, one is randomly chosen (see Appendix \ref{MTAV} for a formal description). 

The general finding is that the Rawlsian solution significantly improves the assignment of the worst-off families. For instance, under the PS assignment, there is at least one family who receives their least preferred apartment with positive probability in all but two out of 24 cooperatives. In contrast, with the Rawlsian assignment, only three cooperatives have a family that receives their least preferred apartment with positive probability. Although there are both \textbf{sd-winners} and \textbf{sd-losers} when transitioning from the PS rule to the Rawlsian rule---that is, some families receive a Rawlsian lottery that sd-dominates their PS lottery, while others receive a PS lottery that sd-dominates their Rawlsian lottery---the number of families in the former group is larger in all evaluated instances.
Regarding the MTAV, by construction, its maximum rank (the rank of the assigned apartment in the preferences of the worst-off family) coincides with that of the Rawlsian rule. However, there are cases where the Rawlsian rule assigns fewer families to their least preferred apartment.

Another measure to compare the assignments is the expected number of families assigned apartments ranked in the first $k$ positions, for each $k \in \{1,\ldots,n\}$. For a fixed cooperative and a fixed random assignment, this expectation is taken over the lottery induced by the
assignment rule; equivalently, it is the sum of the assignment probabilities with which families receive apartments ranked in the first $k$ positions. The Rawlsian rule assigns a lower expected number of families to their least preferred apartments compared to the PS rule. It also assigns a lower expected number of families to their top choices, especially their first choice. The MTAV falls between these two distributions. To illustrate this, Figure \ref{fig:intro} displays the rank distribution graph for a cooperative with 28 families.
Therefore, our solution provides an alternative to both the PS and MTAV that significantly improves upon them from an egalitarian perspective.

Finally, we complement the empirical findings by analyzing the maximum rank of the Rawlsian rule in large markets. We consider markets of size $n$, where agents' preferences are drawn i.i.d.\ from a uniform distribution, and study the limit of the expected maximum rank as $n$ tends to infinity. Although the average maximum rank of the Rawlsian rule tends to infinity, we show that it grows at a slow rate (it is upper bounded by $\lfloor\ln(n)\rfloor$ plus a constant). In a market of size \(n = 1000\), for example, our result implies that the expected rank of the least preferred assigned object is at most~9. For the PS rule, however, simulations suggest that the expected maximum rank in a random market of size~$n$ is very close to $n$.


\begin{figure}
  \centering
  \includegraphics[width=0.6\linewidth]{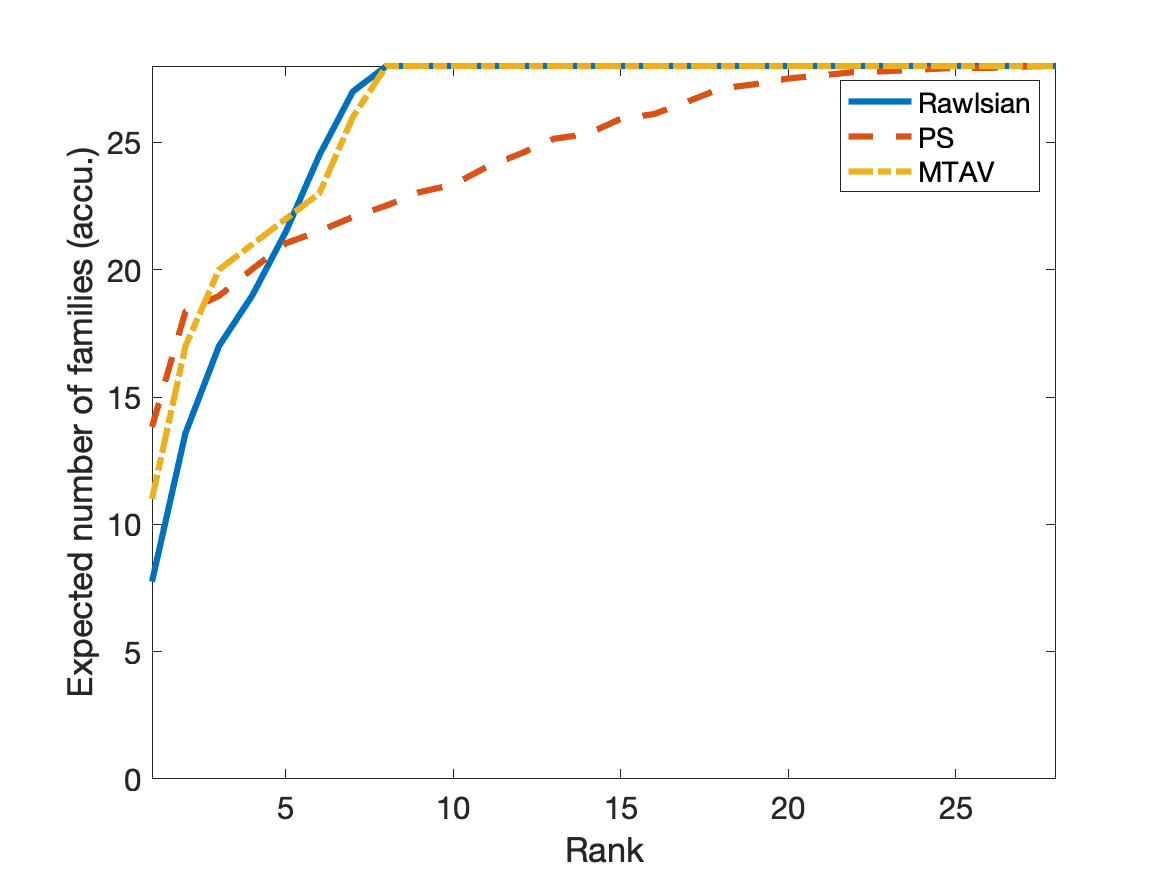}
  \caption{Cumulative rank distribution for a housing cooperative with 28 families.}
  \label{fig:intro}
\end{figure}

In the next section, we place our contributions within the related literature. Section \ref{def} presents the model and definitions. In Section \ref{rawls}, we define the concept of a Rawlsian assignment. Section \ref{results} contains our main results and relates the Rawlsian assignment to other concepts in the literature. In Section \ref{algorithm}, we describe an algorithm to find the Rawlsian assignment for a given problem, and in Section \ref{app} we use it to illustrate the results with data from different housing cooperatives. Section \ref{section_large} includes the analysis for large markets. We conclude in Section \ref{conclu}. All the omitted proofs are in the Appendix and the Online Appendix.

\section{Related Literature}

Since the introduction of the assignment problem by \cite{hylland1979efficient}, there have been numerous significant contributions related to sd-efficiency, sd-envy-freeness, and sd-strategyproofness. Notable among these is the random serial dictatorship (or random priority) rule introduced by \cite{zhou1990conjecture} and \cite{abdulkadirouglu1998random}. While this rule is sd-strategyproof (it is immune to individual preferences manipulation), its outcome may be stochastically dominated with respect to individual preferences, thus lacking sd-efficiency. In response, \cite{bogomolnaia2001new} proposed the PS rule, which is sd-efficient but not sd-strategyproof. Moreover, the PS rule is sd-envy-free: the lottery each agent receives stochastically dominates with respect to individual preferences the lottery of every other agent. The random serial dictatorship rule is not sd-envy-free. In this paper, we introduce a Rawlsian criterion which induces a well-defined rule that is sd-efficient, as is PS, while satisfying an egalitarian requirement. 

The Rawlsian idea of justice, as described by \cite{rawls2020theory}, has found applications in two-sided matching markets. For instance, \cite{masarani1989existence}, \cite{romero2005equitable}, and \cite{kuvalekar2024fair} explore the compatibility of this concept of justice with stability in a marriage market. They adopt the Rawlsian criterion to select a stable matching that treats both sides of the market ``symmetrically.'' In a school choice model, \cite{galichon2023stable}  show that stability often comes at the cost of extreme forms of inequality. The assignment problem we study differs from the literature mentioned above. There are agents on one side and objects on the other, objects do not possess priorities, and the notion of stability does not apply. \cite{klaus2010smith} adapt the Rawlsian criterion to the roommate problem and demonstrate its compatibility with stability. Another way to compare the inequality of different assignments is by applying the Lorenz order. This method has been used in our framework by various studies, such as \cite{pycia2017incentive} and \cite{harless2018learning}.

In the same framework as ours, \cite{afacan2024rawlsian} study a Rawlsian notion of fairness but consider \textbf{only} deterministic assignments. They define an assignment to be more Rawlsian than another if the assignment ranking of the worst-off agent is lower in the former than in the latter; and in the case of equality, the size of the worst-off agents group is smaller in the former. For the case of deterministic assignments, our notion of Rawlsian assignment is a refinement of their notion. However, expanding the analysis to probabilistic assignments allows us to show, in contrast to \cite{afacan2024rawlsian}, the existence of a unique Rawlsian assignment. \cite{ortega2023cost} compare different rules in school choice in terms of their expected maximum rank, where the expectation is taken over the random draw of markets in their model. They show that the deferred acceptance and top trading cycles rules perform poorly on this measure, being outperformed by the rule that minimizes the sum of ranks for students. \cite{ortega2024unimprovable} define the Rawlsian inequality of an assignment as the ratio between the maximum rank of the assignment and that of the Rawlsian assignment. The authors compare the deferred acceptance rule and the efficiency-adjusted deferred acceptance rule of \cite{kesten2010school} in terms of their Rawlsian inequality.

The study by \cite{duddy2025egalitarian} is the closest to our paper. Duddy introduces an alternative notion of egalitarianism, which he calls even-handedness. He argues that an agent is a least advantaged agent at a given assignment if her cumulative probabilities of being assigned to her top $k$ objects are weakly lower than the cumulative probabilities of every other agent, for all values of $k$. Loosely speaking, an assignment $x$ is even-handed if there does not exist another assignment~$y$ such that a least advantaged agent at $y$ is strictly better off (in the stochastic dominance sense) at $y$ than at $x$.

Note that a given profile may admit multiple even-handed assignments. \cite{duddy2025egalitarian} proposes several rules that select a unique even-handed and sd-efficient assignment for every profile, including the ``positive equality'' rule and the ``prudent equality'' rule. 
The positive equality rule uses the following order to compare assignments. 
It starts by considering, for each assignment, the probability with which each agent receives her most preferred object, and takes the lowest probability. The rule selects the assignment for which the lowest probability is the highest. If the two lowest probabilities are the same, it moves on to the agent who receives her two most preferred objects with the lowest total probability, and selects the assignment for which the lowest total probability is the highest, and so on. The positive equality rule chooses the maximum assignment based on this order. On the other hand, the prudent rule performs a similar comparison but starts by considering the agent who receives, at each assignment, her least preferred object with highest probability. Then, it selects the assignment for which the highest probability is the lowest. \textit{When the two assignments coincide, it then moves to the agent who receives her least and second-least preferred objects with the highest total probability, and so on.} 

In contrast, the Rawlsian assignment begins by considering the agent who receives her least preferred object with the largest probability and aims to minimize this probability (same as the prudent equality rule). However, in the case of ties, instead of considering the least and second-least preferred objects, it moves to the agent who receives her least preferred object with the second-highest probability. Intuitively, our proposal focuses on the worst-off agent defined as the agent who receives her least preferred object with the largest probability, and then moves to the second worst-off agent, defined as the agent who receives her least preferred object with the second-highest probability.
The Rawlsian assignment does not satisfy Duddy's even-handedness criterion, and there are assignments that satisfy even-handedness that are not Rawlsian. Thus, the two concepts are independent. For a more detailed discussion, we refer to Appendix \ref{conal}. 

Also related is the paper by \cite{feizi2022distributive} who introduces a notion of inequality in random assignments. The key element of his definition is the number of agents who envy other agents in a representation of the assignment as a lottery over deterministic assignments. The concept aims to minimize this envy, which is different from our approach. 

The notion of Rawlsian assignments is also related to the downward lexicographic extension of stochastic dominance \citep{cho2016equivalence, cho2018probabilistic}. Unlike \cite{cho2018probabilistic}, who compares two lotteries from an individual perspective, we use a similar criterion but apply it to the whole assignment. Also, \cite{cho2016equivalence} show that upward lexicographic (ul) efficiency and sd-efficiency are equivalent. A Rawlsian assignment is sd-efficient (Proposition \ref{sd-eff}), and thus, also ul-efficient, but not the other way around.
Finally, other papers have also used the rank distribution to define different concepts. \cite{featherstone2020rank} introduces rank efficiency. An assignment is rank efficient if its rank distribution cannot be stochastically dominated. We prove in the Online Appendix (Section \ref{rank}) that there is no relation between Rawlsian and rank efficient assignments.

\section{Primitives and definitions}\label{def}
Let $I=\{1,\ldots,n\}$ be the set of agents and $O$ the set of objects (with $\vert O \vert=n$). Each agent $i \in I$ has (strict) preferences over the set of objects, denoted by $\succeq_i$. 
A preference profile is denoted by $\succeq =(\succeq_i)_{i \in I}$. Sometimes we represent agent $i$'s preferences $\succeq_i$ as an $n$-dimensional vector $r_i(\succeq_i)\in\{1,\ldots,n\}^n$, where $r_{io}(\succeq_i)=k$ means that object $o$ is ranked $k$-th by agent $i$ at $\succeq_i$. We denote $r_{io}(\succeq_i)$ by $r_{io}$ when agent $i$'s preferences are clear from the context. The assumption of strict and complete preferences fits our main motivation: families in the cooperatives have to rank all the apartments, and ties in preferences are not allowed. 
An assignment problem is a tuple $(I,O,\succeq)$. We fix the sets of agents and objects, and define a problem as a preference profile $\succeq$.

A solution to an assignment problem is a (random) assignment $x=(x_i)_{i \in I}$, where each $x_i$ is a probability distribution over $O$, and for every object $o \in O$, $\sum_{i \in I} x_{io}=1$. We interpret $x_{io} \in [0,1]$ as the probability with which agent $i$ is allocated object $o$. An assignment is deterministic if all of its entries are either zero or one. The Birkhoff-von Neumann theorem \citep{birkhoff1946tres, vonNeumann1953} states that any random assignment can be written as a convex combination of deterministic assignments.

We usually describe an assignment by a matrix where the rows are indexed by the agents and the columns are indexed by the objects, and for each $i$, row $i$ is the lottery agent $i$ receives. Let $X$ be the set of assignments (or equivalently, the set of $n\times n $ bi-stochastic matrices, that is, matrices of non-negative real numbers in which the entries in each row sum to $1$, and the entries in each column sum to $1$). 
A \textbf{rule} is a function $\phi$ which maps every problem to an assignment: for every $\succeq$, $\phi(\succeq) \in X$. We will refer to the lottery assigned to agent $i$ -- given by the $i$-th row of an assignment -- as her allocation. We denote by $\phi_i(\succeq)$ the allocation of agent $i$ by rule $\phi$ in problem $\succeq$.

We define the \textbf{maximum rank} of an assignment as the worst rank of the objects to which the agents are assigned with strictly positive probability. Formally, the maximum rank of an assignment $x\in X$ equals \[\max_{(i,o)\in I\times O}\{r_{io}|x_{io}>0\}.\]

An assignment has minimal maximum rank if there does not exist another assignment with a strictly lower maximum rank. A rule minimizes maximum rank if it outputs an assignment with minimal maximum rank.

We will use the following concept of efficiency for random assignments due to \cite{bogomolnaia2001new}.

\begin{definition}{\ }
\label{def:sd}
\begin{enumerate}
    \item An allocation $x_i$ for agent $i$ \textbf{first-order stochastically} dominates (sd-dominates) another allocation $x_i'$ if, for each $o \in O$:

$$
\sum_{o' : o' \succeq_i o} x_{io'} \geq \sum_{o' : o' \succeq_i o} x'_{io'}. 
$$

In this case, we use the notation $x_i \succeq_i^{sd} x'_i$.

\item An assignment $x$ \textbf{stochastically dominates} another assignment $x'$ if for every agent $i$, $x_i \succeq_i^{sd} x'_i$, and $x \neq x'$. 

\item An assignment is \textbf{sd-efficient} if no other assignment stochastically dominates it. 
\end{enumerate}
\end{definition}

The standard notion of fairness requires that each agent should find her allocation at least as desirable as anyone else's allocation. 

\begin{definition}
Given a problem $\succeq$, an assignment $x$ is \textbf{sd-envy-free} for $\succeq$ if for all $i,j \in I$, we have $x_i \succeq_i^{sd}  x_j$.
\end{definition}

If instead we require that there is no other agent's allocation that sd-dominates the agent's allocation, we have a weaker notion, called \textbf{weak sd-envy-freeness}.
\medskip

The following requirement is that assignments should be independent of the agents' names.

\begin{definition}
Given a problem $\succeq=(\succeq_i)_{i \in I}$ and a permutation $\pi:I\rightarrow I$, a rule $\phi$ satisfies \textbf{anonymity} if for each $i \in I$:

$$\phi_i((\succeq_{\pi(i)})_{i\in I})=\phi_{\pi(i)}((\succeq_{i})_{i \in I}).$$

\end{definition}

Anonymity implies a weaker notion of fairness called \textbf{equal treatment of equals}, according to which agents with the same preferences should obtain the same allocation.
\medskip



Finally, we consider the requirement of immunity to misrepresentations of individual preferences. 

\begin{definition}
A rule $\phi$ is \textbf{sd-strategyproof} if at any preference profile $\succeq$ no agent benefits by misreporting her preferences: for each $\succeq$, for each $i \in I$ , and for each $\succeq'_i$: 

$$
\phi_i(\succeq) \ \ \succeq_i^{sd} \ \ \phi_i(\succeq'_i,\succeq_{-i}).
$$

\end{definition}

As with sd-envy-freeness, if in the last definition we require that there be no manipulation such that its outcome sd-dominates the allocation the individual gets from truth-telling, we get a weaker notion, called \textbf{weak sd-strategyproofness.} It is weaker because it allows the allocation the agent gets when she manipulates to not be comparable to her allocation under truth-telling.

We will compare our proposal with the PS rule due to \cite{bogomolnaia2001new} which is defined as follows. Given a problem, think of each object as an infinitely divisible good with unit supply. \textbf{Step 1}: All agents start by consuming probabilities of receiving their most preferred object at the same unit speed. Proceed to the next step when an object is completely exhausted. \textbf{Step $s \geq 2$}: All agents consume probabilities of receiving their remaining most preferred object at the same speed. Proceed to the next step when an object is completely exhausted. 
The procedure terminates when each agent has consumed exactly one total unit of objects (i.e., at time one). The allocation of an agent $i$ is given by the amount of each object she has consumed until the algorithm terminates.

\section{Rawlsian assignments}\label{rawls}

In this section, we define our main concept of Rawlsian assignments. Given an assignment $x$, we denote by $b^x_i (k)$ the total probability with which agent $i$ gets objects with a rank between $n$ and $k$ in her preferences. That is:

$$
b^x_i (k)= \sum_{o \in O} \mathds{1} \{r_{io} \geq k \} x_{io}.
$$

In particular, $b^x_i(n)$ is the probability with which agent $i$ receives her least preferred object. By definition, $b^x_i(1)=1$. We denote by $b^x_i$ the vector of the cumulative probabilities from the least to the most preferred object: $b^x_i =(b^x_i(n),b^x_i(n-1),\ldots, b^x_i(1)= 1)$. Note that this is the opposite direction from the cumulative probability representation used in the definition of stochastic dominance (Definition \ref{def:sd}), where probabilities are accumulated from the most to the least preferred object:
\[
\sum_{o\in O} \mathds{1}\{r_{io}=1\}x_{io},\quad
\sum_{o\in O} \mathds{1}\{r_{io}\leq 2\}x_{io},\quad \ldots,\quad
\sum_{o\in O} \mathds{1}\{r_{io}\leq n\}x_{io}=1.
\]


Given an assignment $x$, and the vectors $(b^x_i)_{i \in I}$, we define the vector $B^x \in [0,1]^{n^2}$ as follows.

\begin{enumerate}
    \item The first elements $(B^x_1, \ldots, B^x_n)$ are the elements $(b^x_1(n), \ldots, b^x_n(n))$ listed in a non-increasing order. 
    
    \item Elements  $(B^x_{n+1}, \ldots, B^x_{2n})$ are the elements $(b^x_1(n-1), \ldots, b^x_n(n-1))$ listed in a non-increasing order. 
    
    \item In general, elements $(B^x_{(k-1)n+1}, \ldots, B^x_{kn})$ for $k=1,\ldots, n$, are the elements $(b^x_1(n-k+1), \ldots, b^x_n(n-k+1))$ listed in a non-increasing order. 
\end{enumerate}

Vector $B^x$ describes the cumulative distribution of probabilities induced by assignment $x$, from the least preferred object of each agent, to her most preferred object. 
The first $n$ entries of the vector are the probabilities with which each agent receives her least preferred object. And, in particular, the first entry is the highest among the probabilities with which agents receive their least preferred object. 

The following example illustrates the definition.

\begin{example}\label{ex1}
Let $I=\{1,2,3\}$, $O=\{a,b,c\}$, and preferences as follows:

\[
\begin{array}{c|ccc}
\succeq_1 & a & b & c \\ 
\succeq_2 & a & b & c \\ 
\succeq_3 & b & c & a \\ 
\end{array}
\]


Consider the assignment:
\begin{equation*}
x = 
\begin{pmatrix}
\frac{1}{2}& \frac{1}{2} & 0 \\
\frac{1}{2} & \frac{1}{2}  & 0  \\
0 &  0 & 1
\end{pmatrix}.
\end{equation*}

Recall that the entries of $b_i^x$ are ordered from the least preferred object to the most preferred object. Then: $b^x_1=b^x_2=\left(0,\frac{1}{2},1\right)$, $b^x_3=(0,1,1)$, and $B^{x}=\left(0,0,0,1,\frac{1}{2},\frac{1}{2},1,1,1\right).\hfill\qed$

\end{example}

To compare two assignments, $x$ and $y$, we first compare the highest probability with which an agent receives her least preferred object, that is, the first entry of vectors $B^x$ and $B^y$. If the probability is the same under both assignments, we compare the second-highest probability with which an agent receives her least preferred object. If all probabilities associated with the least preferred object of each agent are equal, we conduct the same comparison for the probabilities with which agents receive the least and second-least preferred objects. If at some point, an entry of $B^x$ is lower than the corresponding entry of $B^y$, we say that $x$ Rawlsian-dominates $y$. Formally, we compare the vectors $B^x$ and $B^y$ lexicographically. This is the idea of the following definition. 

\begin{definition}
Given two assignments $x$ and $y$, $x$ \textbf{Rawlsian-dominates} $y$ ($x$ R-dominates  $y$) if there is $j\in \{1,\ldots,n^2\}$ such that $B^{x}_j<B^y_j$, and for all $i<j$, $B^{x}_i=B^{y}_i$.
\end{definition}

The Rawlsian comparison is lexicographic, but it differs from the standard leximin order applied to the collection of all numbers $\{b_i^x(k):i\in I,\ k=1,\ldots,n\}$. The leximin order compares two vectors by first arranging the entries of each vector in non-decreasing order, then comparing the smallest entries. If the smallest entry is larger in one of the two vectors, then that vector lexicographically dominates the other; if the smallest entries coincide, it compares the second-smallest entries, and so on. In contrast, our Rawlsian comparison first groups the entries $\{b_i^x(k)\}$ by rank cutoff. We compare all agents' probabilities of receiving their least preferred object before considering the probabilities of receiving one of their two least preferred objects, and so on. Within each rank cutoff, agents'
probabilities are ordered from highest to lowest. Thus, the ordering first
prioritizes the rank cutoff and then, within each cutoff, the worst-off agents.

This construction is also different from the characterization of the PS rule in \cite{bogomolnaia2015random,bogomolnaia2018most}. That characterization is based on cumulative probabilities from the top of the preference ranking and lexicographically maximizes the resulting list of agent-rank cumulative probabilities (i.e., maximizes the smallest entry, then the second-smallest, etc.). In our notation, PS is the assignment that lexicographically maximizes the collection of numbers $\{1-b_i^x(k):i\in I,\ k=1,\ldots,n\}$ over all feasible assignments in $x\in X$. By contrast, our Rawlsian comparison uses cumulative probabilities from the bottom of the ranking and lexicographically minimizes them cutoff by cutoff.

The following is the key concept of our analysis.  

\begin{definition}
\label{def:rawlsian}
An assignment $x$ is \textbf{Rawlsian} if it is not Rawlsian-dominated (R-dominated) by any other assignment. \end{definition}


Example \ref{ex2} shows a Rawlsian assignment for the problem in Example \ref{ex1}.

\begin{example}\label{ex2}

In the problem of Example \ref{ex1}, consider the assignment:

\begin{equation*}
y = 
\begin{pmatrix}
1 & 0 & 0 \\
0 & 0  & 1  \\
0 &  1 & 0
\end{pmatrix},
\end{equation*}
\noindent
to which is associated the vector $B^y=(1,0,0,1,0,0,1,1,1)$. Clearly, $x$ R-dominates $y$. Moreover, we claim that  $x$ is a Rawlsian assignment. First, note that if another assignment $w$ is such that $w_{3c}<1$, then we should have $w_{1c}>0$ or $w_{2c}>0$. This new assignment would be R-dominated by $x$. So, if an assignment, say $z$, R-dominates $x$, it should satisfy $z_{1b}<\frac{1}{2}$ or $z_{2b}<\frac{1}{2}$. In the first case, $z_{2b}>\frac{1}{2}$ should hold, and $z$ would be R-dominated by $x$. In the second case, $z_{1b}>\frac{1}{2}$ should hold, and $z$ would be R-dominated by $x$. As a result, no assignment R-dominates $x$. \hfill$\qed$
\end{example}

To better understand the intuition behind the Rawlsian criterion, we will find a Rawlsian assignment for some particular preference profiles. First, if all agents' top choices are different, a Rawlsian assignment assigns each agent their first choice with probability one. Alternatively, if all agents' preferences are the same, each agent is assigned each object with probability $\frac{1}{n}$. Lastly, an agent is assigned her least preferred object if, and only if, all agents have the same least preferred object.

Given an assignment $x$, we defined the vector $B^x$ by considering, in the first place, the probabilities associated with the least preferred object of each agent, that is, the object of rank $n$. Next, we consider the two least preferred objects, the objects of rank $n-1$ or higher. And so on, and so forth. The order $(n,n-1,\ldots,2)$ comes from our definition of the worst-off agents.\footnote{Because the probability with which each agent is assigned to her $n$ least preferred objects is always one, we do not include $1$ in this order.} However, other orders are also possible. For example, we might consider the other ``extreme'' of the Rawlsian assignment: first consider the probabilities with which each agent is assigned to her $(n-1)$ least preferred objects (rank $2$ or higher), then her $(n-2)$ least preferred objects (rank $3$ or higher), etc. The associated order would be $(2,3,\ldots,n)$.

In Appendix~\ref{family}, we consider a family of assignments obtained by changing the order in which the cumulative-rank levels are considered. For example, the Rawlsian assignment corresponds to the order $(n,n-1,\ldots, 2)$. We show that many of the results in Section \ref{results} continue to hold, including uniqueness, sd-efficiency, and anonymity. 
Additionally, a modified version of the algorithm in Section~\ref{algorithm} can be used to compute the outcomes selected by each order in this family. We also show in the Online Appendix, Section \ref{frac_Boston}, that the other ``extreme'' of the Rawlsian assignment in this family, the one associated with the order $(2,3,\ldots,n)$, differs from the fractional Boston assignment introduced by \cite{chen2023fractional}, despite the similarities between the two assignments.




\section{Results}\label{results}
In this section, we study the theoretical properties of Rawlsian assignments. We show that every profile admits a unique Rawlsian assignment, and that this assignment is sd-efficient and anonymous. Furthermore, we explore fundamental incompatibilities between the egalitarian objective of minimizing the maximum rank and properties such as envy-freeness and strategyproofness. In response, we identify several properties that ensure robustness against certain types of manipulations, such as consistency, non-bossiness, and group lower invariance.

\subsection{Uniqueness, efficiency \& anonymity}

Every problem admits a Rawlsian assignment.
In principle, there might exist multiple Rawlsian assignments but, in fact, this is never the case. Suppose otherwise, and let $x$ and $y$ be two Rawlsian assignments. Then, the assignment $\frac{1}{2}x+\frac{1}{2}y$ R-dominates $x$ and $y$. Thus, as the following proposition states, every problem admits a unique Rawlsian assignment. It is worth noting that randomization is key for the uniqueness. If we restrict the analysis to deterministic assignments, there are problems with multiple Rawlsian assignments.

\begin{proposition}\label{prop1}
Each problem has a unique Rawlsian assignment.
\end{proposition}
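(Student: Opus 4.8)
The plan is to establish uniqueness by contradiction, combining the existence argument (already sketched in the footnote preceding the statement) with a convexity argument. Existence follows because each $B^k$ has finitely many coordinates, each bounded below by zero, and R-domination strictly decreases the vector in the lexicographic order, so the descending chain must terminate at a Rawlsian assignment. For uniqueness, I would suppose toward a contradiction that $x$ and $y$ are two \emph{distinct} Rawlsian assignments, and show that their average $z=\tfrac12 x+\tfrac12 y$ R-dominates at least one of them, contradicting that both are undominated.

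The key step is to understand how the vector $B^z$ relates to $B^x$ and $B^y$. First I would observe that the map $x\mapsto b^x_i(k)$ is \emph{linear} in the assignment: since $b^x_i(k)=\sum_{o}\mathds{1}\{r_{io}\ge k\}x_{io}$, we have $b^z_i(k)=\tfrac12 b^x_i(k)+\tfrac12 b^y_i(k)$ for every agent $i$ and every rank $k$. The subtlety is that $B^x$ is obtained by sorting the entries $(b^x_i(k))_i$ within each block of $n$ coordinates in non-increasing order, and sorting does not commute with averaging. The tool I would use to handle this is the standard fact that the sorted vector of a convex combination is majorized by (or more precisely, dominated coordinatewise after summing) the convex combination of the sorted vectors; concretely, for any $m$ and any block, the $m$ largest entries of $(b^z_i(k))_i$ sum to at most $\tfrac12$ times the sum of the $m$ largest entries of $(b^x_i(k))_i$ plus $\tfrac12$ times the sum of the $m$ largest entries of $(b^y_i(k))_i$. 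This gives, block by block and prefix by prefix, control of the partial sums of $B^z$ in terms of those of $B^x$ and $B^y$.

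With that in hand, I would run the lexicographic comparison. Since $x$ and $y$ are distinct, let $j$ be the first coordinate at which $B^x$ and $B^y$ differ; without loss of generality $B^x_j<B^y_j$, so $B^x$ R-dominates $y$ at position $j$, contradicting that $y$ is Rawlsian. The remaining possibility is that $B^x=B^y$ as vectors even though $x\neq y$ as matrices; here I would argue that $z$ strictly R-dominates both. In this case the within-block partial sums of $B^z$ are bounded above by the common partial sums of $B^x=B^y$, with equality throughout the lexicographic comparison forcing, at the first coordinate where $x$ and $y$ genuinely disagree in the underlying unsorted data, a strict inequality in a partial sum of $B^z$ — hence a strictly smaller entry of $B^z$ at some coordinate preceded by equalities — so $z$ R-dominates $x$, again a contradiction.

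The main obstacle, and the step I would spend the most care on, is the interaction between the per-block sorting and the averaging: R-domination is a lexicographic comparison of \emph{sorted} vectors, so I cannot simply average coordinatewise. The cleanest way around this is to phrase the argument entirely in terms of ordered partial sums (the ``Hardy–Littlewood–Pólya'' style inequality for sorted vectors), translate a strict coordinatewise difference between the unsorted profiles into a strict inequality of some prefix sum of the sorted average, and then verify that the first such strict inequality, combined with equality of all earlier prefixes, yields a first coordinate of $B^z$ that is strictly below the corresponding coordinate of $B^x$. Handling the tie case $B^x=B^y$ carefully — ensuring $x\neq y$ as matrices really does produce a strict drop somewhere rather than leaving $z$ tied with $x$ and $y$ — is where the argument is most delicate.
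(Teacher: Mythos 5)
Your proposal is correct and follows essentially the same route as the paper: both reduce uniqueness to showing that the midpoint $\tfrac{1}{2}(x+y)$ of two distinct Rawlsian assignments R-dominates one of them, working block by block from the least preferred rank, using linearity of $x\mapsto b^x_i(k)$ and the fact that in all earlier blocks the unsorted profiles (and hence the average) coincide. Your partial-sum (Hardy--Littlewood--P\'olya) treatment of the sorting-versus-averaging step is, if anything, a more complete way to close the argument than the paper's comparison of only the largest entries of the first differing column.
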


Proposition \ref{prop1} allows us to define the Rawlsian rule as the rule that assigns to each problem its unique Rawlsian assignment. 

A minimum requirement of fairness is anonymity: agents' allocations cannot depend on their names, that is, if agents’ names are permuted, their allocations should be permuted in the same way. As we state in the next proposition, the Rawlsian rule satisfies this property.

\begin{proposition}\label{prop2}
The Rawlsian rule satisfies anonymity.
\end{proposition}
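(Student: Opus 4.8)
The plan is to show that permuting agents' names permutes the Rawlsian assignment in the corresponding way, which amounts to proving that the Rawlsian assignment commutes with relabeling of agents. The key observation is that the vector $B^x$ that drives the R-domination order is \emph{invariant} under permutations of the agents, because each block of $B^x$ is obtained by sorting the relevant cumulative probabilities $(b^x_1(k),\ldots,b^x_n(k))$ in non-increasing order, and sorting discards the identity of the agent to whom each probability belongs.

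Concretely, fix a problem $\succeq=(\succeq_i)_{i\in I}$ and a permutation $\pi:I\to I$, and let $\succeq^\pi$ denote the permuted profile with $\succeq^\pi_i=\succeq_{\pi(i)}$. Let $x$ be the (unique, by Proposition~\ref{prop1}) Rawlsian assignment for $\succeq$. Define the permuted assignment $x^\pi$ by $x^\pi_{io}=x_{\pi(i)o}$, i.e. give to agent $i$ under $\succeq^\pi$ the allocation that agent $\pi(i)$ received under $\succeq$. First I would check that $x^\pi$ is a feasible assignment (its row sums and column sums are still $1$, since permuting rows preserves bistochasticity) and that it uses the ranks of $\succeq^\pi$ consistently: since $\succeq^\pi_i=\succeq_{\pi(i)}$, the rank of object $o$ for agent $i$ in the permuted problem equals its rank for agent $\pi(i)$ in the original problem, so $b^{x^\pi}_i(k)=b^{x}_{\pi(i)}(k)$ for every $k$. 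Hence the multiset $\{b^{x^\pi}_1(k),\ldots,b^{x^\pi}_n(k)\}$ equals the multiset $\{b^{x}_1(k),\ldots,b^{x}_n(k)\}$, and after sorting each block in non-increasing order we obtain $B^{x^\pi}=B^{x}$.

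The main step is then to transfer the Rawlsian property from $x$ to $x^\pi$. The crux is that the map $y\mapsto y^{\pi^{-1}}$ (permuting rows back) is a bijection between assignments for $\succeq^\pi$ and assignments for $\succeq$, and by the same multiset-invariance argument it satisfies $B^{y}=B^{y^{\pi^{-1}}}$, where the left vector is computed against $\succeq^\pi$ and the right against $\succeq$. Suppose, for contradiction, that some assignment $z$ R-dominates $x^\pi$ in the problem $\succeq^\pi$. Applying the inverse permutation gives an assignment $z^{\pi^{-1}}$ for $\succeq$ with $B^{z^{\pi^{-1}}}=B^{z}$ and $B^{(x^\pi)^{\pi^{-1}}}=B^{x}$; since $(x^\pi)^{\pi^{-1}}=x$, the strict lexicographic inequality $B^z<_{\text{lex}}B^{x^\pi}=B^{x}$ would say that $z^{\pi^{-1}}$ R-dominates $x$ in the original problem, contradicting that $x$ is Rawlsian. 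Therefore $x^\pi$ is Rawlsian for $\succeq^\pi$, and by uniqueness it equals the Rawlsian rule applied to $\succeq^\pi$. Reading off coordinates, $\phi_i(\succeq^\pi)=x^\pi_i=x_{\pi(i)}=\phi_{\pi(i)}(\succeq)$, which is exactly the anonymity identity.

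I do not expect any serious obstacle: the argument is essentially bookkeeping about how permuting rows interacts with the block-sorting that defines $B^x$. The one point that deserves care is being explicit that the comparison vector $B^x$ depends only on the \emph{multiset} of agent-level cumulative profiles and not on their labeling, so that permuting agents leaves $B^x$ literally unchanged; once that invariance is stated cleanly, both feasibility and the preservation of the (non-)domination relation follow mechanically, and uniqueness from Proposition~\ref{prop1} closes the argument.
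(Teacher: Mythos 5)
Your proposal is correct and follows essentially the same route as the paper's proof: both hinge on the observation that the block-sorted vector $B^x$ ignores agent identities and is therefore invariant under relabeling, so any R-domination of the permuted assignment in the permuted problem would pull back to an R-domination of the original Rawlsian assignment. Your write-up is somewhat more explicit than the paper's (spelling out the bijection between assignments of the two problems and closing with uniqueness from Proposition~\ref{prop1}), but the underlying argument is the same.
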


We now analyze the efficiency of the Rawlsian rule. In contrast to many economic environments where there is a tension between egalitarianism and efficiency, these concepts are compatible in our framework. 
Given two assignments, $x$ and $y$, we define the following order to compare the associated vectors $B^x$ and $B^y$.

\begin{definition}
    Let $x$ and $y$ be two assignments. We say that $B^x \triangleleft B^y$ if $b_i^x(k) \leq b_i^y(k)$ for every $i,k=1,\ldots,n$, and $b_i^x(k) < b_i^y(k)$ for some $i,k=1,\ldots,n$. 
\end{definition}

That is, every entry of the vector $B^x$ is lower than or equal to the corresponding entry of the vector $B^y$. 

The following observation shows that stochastic dominance can be rewritten in terms of the lower-tail cumulative probabilities $b_i^x(k)$, which follows immediately from the definitions.

\begin{observation}\label{obs1}
Let $x$ and $y$ be two assignments. Then $x$ stochastically dominates $y$
if and only if $B^x \triangleleft B^y$.
\end{observation}

\begin{proof}
For each agent $i$ and each rank cutoff $k\geq 2$,
\[
\sum_{o\in O:\, r_{io}\leq k-1} x_{io}
=
1-b_i^x(k).
\]
Thus, $x_i$ sd-dominates $y_i$ if and only if
\[
b_i^x(k)\leq b_i^y(k)
\qquad
\text{for every } k=2,\ldots,n.
\]
If $x\neq y$, at least one of these inequalities must be strict. Hence
$x$ stochastically dominates $y$ if and only if $B^x \triangleleft B^y$.
\end{proof}

Consider an assignment $y$ that is not sd-efficient. Then there exists another assignment $x$ that sd-dominates $y$. By Observation \ref{obs1}, $B^x\triangleleft B^y$, and therefore $x$ R-dominates $y$. Hence $y$ cannot be the Rawlsian assignment.
This proves the following proposition.

\begin{proposition}\label{sd-eff}
The Rawlsian rule is sd-efficient.
\end{proposition}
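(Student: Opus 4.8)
The plan is to prove the contrapositive via the improving-cycle characterization: if a Rawlsian assignment $x$ were not sd-efficient, then by Lemma 3 of \cite{bogomolnaia2001new} it would admit an improving cycle, and I would show that exploiting this cycle produces an assignment that R-dominates $x$, contradicting Rawlsianness. So suppose $(1,o_1,\ldots,K,o_K)$ is an improving cycle in $x$. For a small $\varepsilon>0$, form the new assignment $y$ by setting $y_{k o_k}=x_{k o_k}-\varepsilon$ and $y_{k+1\, o_k}=x_{k+1\, o_k}+\varepsilon$ (indices mod $K$), leaving all other entries unchanged. This $y$ is a genuine assignment (column and row sums are preserved) for $\varepsilon$ small enough that all entries stay in $[0,1]$, and because the cycle is improving, $y_k \succeq_k^{sd} x_k$ for every agent $k$ in the cycle, with the sd-dominance being strict in the relevant sense.

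The heart of the argument is translating the sd-improvement of each agent's allocation into a weak, and somewhere strict, decrease of the vector $B^x$ in lexicographic order. The key observation I would establish is the following monotonicity lemma: if $y_i \succeq_i^{sd} x_i$, then $b^y_i(k)\le b^x_i(k)$ for every rank $k$. This is essentially a restatement of stochastic dominance using the complementary (bottom-up) cumulants $b_i(k)=\sum_{o:\,r_{io}\ge k} x_{io}$ rather than the usual top-down cumulants, since $b^x_i(k)=1-\sum_{o:\,r_{io}\le k-1}x_{io}$, and sd-dominance says exactly that the top-down cumulants weakly increase. Thus the sd-improvement of every cycle agent gives $b^y_i(k)\le b^x_i(k)$ for all $i$ and all $k$, with at least one strict inequality (for some agent and some rank), since the cycle strictly improves at least one agent at some rank. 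Agents outside the cycle are unchanged, so their vectors are untouched.

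I then need to argue that this entrywise-weak, somewhere-strict improvement of the family $(b^x_i)_{i\in I}$ to $(b^y_i)_{i\in I}$ forces $B^y$ to be lexicographically strictly below $B^x$. Because each block of $B^x$ is the non-increasingly sorted list of the values $(b^x_i(n-k+1))_{i}$, and sorting a vector in non-increasing order is a monotone operation (if one vector dominates another entrywise, its sorted version dominates entrywise), each block of $B^y$ is entrywise $\le$ the corresponding block of $B^x$. Hence $B^y_j\le B^x_j$ for all $j$. To get strict lexicographic domination I would locate the first block in which some $b_i$-value strictly decreases; the sorted block of $B^y$ is then strictly below the sorted block of $B^x$ at its first differing coordinate, while all earlier blocks are identical. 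This yields $B^y$ R-dominating $B^x$, contradicting that $x$ is Rawlsian.

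The step I expect to be the main obstacle is the sorting argument: I must be careful that entrywise dominance before sorting really does give entrywise dominance after sorting \emph{and} that a strict improvement at some coordinate survives sorting to produce a strictly lower first differing coordinate in the relevant block. The subtlety is that sorting can reshuffle which agent sits in which position, so a strict drop in one agent's $b_i(k)$ need not appear at the ``same'' sorted coordinate; one must instead track the first coordinate (within the first affected block) at which the sorted vectors differ and verify the inequality there goes the right way. I would handle this with the standard fact that for the non-increasing rearrangement $r^\downarrow$, componentwise $u\le v$ implies $u^\downarrow \le v^\downarrow$ componentwise, together with the observation that if $u\le v$ componentwise with $u\neq v$ then $u^\downarrow \neq v^\downarrow$, so their first point of disagreement has $u^\downarrow$ strictly smaller.
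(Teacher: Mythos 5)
Your proposal is correct and follows essentially the same route as the paper's proof: invoke the improving-cycle characterization of sd-efficiency, implement the cycle to obtain $y$, observe that sd-improvement is equivalent to a componentwise weak decrease (strict somewhere) of the bottom-up cumulants $b_i(\cdot)$, and conclude that $B^y$ lexicographically dominates $B^x$ after sorting. If anything, you are more explicit than the paper about the rearrangement step (that componentwise dominance is preserved under non-increasing sorting and that strictness survives at the first differing coordinate), which the paper treats as immediate.
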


The Birkhoff-von Neumann theorem states that any assignment can be represented as a lottery over the set of deterministic assignments. The Rawlsian assignment is sd-efficient; thus, \textbf{every} such representation will only contain ex-post efficient deterministic assignments \citep{aziz2015ex}. Then, we have the following corollary.

\begin{corollary}
Every representation of the Rawlsian assignment as a lottery over the set of deterministic assignments uses only ex-post efficient assignments with minimal maximum rank.
\end{corollary}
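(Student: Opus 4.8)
The plan is to combine the two facts already established in the excerpt: the Rawlsian assignment is sd-efficient (Proposition \ref{sd-eff}), and by the cited result of \cite{aziz2015ex}, every representation of an sd-efficient assignment as a lottery over deterministic assignments uses only ex-post efficient deterministic assignments. The only new content to prove is therefore the claim about the \emph{maximum rank}: that every deterministic assignment appearing (with positive weight) in \emph{any} Birkhoff--von Neumann decomposition of the Rawlsian assignment $x^R$ minimizes the maximum rank among all deterministic assignments.

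I would argue by contradiction. Let $r^* $ denote the minimum over all deterministic assignments of the maximum rank. First I would observe that the value of $B^{x^R}_1$ — the largest probability with which any agent receives an object of rank $n$, and more generally the leading nonzero structure of the block-vector $B^{x^R}$ — pins down how much weight $x^R$ can place on high-rank objects. Concretely, suppose toward a contradiction that some decomposition $x^R = \sum_\ell \lambda_\ell d^\ell$ (with $\lambda_\ell > 0$, $\sum_\ell \lambda_\ell = 1$) contains a deterministic assignment $d^{\ell_0}$ whose maximum rank exceeds $r^*$. Then $d^{\ell_0}$ assigns some agent $i$ an object $o$ with $r_{io} = m > r^*$. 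The plan is to use this single bad deterministic assignment to show that $x^R$ itself places ``too much'' cumulative weight on the high-rank objects — i.e., to exhibit a competing assignment $y$ whose block-vector $B^y$ is lexicographically smaller than $B^{x^R}$, contradicting that $x^R$ is Rawlsian (Definition \ref{def:rawlsian}).

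The key step is to manufacture that competing assignment $y$. Since $r^*$ is achievable, fix a deterministic assignment $d^*$ with maximum rank exactly $r^*$. I would consider perturbing the decomposition by shifting a small amount of weight $\varepsilon$ from the offending $d^{\ell_0}$ toward $d^*$, i.e. $y = x^R - \varepsilon d^{\ell_0} + \varepsilon d^*$. One must check $y$ remains a valid bistochastic matrix (it does, as a convex combination of deterministic assignments for small $\varepsilon$), and then compare $B^y$ with $B^{x^R}$ block by block, starting from rank $n$ down to rank $r^*+1$. The crucial observation is that $d^*$ contributes nothing to any $b_i(k)$ with $k > r^*$, whereas $d^{\ell_0}$ contributes to $b_i(k)$ for $k \le m$; hence for the highest rank at which $x^R$ and the minimizing assignments differ, the swap strictly lowers some coordinate of the block-vector while weakly lowering the earlier ones, producing lexicographic domination $B^y <_{\text{lex}} B^{x^R}$.

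The main obstacle I anticipate is the bookkeeping in that block-by-block comparison: within each block of $n$ coordinates the entries are sorted in non-increasing order, so a perturbation that lowers one agent's $b_i(k)$ does not simply lower the corresponding coordinate of $B^{x^R}$ — one must track how the sorted order can shift, and argue that the \emph{first} block (highest rank $k$) where $x^R$ carries positive weight on rank-$k$ objects not forced by every minimizer is exactly where the strict decrease occurs. I would handle this by identifying the largest rank $m^\dagger > r^*$ such that $B^{x^R}$ has a positive entry in the corresponding block; the offending $d^{\ell_0}$ guarantees $m^\dagger$ exists, and rerouting weight toward a maximum-rank-$r^*$ assignment strictly reduces the sum of high-rank probabilities, which (after re-sorting) forces a strict lexicographic improvement at or before that block. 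This contradiction establishes that no deterministic assignment of maximum rank exceeding $r^*$ can appear in any decomposition, which together with ex-post efficiency from \cite{aziz2015ex} yields the corollary.
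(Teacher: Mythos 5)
Your argument is correct. Note that the paper states this corollary without an explicit proof: the ex-post efficiency part is attributed to \cite{aziz2015ex}, and the maximum-rank part is left implicit, so there is no official proof to match yours against. Your perturbation of the Birkhoff--von Neumann decomposition does go through: writing $m_0$ for the maximum rank of the offending deterministic assignment $d^{\ell_0}$, the blocks of $B^{y}$ for ranks $n,\ldots,m_0+1$ coincide exactly with those of $B^{x^R}$ (since $d^{\ell_0}$ and $d^*$ both contribute nothing there), while in the rank-$m_0$ block every agent's cumulative probability weakly decreases and agent $i$'s strictly decreases; componentwise domination with one strict inequality survives the non-increasing sorting because the block sums differ, so the first coordinate at which $B^{y}$ and $B^{x^R}$ disagree lies in the rank-$m_0$ block and satisfies $B^{y}_j<B^{x^R}_j$, contradicting Definition~\ref{def:rawlsian}. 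Two small imprecisions: the strict improvement occurs at the rank-$m_0$ block, which sits at or \emph{after} the block of your $m^\dagger$ within the vector $B$, not ``at or before''; and the detour through $m^\dagger$ is unnecessary. A lighter route reaches the same conclusion with no decomposition and no sorting bookkeeping: a minimizer $d^*$ has $B^{d^*}_j=0$ for the first $(n-r^*)n$ coordinates, so the Rawlsian assignment, not being R-dominated by $d^*$, must also have $B^{x^R}_j=0$ there, i.e.\ $x^R_{io}=0$ whenever $r_{io}>r^*$; a deterministic assignment of maximum rank exceeding $r^*$ appearing with positive weight in any representation would force some such $x^R_{io}>0$, which is impossible, and combined with the definition of $r^*$ this pins the maximum rank of every assignment in the support at exactly $r^*$.
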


There is another concept of efficiency, called ``rank efficiency'', which was proposed by \cite{featherstone2020rank}. In the Online Appendix (Section \ref{rank}), we give its formal definition and show that there is no relation between Rawlsian and rank efficient assignments.

In the Online Appendix, Section \ref{cardinal}, we discuss how the Rawlsian assignment relates to cardinal representations of agents' ordinal preferences. We show that the sd-efficiency of the Rawlsian rule implies that there exist cardinal utilities, consistent with the agents' ordinal preferences, for which the Rawlsian rule maximizes the expected utility of the worst-off agent. However, the converse does not hold. An assignment that maximizes the expected utility of the worst-off agent for some cardinal representation need not be Rawlsian, because the cardinal maximin objective depends on the particular utility gaps between the different ranks, whereas the Rawlsian assignment depends only on ordinal information. 
Moreover, we show that no rule satisfying equal treatment of equals maximizes the expected utility of the worst-off agent \textit{for all} cardinal utilities that are consistent with the agents' ordinal preferences.

\subsection{Envy-freeness}
\label{subsec:envy_free}
We show that envy-freeness is incompatible with the egalitarian objective of minimizing the maximum rank, i.e., the worst rank to which any agent is assigned with positive probability. 

\begin{proposition}
\label{prop:sd_envyImposs}
    For $n\geq 3$, no rule satisfying sd-envy-freeness minimizes the maximum rank.
\end{proposition}
\begin{proof}
    Consider the following profile $\succeq$:
    \[
\begin{array}{c|ccc}
\succeq_1 & a & b & c \\ 
\succeq_2 & b & a & c \\ 
\succeq_3 & a & c & b \\ 
\end{array}.
\]
Any rule $\phi$ that minimizes the maximum rank will satisfy $\phi_{3c}(\succeq) = 1$. This implies that $\phi_{3a}(\succeq) = 0$. As a result, either $\phi_{1a}(\succeq) > 0$ or $\phi_{2a}(\succeq) > 0$ must hold. Assume, without loss of generality, that $\phi_{1a}(\succeq) > 0$. Then, the allocation $\phi_3(\succeq)$ of agent 3 does not stochastically dominate the allocation $\phi_1(\succeq)$ of agent 1. Hence, $\phi$ violates sd-envy-freeness.
\end{proof}

It follows immediately that the Rawlsian rule violates sd-envy-freeness. Note that the Rawlsian rule also violates weak sd-envy-freeness for the profile in the proof of Proposition \ref{prop:sd_envyImposs}.\footnote{In fact, the profile $\succeq$ in Proposition \ref{prop:sd_envyImposs} can also be used to prove that, for $n\geq 3$, no rule satisfying \textit{weak} sd-envy-freeness\textit{ and sd-pairwise-efficiency} minimizes the maximum rank. An assignment $x$ is sd-pairwise-efficient if no pair of agents $i,j\in I$ exists such that $x_i \succeq^{sd}_j x_j, x_j \succeq^{sd}_i x_i$, and $x_i\neq x_j$. At profile $\succeq$ in the proof of Proposition~\ref{prop:sd_envyImposs}, sd-pairwise-efficiency would require that $\phi_{1a}(\succeq) =1$ and $\phi_{2b}(\succeq) =1$. Hence, $\phi_1(\succeq) \succeq_3^{sd} \phi_3(\succeq)$, violating weak sd-envy-freeness.}

Several incompatibility results between egalitarianism, envy-freeness and efficiency exist in the literature. \citet[][Theorem 1]{duddy2025egalitarian}, for example, showed that there is no rule for $n\geq4$ satisfying sd-efficiency, sd-envy-freeness and even-handedness. In the context of fair division, \cite{brams2017maximin}, for example, propose necessary and sufficient conditions for the existence of an envy-free allocation of indivisible goods among two players that minimizes the worst rank of the received goods.

\subsection{Strategyproofness and max-rank minimization}
Because the Rawlsian rule is sd-efficient and satisfies equal treatment of equals (as a consequence of being anonymous), it follows immediately from \cite{bogomolnaia2001new}'s impossibility result that the Rawlsian rule violates sd-strategyproofness when $n\geq 4$. 

In fact, we show that even weak notions of sd-strategyproofness are fundamentally incompatible with the egalitarian objective of minimizing the maximum rank. To this end, consider the following relaxation of sd-strategyproofness which was first introduced by \cite{hashimoto2014two} as weak invariance, and used by \cite{mennle2021partial} in their characterization of sd-strategyproofness.

A rule is \textbf{upper invariant} if swapping the order of two consecutively ranked objects in the preferences does not affect the assignment probabilities of more preferred objects. 
\begin{definition}
    \label{def:upper_invar}
     A rule $\phi$ is \textbf{upper invariant} if, for every agent $i\in I$, all preference profiles $(\succeq_i,\succeq_{-i})$, and all misreports $\succeq_i'$ that are obtained by swapping two consecutively ranked objects $a$ and $b$ at $\succeq_i$, i.e., $a\succeq_i b$ but $b\succeq_i'a$, it holds that $\phi_{ij}(\succeq_i,\succeq_{-i})=\phi_{ij}(\succeq'_i,\succeq_{-i})$ for all objects $j$ for which $j\succeq_i a$.
\end{definition}

The following impossibility result shows that no rule that minimizes the maximum rank can satisfy upper invariance. Proposition \ref{prop:SPimposs} strengthens the impossibility result by \citet[][Proposition 5]{afacan2024rawlsian}. They showed that in a more general model, where the number of agents and objects can differ, and which allows for an outside option, there is no \textit{deterministic} rule satisfying \textit{strategyproofness} that minimizes the maximum rank when $n\geq 4$. 

\begin{proposition}
\label{prop:SPimposs}
    For $n\geq 3$, no rule satisfying upper invariance minimizes the maximum rank.
\end{proposition}
\begin{proof}
    Consider the profile $\succeq$.
    \[
\begin{array}{c|ccc}
\succeq_1 & a & b & c  \\ 
\succeq_2 & a & b & c  \\ 
\succeq_3 & a & c & b \\ 
\end{array}
\]
Any rule $\phi$ that minimizes the maximum rank, will enforce that $\phi_{3c}(\succeq) = 1$. Hence, $\phi_{3a}(\succeq)=0$. Now, consider the alternative profile $(\succeq'_3, \succeq_{-3})$ in which agent 3 swaps the positions of objects $b$ and~$c$.
    \[
\begin{array}{c|ccc}
\succeq_1 & a & b & c  \\ 
\succeq_2 & a & b & c  \\ 
\succeq'_3 & a & b & c \\ 
\end{array}
\]
By upper invariance, it must still hold that $\phi_{3a}(\succeq'_3, \succeq_{-3}) = 0$. Hence, either $\phi_{1a}(\succeq'_3, \succeq_{-3}) > 0$ or $\phi_{2a}(\succeq'_3, \succeq_{-3}) > 0$. Assume, without loss of generality, that $\phi_{1a}(\succeq'_3, \succeq_{-3}) > 0$. Consider the alternative profile $(\succeq''_1, \succeq_2, \succeq'_3)$ in which agent 1 swaps the positions of objects $b$ and $c$.
\[
\begin{array}{c|ccc}
\succeq''_1 & a & c & b  \\ 
\succeq_2 & a & b & c  \\ 
\succeq'_3 & a & b & c \\ 
\end{array}
\]
By upper invariance, it must hold that $\phi_{1a}(\succeq''_1, \succeq_2, \succeq'_3) > 0$. This contradicts that $\phi$ minimizes the maximum rank, as every rule that minimizes the maximum rank has $\phi_{1c}(\succeq''_1, \succeq_2, \succeq'_3) = 1$.

In general, for $n$ agents, we can construct a similar example where $n-1$ agents submit the same preferences, and one agent ranks the first $n-2$ objects identically to the other agents, while swapping the order of the last two objects.
\end{proof}


It follows immediately that the Rawlsian rule violates upper invariance.


Additionally, we evaluated the Rawlsian rule on other relaxations of sd-strategyproofness that are not logically implied by upper invariance. In the Online Appendix, Section \ref{weak_sdSP}, we provide an example in which the Rawlsian rule violates weak sd-strategyproofness. 
Lastly, in Appendix~\ref{SP_axioms_Mennle_Seuken} we also show that the Rawlsian rule satisfies the lower invariance axiom in the decomposition of sd-strategyproofness by \cite{mennle2021partial}, but violates the axiom of swap monotonicity.

\subsection{Robustness properties of the Rawlsian rule}\label{robust}

Although all rules that minimize the maximum rank are manipulable, the Rawlsian rule satisfies several robustness and invariance properties that distinguish it from other rules with the same egalitarian objective. We first show that it satisfies group lower invariance: even a group of agents cannot change \textit{any} agent's assignment probabilities for lower-ranked objects by reshuffling only the order of their higher-ranked objects. We then show that the Rawlsian rule also satisfies consistency and non-bossiness.

\subsubsection{Group lower invariance}

Decision-makers interested in minimizing the maximum rank naturally care about the assignment probabilities of lower-ranked objects. Consider the following property: whenever a coalition of agents reshuffles only their top-$k$ objects, the assignment probabilities of objects ranked strictly below \(k\) remain unchanged for all agents.

To formalize what it means for a coalition to reshuffle their top-$k$ objects, we introduce the following notion. 

\begin{definition}
    Given a $k\in \{1,\dots,n\}$, two profiles $\succeq$ and $\succeq'$ are \textbf{top-$\mathbf{k}$-shuffled} if:
    \begin{itemize}
        \item for every agent $i\in I$, and for all objects $o, o' \in O$ for which $r_{io}(\succeq_i) \leq k$ and $r_{io'}(\succeq_i) > k$, it holds that $o \succeq_i o'$ if and only if $o \succeq_i'o'$,
        \item for all agents $i\in I$, and for all objects $o, o' \in O$ for which $r_{io}(\succeq_i) > k$ and $r_{io'}(\succeq_i) > k$, it holds that $o \succeq_i o'$ if and only if $o \succeq_i'o'$.    
    \end{itemize}
\end{definition}

Note that two top-$k$-shuffled profiles may differ in the ranking of the objects placed in positions $1, \ldots, k$.

We now define \textbf{group lower invariance}. It requires that, if agents only reshuffle objects within their top-$k$ positions, then the assignment probabilities of objects ranked worse than position $k$ remain unchanged for all agents.

\begin{definition}
A rule $\phi$ is \textbf{group lower invariant} if, for every $k\in\{1,\ldots,n\}$ and every pair of top-$k$-shuffled profiles $\succeq$ and $\succeq'$, it holds that for every agent $i\in I$ and every object $o\in O$ with $r_{io}(\succeq_i)>k$,
\[
\phi_{io}(\succeq)=\phi_{io}(\succeq').
\]
\end{definition}

Group lower invariance is substantially stronger than lower invariance by \cite{mennle2021partial} (Definition \ref{def:lower_invar} in Appendix \ref{SP_axioms_Mennle_Seuken}), as it allows for joint manipulations by groups of agents, does not restrict attention to adjacent swaps, and requires the assignment probabilities of objects ranked below the manipulated positions to remain unchanged \textit{for all agents}.

We have the following result.

\begin{proposition}
\label{prop:group_lower_invar}
    The Rawlsian rule satisfies group lower invariance.
\end{proposition}

The proof is included in Section \ref{app:group_lower_invar}. 
This result shows that, under the Rawlsian rule, an agent can change the assignment probabilities of lower-ranked objects only by misreporting the relative order of those lower-ranked objects. In Appendix \ref{app:group_lower_invar_other_rules}, we show that PS and RSD, as well as the positive and prudent equality rules by \cite{duddy2025egalitarian} all violate group lower invariance.

\subsubsection{Consistency and non-bossiness}


We next establish two robustness properties of the Rawlsian rule: consistency and non-bossiness. These properties are important because they capture forms of robustness that remain meaningful even when full strategyproofness is impossible. Consistency ensures that the rule behaves stably when an agent leaves with her assignment probabilities: the assignments of the remaining agents are not affected except through the corresponding reduction in object capacities. We refer the reader to Appendix \ref{app:cons_non_boss} for a formal definition of consistency. Non-bossiness rules out a different type of strategic manipulation, namely situations in which an agent can change the allocations of others without changing her own allocation \citep{satterthwaite1981strategy, bade2016fairness}. In other words, a rule satisfies non-bossiness if, whenever an agent changes her preference report without affecting her own allocation, the allocations of all other agents also remain unchanged.



\begin{definition}
\label{def:non-bossy}
   A rule $\phi$ satisfies \textbf{non-bossiness} if, for each profile $\succeq$, for each agent $i\in I$, and for each $\succeq_i'$, it holds that
    $$\phi_i(\succeq) = \phi_i(\succeq'_i, \succeq_{-i}) \Rightarrow \phi(\succeq) = \phi(\succeq'_i, \succeq_{-i}).$$
\end{definition}

Together, these properties show that the Rawlsian rule satisfies natural robustness requirements that complement its egalitarian objective.

\begin{proposition}
\label{prop:cons_non_boss}
    The Rawlsian rule satisfies consistency and non-bossiness.
\end{proposition}

The proof of Proposition \ref{prop:cons_non_boss} is included in Appendix \ref{app:cons_non_boss}.

PS satisfies consistency and therefore also non-bossiness \citep{han2016consistency, heo2025agent}. RSD violates consistency for $n\geq 4$ \citep[][Proposition 2]{han2016consistency}, but satisfies non-bossiness \citep{bade2016fairness}. The positive and prudent equality rules by \cite{duddy2025egalitarian} violate both consistency and non-bossiness (see Appendix \ref{app:cons_non_boss}).

\subsubsection{Non-bossiness and Rawlsian leximin reasoning}
We conclude this section by explaining why non-bossiness is a key notion when applying Rawlsian leximin reasoning to our setting. The main idea is that any rule that can be interpreted as selecting cardinal leximin maximizers satisfies a welfare version of non-bossiness: if an agent changes her report but receives the same allocation, then the expected utilities of all other agents must remain unchanged.

Let $(u_i)_{i\in I}$ be a profile of cardinal utilities, where
$u_i=(u_{io})_{o\in O}$ and $u_{io}$ denotes agent $i$'s utility from object $o$. Given an assignment $x\in X$, the expected utility of agent $i$ is
\[
U_i(x)=\sum_{o\in O}x_{io}u_{io}.
\]
Let
\[
U^\uparrow(x)=\big(U_{(1)}(x),\ldots,U_{(n)}(x)\big)
\]
be the vector of expected utilities ordered from lowest to highest. 
We say that $x^*\in X$ is a \textit{leximin maximizer} if, for every $x\in X$,
either
\[
U^\uparrow(x^*)=U^\uparrow(x),
\]
or there exists \(k\in\{1,\ldots,n\}\) such that
\[
U_{(k)}(x^*)>U_{(k)}(x),
\]
and
\[
U_{(\ell)}(x^*)=U_{(\ell)}(x)
\quad \text{for all } \ell<k.
\]

When a rule selects one of the leximin maximizers for every problem, we say that the rule \textit{lexicographically maximizes the expected utilities of the worst-off agents}. Although there may be several leximin-maximizing assignments, all of them induce the same leximin expected-utility vector, and, moreover, \textit{the same expected utility for each agent}.\footnote{This follows from convexity of the feasible assignment set and linearity of expected utilities: if two leximin maximizers induced different expected utility vectors with the same ordered vector, their average would strictly improve the leximin vector.}

Consider a rule $\phi$ and a profile $\succeq$ such that the following two conditions hold:
\begin{enumerate}[label = (\roman*)]
    \item there exists an agent $i$ and preferences $\succeq'_i$ such that $\phi_i(\succeq) = \phi_i(\succeq'_i, \succeq_{-i})$, and
    \item there exists an agent $j$ who is strictly better off after agent $i$'s manipulation at $\succeq$, i.e., $ \phi_j(\succeq'_i, \succeq_{-i}) \succ^{sd}_j \phi_j(\succeq)$.
\end{enumerate}

The two conditions imply that the rule violates non-bossiness. Indeed, standard non-bossiness (Definition \ref{def:non-bossy}) only requires that some other agent's allocation changes, whereas condition (ii) requires the stronger statement that some other agent is strictly better off in the stochastic dominance sense. At both profiles $\succeq$ and $(\succeq_i',\succeq_{-i})$, agent $i$ receives the same allocation. Hence, the same residual shares of the objects remain to be divided among the same set of non-manipulating agents, whose preferences are unchanged. Therefore, if $\phi$ were to select cardinal leximin maximizers at both profiles, the non-manipulating agents should receive the same expected utilities before and after the change in agent $i$'s report. This contradicts condition (ii), because a strict stochastic dominance improvement for agent $j$ implies that agent~$j$'s expected utility is strictly higher under every strict cardinal utility representation consistent with $\succeq_j$. 
In other words, if a rule satisfies the two conditions stated above, then it cannot be interpreted, at these two profiles, as lexicographically maximizing the expected utilities of the worst-off agents for any strict cardinal utilities consistent with the reported ordinal preferences.

The examples in Appendix~\ref{app:cons_non_boss} in which the positive and prudent equality rules of \citet{duddy2025egalitarian} violate non-bossiness all satisfy condition~(ii). Indeed, the manipulating agent's own allocation is unchanged, while another agent's allocation improves in the stochastic dominance sense. Thus, these rules fail a robustness requirement that is naturally associated with the Rawlsian idea of lexicographically maximizing the expected utilities of the worst-off agents.

\section{Computing the Rawlsian assignment}\label{algorithm}
We now define an algorithm to compute the Rawlsian assignment in polynomial time. The proposed algorithm is an extension of the algorithm by \cite{airiau2023portioning}, which computes a leximin distribution of a divisible object, to our setting with multiple objects. The pseudo-code is shown in Algorithm~\ref{alg:Rawls} and an implementation of the code is available online (\url{https://github.com/DemeulemeesterT/Rawlsian-assignments}).

Intuitively, the algorithm constructs the vector $B^x$ in the same order in which the Rawlsian criterion compares assignments. It begins with the probabilities with which agents receive their least preferred object, then moves to the probabilities with which agents receive one of their two least preferred objects, and so on.

Consider first the initial step, where $k=n$ in the description of Algorithm \ref{alg:Rawls}. For each agent~$i$, the term $b_i^x(n)$ is the probability with which agent $i$ receives her least preferred object. The algorithm first solves a linear program, which we call LP 1, that finds the smallest number $b^*$ such that there exists a feasible assignment $x$ with
\[
b_i^x(n)\leq b^*
\qquad \text{for every } i\in I.
\]


Once $b^*$ has been found, the algorithm identifies which agents must receive their least preferred object with probability exactly $b^*$. To do this, it solves a second type of linear program, which we call LP 2. Fix an agent $j\in I$. LP 2 asks whether it is possible to reduce agent $j$'s probability of receiving her least preferred object strictly below $b^*$ while keeping every agent's probability of receiving her least preferred object weakly below $b^*$. Formally, LP 2 maximizes $\varepsilon$ subject to feasibility and to the constraints
\[
b_{j}^x(n)\leq b^*-\varepsilon
\]
and
\[
b_i^x(n)\leq b^*
\qquad \text{for every } i\in I.
\]
If the optimal value is $\varepsilon^*=0$, then the probability of agent $j$ cannot be moved strictly below $b^*$ without increasing some other agent's probability above $b^*$. Hence, the algorithm fixes
\[
b_{j}^x(n)=b^*
\]
and records that this coordinate of $B^x$ has been determined by adding agent $j$ to $I_n$, which is the subset of agents for which the probability of being assigned their $n$-th choice has been fixed by the algorithm.

After fixing all agents who must attain $b^*$, the algorithm repeats the same two-LP procedure for the remaining agents, now keeping the previously fixed values unchanged. In this way, it determines the largest value of $b_i^x(n)$, then the second-largest value, and so on, until the probabilities $b_i^x(n)$ have been fixed for all agents. If at some point LP 1 gives $b^*=0$, then all remaining agents can be assigned zero probability of receiving their least preferred object.

The algorithm then proceeds to $k=n-1$. At this stage, it applies the same procedure to the probabilities $b_i^x(n-1)$, that is, the probabilities with which agents receive one of their two least preferred objects, while keeping all previously fixed values $b_i^x(n)$ unchanged. The algorithm continues recursively for $k=n-2,\ldots,1$. When the procedure terminates, the assignment returned by the last linear program is the Rawlsian assignment.
 
 It can be checked that Algorithm~\ref{alg:Rawls} needs to solve at most $ n^2\cdot\frac{n+1}{2}$ linear programs to find the Rawlsian assignment. Note that by changing the order in which all agent-preference pairs are considered, a similar algorithmic procedure can be used to compute the probabilistic assignments of alternative rules that depend on lexicographically optimizing a vector of cumulative assignment probabilities. This is the case, for example, of the generalized Rawlsian rules (Appendix \ref{family}), the positive and prudent equality rules by \cite{duddy2025egalitarian} as described in Appendix \ref{conal}, or PS as characterized by \cite{bogomolnaia2015random}.

\begin{algorithm}[tb!]
 \caption{Computing the Rawlsian assignment}
    \label{alg:Rawls}
    $I_k \gets \varnothing$ for $k \in\{1,\ldots,n\}$.\\
    $b^x_i(k)$ is fixed once $i\in I$ is added to $I_k$.
    \begin{algorithmic}
    \For{$k \in \{n, \ldots, 1\}$} 
    \While{$I_k \neq I$}
    \State \parbox[t]{\dimexpr\textwidth-3\leftmargin-\labelsep-\labelwidth}{%
    \textbf{LP 1:} Find the minimum value of $b^*$ such that there exists an assignment $x\in[0,1]^{n^2}$ satisfying\strut}\vspace{-0.4cm}
    \State \begin{align*}
        \sum_{o\in O}x_{io} &= 1 &\forall \; i \in I\\[-3pt]
        \sum_{i \in I} x_{io} &= 1 &\forall \; o \in O\\[-3pt]
        \sum_{o \in O} \mathds{1} \{r_{io} \geq k \} x_{io} &\leq b^* &\forall \; i \in I\setminus I_k\\[-3pt]
        \sum_{o \in O} \mathds{1} \{r_{io} \geq t \} x_{io} &= b^x_i(t) &\forall \; i \in I_t: t\geq k
    \end{align*}
    \If{$b^* = 0$} { set $b_i^x(k)=0$ for every $i\in I\setminus I_k$} 
    
    {\ \ \ \ \ \ \ \ \ \ \ \ \ \ \ \  \ \ \ \ \ \ \ \ $I_k \gets I$}
    \Else
    \For{$j \in I\setminus I_k$}
    \State \parbox[t]{\dimexpr\textwidth-7\leftmargin-\labelsep-\labelwidth}{%
    \textbf{LP 2:} Find the maximum value of $\epsilon^*$ such that there exists an assignment $x\in[0,1]^{n^2}$ satisfying\strut}\vspace{-0.4cm}
    \State \begin{align*}
        \sum_{o\in O}x_{io} &= 1 &\forall \; i \in I\\[-3pt]
        \sum_{i \in I} x_{io} &= 1 &\forall \; o \in O\\[-3pt]
        \sum_{o \in O} \mathds{1} \{r_{jo} \geq k \} x_{jo} &\leq b^* - \epsilon^*&\\[-3pt]
        \sum_{o \in O} \mathds{1} \{r_{io} \geq k \} x_{io} &\leq b^* &\forall \; i \in I\setminus I_k\\[-3pt]
        \sum_{o \in O} \mathds{1} \{r_{io} \geq t \} x_{io} &= b^x_i(t) &\forall \; i \in I_t: t\geq k
    \end{align*}
    \If{$\epsilon^* = 0$} {add $j$ to $I_k$, and set $b^x_{j}(k) = b^*$.}
    \EndIf
    \EndFor
    \EndIf
    \EndWhile
    \EndFor\\
    \Return{The solution $x\in[0,1]^{n^2}$ from the last solved LP.}
    \end{algorithmic}
\end{algorithm}

\section{Empirical Application}\label{app}
Our main motivation is the assignment of apartments in housing cooperatives. A housing cooperative is formed by a group of families who join to construct a building. 
Once it is finished, the apartments are to be distributed. Prices are not used, so the situation fits the assignment problem described above.

Before the rule currently in use, cooperatives assigned apartments randomly. Preferences were not considered and the assignment was sampled randomly from the set of all assignments. The assignment was in general not efficient, so it was decided to change the rule. A group of researchers from the Engineering School of the University of Uruguay, proposed a new rule, called the \textbf{MTAV}, which was finally adopted. More information about the current rule can be found in \cite{prino2016optimal} and \cite{Paleo2021}. 

One of the main concerns of the families that participate in the cooperatives is the \textit{distribution} of the final assignment. They do not want \textit{inegalitarian} distributions in terms of the ranking of the assigned apartment in each family's preferences. They want to avoid a situation where, for example, a family gets their most preferred unit, while others get apartments ranked very low in their preferences. The MTAV explicitly addresses this concern (we define it in Appendix \ref{MTAV}). 

In this section, we use the data of families' preferences from $24$ cooperatives to compare the outcomes of the Rawlsian, PS and MTAV rules. In practice, the assignment is organized according to the number of rooms in the apartments. Therefore, all the apartments we consider as part of a cooperative have the same number of rooms. The sizes of the cooperatives range from $4$ to $42$ families (with an average size of 17). As a proxy for the correlation of preferences, we consider the cumulative number of different apartments ranked in the first, second, third, and fourth position in the preferences. The result shows that preferences are not highly correlated. We should note that, since our data consist of ordinal rankings, our efficiency analysis is necessarily ordinal: the Rawlsian assignment is sd-efficient, but we do not observe cardinal utilities and therefore cannot measure cardinal welfare losses.
 Section \ref{charac} of the Online Appendix presents detailed information for each cooperative.

It is worth noting that there are only two cooperatives, $C_4$ and $C_{10}$, where the Rawlsian and the PS rules coincide. These are the smallest cooperatives (each consisting of 4 families). In addition to these 8 families, there is only one family that receives the same allocation under the two rules. Thus, overall, 9 families out of 408 receive the same lottery over apartments.

We should mention that the MTAV is not strategyproof \citep{Paleo2021}. Nonetheless, we take preferences submitted by the families at face value. We are not aware of manipulations by the families, and in general given the information held by the families, it is very difficult to 
manipulate the rule profitably.

\subsection{Comparison with PS rule: maximum rank}\label{max_Rank}
The Rawlsian rule is sd-efficient and is designed to improve the welfare of the worst-off families. A first way to measure this improvement is what we call the maximum rank. Given an assignment, we consider for each agent the rank of the least preferred object received with positive probability. Then, we take the maximum rank across all families. 
In Table \ref{maxrank} we look at the maximum rank of the Rawlsian and PS rules. In contrast to what might be expected (based on the correlation of preferences), in all but two cooperatives, the PS rule assigns at least one family their least preferred apartment with positive probability. Under the Rawlsian rule, this happens only in three cooperatives (those with a small number of families). For the rest of the cooperatives, 
the average maximum rank of the Rawlsian rule as a percentage of the length of families' preferences is 48\%.

\begin{table}[htbp]
     \begin{center}
     \caption{Size and maximum rank of each cooperative for the Rawlsian and PS assignment.}   \label{maxrank}
     \begin{tabular}{lcccc}
     \hline
     Coop. & $n$ & Max.\ Rawls & Max.\ Rawls (\%) & Max.\ PS \\ \hline
     $C_1$ & 26 & 13 & 50 & 26 \\
     $C_2$ & 18 & 12 & 67 & 18 \\
     $C_3$ & 4 & 2 & 50 & 4 \\
     $C_4$ & 4 & 3 & 75 & 3 \\
     $C_5$ & 28 & 8 & 29 & 28 \\
     $C_6$ & 8 & 3 & 38 & 8 \\
     $C_7$ & 29 & 8 & 28 & 29 \\
     $C_8$ & 12 & 7 & 58 & 12 \\
     $C_9$ & 15 & 6 & 40 & 14 \\
     $C_{10}$ & 4 & 4 & 100 & 4 \\
     $C_{11}$ & 11 & 5 & 45 & 11 \\
     $C_{12}$ & 16 & 6 & 38 & 16 \\
     $C_{13}$ & 39 & 14 & 36 & 39 \\
     $C_{14}$ & 42 & 33 & 79 & 42 \\
     $C_{15}$ & 14 & 9 & 64 & 14 \\
     $C_{16}$ & 6 & 6 & 100 & 6 \\
     $C_{17}$ & 9 & 3 & 33 & 9 \\
     $C_{18}$ & 15 & 8 & 53 & 15 \\
     $C_{19}$ & 9 & 9 & 100 & 9 \\
     $C_{20}$ & 20 & 10 & 50 & 20 \\
     $C_{21}$ & 24 & 7 & 29 & 24 \\
     $C_{22}$ & 7 & 2 & 29 & 7 \\
     $C_{23}$ & 40 & 11 & 28 & 40 \\
     $C_{24}$ & 8 & 7 & 88 & 8 \\
     \hline
     \end{tabular}
     \end{center}
   \footnotesize{  \textit{Notes}:} 
\footnotesize{Coop.\ stands for cooperative, each denoted as $C_i$ for $i=1,\ldots,24$. Size is the number of families in each cooperative. For Max.\ Rawls (Max.\ PS) we compute the rank of the least preferred object assigned with positive probability by the Rawlsian (PS) rule for each family, and then we take the maximum among all families. Max.\ Rawls (\%) expresses Max.\ Rawls as a percentage of the length of families' preferences (or, equivalently, the size of the cooperative).}
\end{table}

We complement this comparison with two additional exercises. First, we conduct an i.i.d.\ bootstrap analysis within each cooperative: for each cooperative with $n$ families, we generate artificial profiles by drawing $n$ preference lists independently with replacement from the submitted rankings. For each bootstrap profile, we compute the maximum rank under the Rawlsian rule and under PS. The results are shown in Table \ref{tab:bootstrap_max_rank} in the Online Appendix, Section \ref{OA:bootstrap}. The same qualitative pattern emerges: the Rawlsian rule continues to deliver a substantially lower maximum rank than PS.


Second, we consider a simulation exercise in which we vary the correlation of preferences. For each cooperative, we construct a central ranking of apartments using the average observed rank of each apartment. We then generate artificial preference profiles around this central ranking, where a parameter $\theta$ controls how concentrated preferences are around it. When $\theta=0$, preferences are uniformly random; as $\theta$ increases, agents' preferences become increasingly aligned with the central ranking and therefore more correlated. Table~\ref{tab:correlation_simulation_normalized} reports the normalized maximum rank under the Rawlsian rule and PS for different values of $\theta$. The Rawlsian rule improves the worst assignment relative to PS for all values of $\theta$, although the size of the improvement decreases as preferences become more correlated. Details are provided in Section \ref{OA:bootstrap} of the Online Appendix, including the normalized maximum rank under different levels of preference correlation for each of the cooperatives.

\begin{table}[htbp]
\begin{center}
\caption{Normalized maximum rank under different levels of preference correlation}
\label{tab:correlation_simulation_normalized}
\begin{threeparttable}
\begin{tabular}{lccc}
\hline
 $\theta$ &  Rawlsian$/n$ &  PS$/n$ &  $(PS-R)/n$ \\
\hline
     0.00 &          0.29 &    0.96 &        0.67 \\
     0.05 &          0.31 &    0.96 &        0.66 \\
     0.10 &          0.37 &    0.97 &        0.60 \\
     0.20 &          0.51 &    0.97 &        0.47 \\
     0.40 &          0.67 &    0.98 &        0.31 \\
     0.80 &          0.81 &    0.99 &        0.18 \\
     1.60 &          0.90 &    1.00 &        0.09 \\
     3.20 &          0.97 &    1.00 &        0.03 \\
\hline
\end{tabular}
\end{threeparttable}
\end{center}
\footnotesize{  \textit{Notes}:} 
\footnotesize{The table reports averages across simulated preference profiles and cooperatives. For each cooperative, we construct a central ranking using average observed ranks. For each value of $\theta$, we generate artificial preference profiles using a distribution centered at this ranking. Larger values of $\theta$ generate preferences that are more concentrated around the central ranking. Rawlsian$/n$ and PS$/n$ report the average maximum rank under each rule divided by the number of agents in the cooperative. The last column reports the normalized difference.}
\end{table}

The comparison of maximum ranks shows that the Rawlsian assignment assigns fewer families their least preferred apartments. Now we look at the intensive margin, that is, the probabilities with which families are assigned their least preferred objects. It could be that even when the PS rule assigns families apartments that are ranked very low, this occurs with very small probability. To investigate this, fix a cooperative and an assignment $x$. The expected number of families assigned apartments ranked in position $k \in \{1,\ldots,n\}$ is
\[
\sum_{i\in I}\sum_{o\in O:\, r_{io}=k} x_{io}.
\]
This expectation is taken with respect to the lottery represented by the random assignment $x$, conditional on the observed  preference profile of the cooperative. Tables~\ref{dist1}--\ref{dist6} in the Online Appendix present the results for all the cooperatives considering the Rawlsian and PS rules. Not only is the maximum rank higher under PS than under the Rawlsian assignment for each cooperative, but also the cumulative probability of being assigned their least preferred apartments is substantially higher. 
For example, for cooperative $C_{13}$ the Rawlsian rule assigns all families apartments ranked 14th or better (out of 39 apartments), while the PS rule assigns (in expectation) 8 families apartments ranked 15th or worse. 

The general picture regarding the expected number of families that are assigned apartments with rank $k$ is as follows. The Rawlsian rule assigns a lower number of families their least preferred apartments compared to the PS rule. But, at the same time, it also assigns a lower number of families their top choices, and especially their first choice. As an illustration, Figure \ref{fig1} reports the cumulative rank distributions for two relatively large cooperatives, C7 and C23. These examples are not used as evidence on their own; rather, they illustrate the pattern documented in the Online Appendix, Section \ref{dist_agents}. Relative to the PS rule, the Rawlsian assignment places less mass on the very top ranks but substantially reduces the mass assigned to low-ranked apartments. The corresponding figures for all cooperatives are reported in the Online Appendix, Section \ref{cum_figs}. 

\begin{figure}[htp]
\centering
\begin{subfigure}{.5\textwidth}
  \centering
  \includegraphics[width=1.1\linewidth]{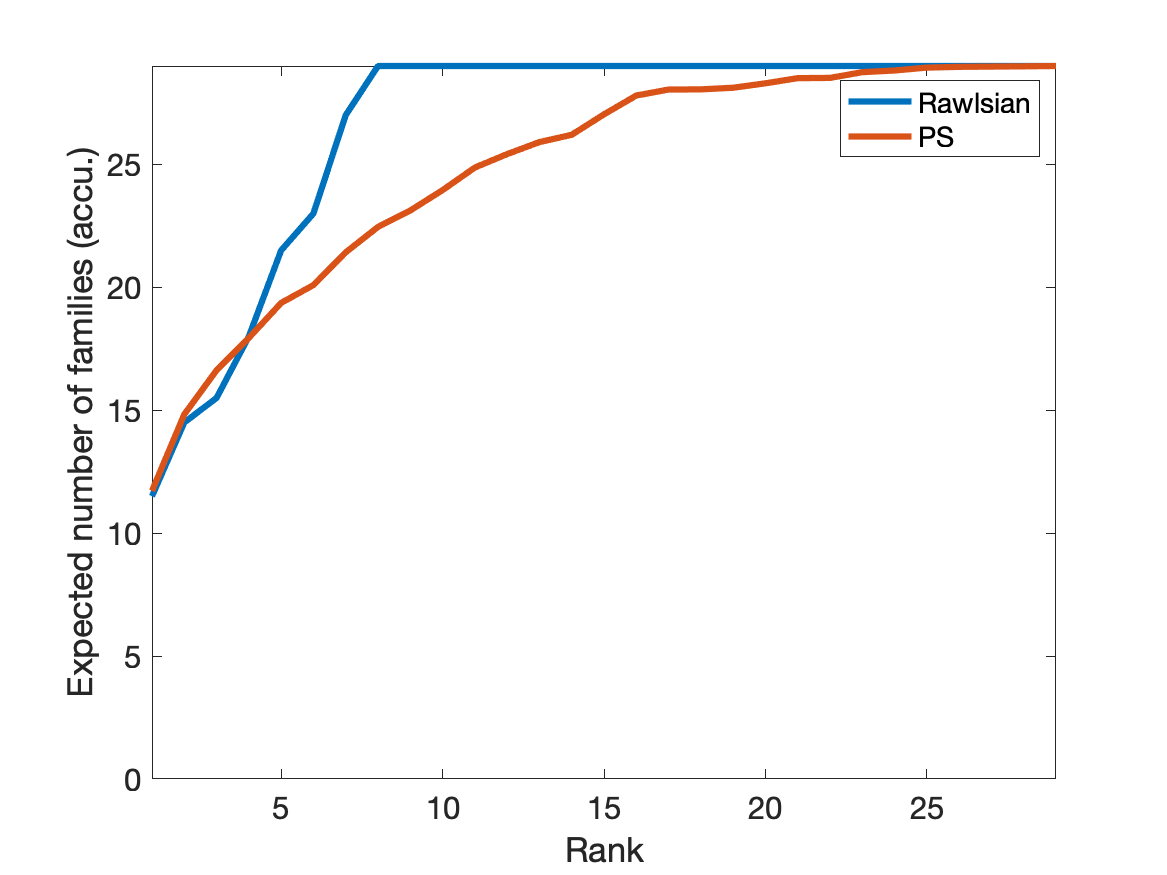}
  \caption{Cooperative $C_7$}
  \label{fig:dist}
\end{subfigure}%
\begin{subfigure}{.5\textwidth}
  \centering
  \includegraphics[width=1.1\linewidth]{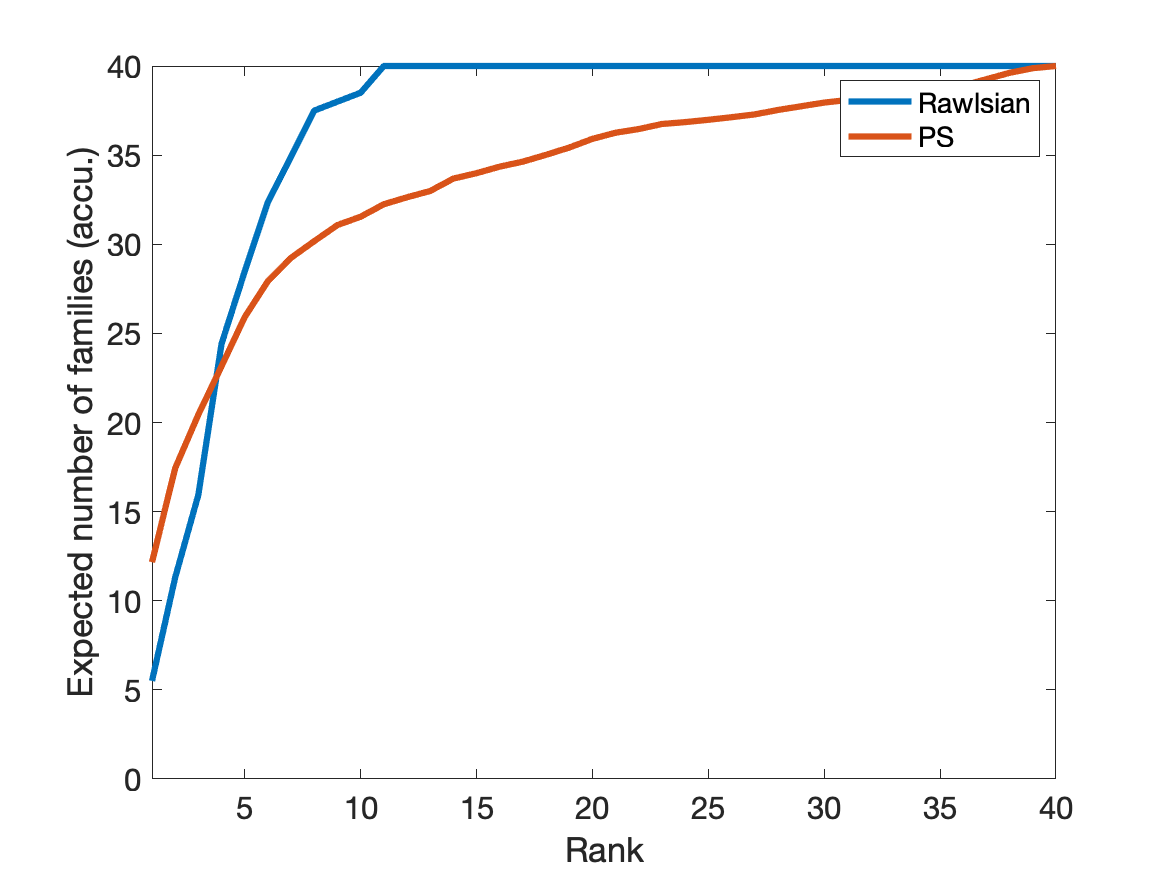}
  \caption{Cooperative $C_{23}$}
  \label{fig:sub2}
\end{subfigure}


\caption{Cumulative distribution function of the expected number of families that are assigned apartments with rank $k$  by the Rawlsian and PS rules.\label{fig1}}
\end{figure}

\subsection{Comparison with the PS rule: individual preferences over assignments} \label{subsec:result_envy}
Both the Rawlsian and the PS rules are sd-efficient. Therefore, it is never the case that all families prefer one assignment to the other. 
In this section we compare, for each cooperative, the number of families that prefer the assignment of one rule over the other. We say that a family prefers the Rawlsian (PS) assignment to the PS (Rawlsian) assignment if the lottery the family receives in the first assignment sd-dominates the lottery in the second assignment. Obviously, there are families for whom neither assignment sd-dominates the other. Table \ref{comp} presents the results.     
For every cooperative, more families prefer their Rawlsian assignment to their PS assignment. Moreover, the average percentage of families that prefer their Rawlsian (PS) assignment over the PS (Rawlsian) assignment is 35\% (9\%).\footnote{One may wonder if the Rawlsian allocation is always more popular than the PS allocation (that is, if the number of agents who prefer the former is always higher than the number of agents who prefer the latter). The example in Remark \ref{Rem_rank_eff} of the Online Appendix, Section \ref{rank}, shows that this is not the case. Indeed, in this problem, two agents prefer the PS assignment, one agent  prefers the Rawlsian assignment, one agent obtains the same allocation, and one agent does not prefer either assignment based on stochastic dominance.}

In addition, we also report in Table \ref{comp} the magnitude of these comparisons using expected ranks. For each agent $i$, we define the expected rank of agent $i$'s assignment under rule $m\in\{R,PS\}$ as

\[
ER_i^m = \sum_{o\in O} r_{io}x_{io}^m
\]where, as before, $r_{io}$ is the rank of object $o$ in $i$'s preference list.

The average gain from the Rawlsian rule, reported in Table \ref{comp}, is computed among agents whose Rawlsian assignment strictly sd-dominates their PS assignment:
\[
\text{Gain R}
=
\frac{1}{|\{i:x_i^R\succ_i^{sd}x_i^{PS}\}|}
\sum_{i:x_i^R\succ_i^{sd}x_i^{PS}}
\left(
ER_i^{PS}-ER_i^R
\right).
\]
Analogously, the average gain from PS is computed among agents whose PS assignment strictly sd-dominates their Rawlsian assignment:
\[
\text{Gain PS}
=
\frac{1}{|\{i:x_i^{PS}\succ_i^{sd}x_i^R\}|}
\sum_{i:x_i^{PS}\succ_i^{sd}x_i^R}
\left(
ER_i^R-ER_i^{PS}
\right).
\]
Both measures are expressed in rank positions, and therefore provide an interpretable measure of the size of the gains and losses. The gain columns indicate that, conditional on a strict sd-improvement, the average gains are comparable in magnitude across rules. Pooling across cooperatives, families who are strictly better off under the Rawlsian rule gain on average about $3.28$ rank positions in expected-rank terms, while families who are strictly better off under PS gain about $3.68$ rank positions. The main difference is therefore the incidence of improvements: the Rawlsian rule strictly sd-improves the assignment of many more families than PS does.

\begin{table}[htbp]
\centering
\setlength{\tabcolsep}{3.5pt}
\renewcommand{\arraystretch}{1.05}
\caption{Stochastic dominance comparison between the Rawlsian rule and PS}
\label{comp}
\begin{tabular}{lrrrrr}
\hline
Coop. & $n$ & Prefer $R$ & Gain $R$ & Prefer $PS$ & Gain $PS$ \\
\hline
$C_{1}$  & 26 & 8  & 4.73  & 1 & 2.57  \\ 
$C_{2}$  & 18 & 6  & 4.95  & 1 & 6.29  \\
$C_{3}$  & 4  & 3  & 0.44  & 1 & 0.67  \\
$C_{4}$  & 4  & 0  & --    & 0 & --    \\
$C_{5}$  & 28 & 10 & 4.14  & 1 & 6.48  \\
$C_{6}$  & 8  & 4  & 1.22  & 1 & 0.83  \\
$C_{7}$  & 29 & 12 & 4.67  & 2 & 3.18  \\
$C_{8}$  & 12 & 6  & 1.96  & 1 & 4.04  \\
$C_{9}$  & 15 & 4  & 1.80  & 1 & 1.88  \\
$C_{10}$ & 4  & 0  & --    & 0 & --    \\
$C_{11}$ & 11 & 5  & 1.10  & 2 & 0.67  \\
$C_{12}$ & 16 & 8  & 1.74  & 1 & 0.50  \\
$C_{13}$ & 39 & 9  & 7.09  & 1 & 3.46  \\
$C_{14}$ & 42 & 7  & 10.63 & 1 & 21.96 \\
$C_{15}$ & 14 & 3  & 3.54  & 1 & 5.00  \\
$C_{16}$ & 6  & 3  & 0.66  & 2 & 0.53  \\
$C_{17}$ & 9  & 4  & 0.77  & 1 & 0.89  \\
$C_{18}$ & 15 & 5  & 2.25  & 1 & 3.76  \\
$C_{19}$ & 9  & 5  & 1.72  & 1 & 3.49  \\
$C_{20}$ & 20 & 2  & 5.82  & 1 & 6.77  \\
$C_{21}$ & 24 & 9  & 3.63  & 1 & 1.19  \\
$C_{22}$ & 7  & 4  & 0.92  & 1 & 0.50  \\
$C_{23}$ & 40 & 6  & 6.83  & 2 & 2.59  \\
$C_{24}$ & 8  & 2  & 1.48  & 0 & --    \\
\hline
\end{tabular}

\vspace{0.4em}
\begin{minipage}{0.92\textwidth}
\footnotesize
\textit{Notes}: The table compares the Rawlsian rule, denoted by $R$, and probabilistic serial, denoted by PS. The columns Prefer $R$ and Prefer $PS$ report the number of agents for whom the corresponding strict stochastic dominance relation holds. Gain $R$ reports the average reduction in expected rank, among agents whose Rawlsian allocation strictly sd-dominates their PS allocation. Gain $PS$ is defined analogously for agents whose PS allocation strictly sd-dominates their Rawlsian allocation.
\end{minipage}
\end{table}

\subsection{Comparison with the PS rule: sd-envy-freeness}
In this section we turn to the analysis of sd-envy-freeness. The PS assignment is sd-envy-free, so no family envies another family under its outcome. As observed before, the Rawlsian rule is not sd-envy-free, so some families may experience envy. 

We show in Table \ref{envy}, for each cooperative, the number of families (and the percentage over all the families) whose allocation does not sd-dominate the allocation of some other family. Also, among those families that experience envy, we show the average number of families that are sd-envied. There is only one cooperative where no family experiences envy. Among the rest, the percentage of families with sd-envy ranges from 22\% to 94\%, with an average of 58\%. 
If we consider weak sd-envy, that is, a family has weak sd-envy over another family if the allocation of this last family sd-dominates the allocation of the first one, the percentage of families with weak sd-envy ranges from 11\% to 67\%, with an average of 36\%.

 \begin{table}[htbp]
     \begin{center}
     \caption{Sd-envy in the Rawlsian assignment.}\label{envy}
     \begin{tabular}{lrrrr}
     \hline
     Coop. & Size & Envy Fam.\ & Envy Fam.\ (\%) & Avg. Envied Fam.\ \\ \hline
     $C_1$ & 26 & 20 & 77 & 6 \\
     $C_2$ & 18 & 17 & 94 & 6 \\
     $C_3$ & 4 &  1 & 25 & 2 \\
     $C_4$ & 4 & 0 & 0 & 0 \\
     $C_5$ & 28 & 18 & 64 & 5 \\
     $C_6$ & 8 & 4 & 50 & 1 \\
     $C_7$ & 29 & 18 & 62 & 5 \\
     $C_8$ & 12 & 6 & 50 & 3 \\
     $C_9$ & 15 & 10 & 67 & 4 \\
     $C_{10}$ & 4 & 0 & 0 & 0 \\
     $C_{11}$ & 11 & 5 & 45 & 2 \\
     $C_{12}$ & 16 & 10 & 63 & 3 \\
     $C_{13}$ & 39 & 33 & 85 & 7 \\
     $C_{14}$ & 42 & 40 & 95 & 12 \\
     $C_{15}$ & 14 & 11 & 79 & 5 \\
     $C_{16}$ & 6 & 2 & 33 & 3 \\
     $C_{17}$ & 9 & 2 & 22 & 3 \\
     $C_{18}$ & 15 & 10 & 67 & 5 \\
     $C_{19}$ & 9 & 4 & 44 & 6 \\
     $C_{20}$ & 20 & 18 & 90 & 7 \\
     $C_{21}$ & 24 & 15 & 63 & 5 \\
     $C_{22}$ & 7 & 3 & 43 & 1 \\
     $C_{23}$ & 40 & 36 & 90 & 6 \\
     $C_{24}$ & 8 & 7 & 88 & 3 \\
     \hline
     \end{tabular}
     \end{center}
    \footnotesize{ \textit{Notes}:} 
\footnotesize{Envy Fam.\ is the number of families with sd-envy, Envy Fam.\ (\%) is the percentage of families with sd-envy, and Avg.\ Envied Fam.\ is the average of the number of sd-envied families. }
     \end{table}

\subsection{Comparison with MTAV}\label{MTAV:comp}
Tables~\ref{dist1}--\ref{dist6} in Section~\ref{dist_agents} of the Online Appendix report the expected number of families assigned apartments with rank $k$ under each of the three rules. By construction, the maximum ranks of the Rawlsian rule and MTAV always coincide.
 However, it is interesting to note that there are cases where the Rawlsian rule assigns fewer families their least preferred apartment (among those that are received with positive probability). For example, consider $C_1$: under the Rawlsian rule only one family receives their apartment of maximum rank (which is 13, out of 26 apartments), while under MTAV two families receive their 13th choice. The same is true for cooperatives $C_2$, $C_5$, $C_9$, $C_{13}$, $C_{15}$, $C_{17}$, $C_{21}$, and $C_{23}$. In general, the outcome of the MTAV is located between the two other rules. Indeed, the Rawlsian rule outperforms the MTAV under the Rawlsian criterion, but not in terms of the expected number of families assigned their top choice. The PS rule outperforms the MTAV rule in terms of the expected number of families assigned their top choice, but not under the Rawlsian criterion.

\section{Analysis for large markets}\label{section_large}

In the empirical application, we show that the average maximum rank of the Rawlsian assignment (excluding those cooperatives with less than 10 families) is around 46\% of the market size (Section \ref{max_Rank}). In this section, we build on this finding by analyzing the maximum rank of the Rawlsian rule in large markets. Specifically, we consider markets of size $n$, where agents' preferences are drawn i.i.d.\ from a uniform distribution, and study the limit as $n$ tends to infinity. Although the average maximum rank of the Rawlsian rule tends to infinity, we show that it grows at a slow rate. In particular, the expected maximum rank for the Rawlsian rule is upper bounded by $\lfloor\ln(n)\rfloor$ plus a constant. For instance, in a market of size \(n = 1000\), our result implies that the expected rank of the least preferred assigned object is at most~9.
 
Let $\mathcal{P}_n$ denote the set of all possible preference profiles in a market of size $n$. We define the random variable $\succeq\in\mathcal{P}_n$ that selects one of the preference profiles in $\mathcal{P}_n$ uniformly at random. Then, we denote the maximum rank of the assignment found by rule $\phi$ at $\succeq$ as:
$$
r^\phi_{max}(\succeq)=\max_{(i,o) \in I\times O} \{r_{io}(\succeq)|\phi_{io}(\succeq)>0\}.
$$

The expected maximum rank of the rule $\phi$ is then defined as:


\begin{equation}
    \mathbb{E}(r^{\phi}_{max})=\sum_{\succeq \in \mathcal{P}_n} \frac{1}{\vert \mathcal{P}_n \vert} r^\phi_{max}(\succeq).
\end{equation}

First, we show that the expected maximum rank of the Rawlsian assignment goes to infinity when the market grows large.
\begin{proposition}\label{prop:LB_Rawls}
    $\lim_{n\to\infty}\mathbb{E}(r^{Rawls}_{max}) = +\infty$.
\end{proposition}

Next, we provide an upper bound on the growth rate of the Rawlsian rule's maximum rank.

    

    

\begin{proposition}\label{large}
     Consider the Rawlsian rule. Then, when $n\rightarrow \infty$:
     \begin{equation}
         \mathbb{E}_n(r^{Rawls}_{max}) \leq \lfloor\ln(n)\rfloor + \sum_{k=-1}^{+\infty}\left(1-e^{-2e^{-k}}\right) \approx \lfloor\ln(n)\rfloor + 2.77026.
     \end{equation}
\end{proposition}

The proof of Proposition~\ref{prop:LB_Rawls} can be found in Section~\ref{proofs_large2} of the Online Appendix, and the proof of Proposition~\ref{large} can be found in Appendix~\ref{large_proof}.
 We are not aware of any theoretical result on the maximum rank of the PS rule. \cite{ortega2023cost} showed that the rank-minimizing rule, which minimizes the expected rank of agents, has an expected maximum rank of $\log_2(n)$ in large random markets. Interestingly, our result implies that the expected maximum rank of the Rawlsian rule in random markets is asymptotically bounded above by a fraction $\ln(2) \approx 0.69$ of the expected maximum rank of the rank-minimizing rule.

To show the tightness of the bound in Proposition~\ref{large}, we have randomly sampled $1,000$ preference profiles for each market of size $n\in\{3,\ldots,59\}$. The left panel of Figure~\ref{fig:random} shows that the empirical analysis in Section~\ref{max_Rank} regarding the difference in the maximum rank between the Rawlsian and PS rules seems to hold in general. The probability with which the PS rule assigns an agent to their last choice with positive probability is close to one. As the size of the market grows and preferences are uniformly i.i.d., the difference between the maximum rank of the Rawlsian and the PS rules tends to infinity on average. Therefore, the egalitarian advantage of the Rawlsian rule over PS also holds in large markets.

The right panel of Figure~\ref{fig:random} shows that the upper bound of Proposition~\ref{large} is reasonably close to the observed maximum rank of the Rawlsian rule. Interestingly, the true maximum rank of the Rawlsian rule seems to be approximately equal to $\ln(n)+1$.

\begin{figure}[htp]
\centering
\includegraphics[width=1\linewidth]{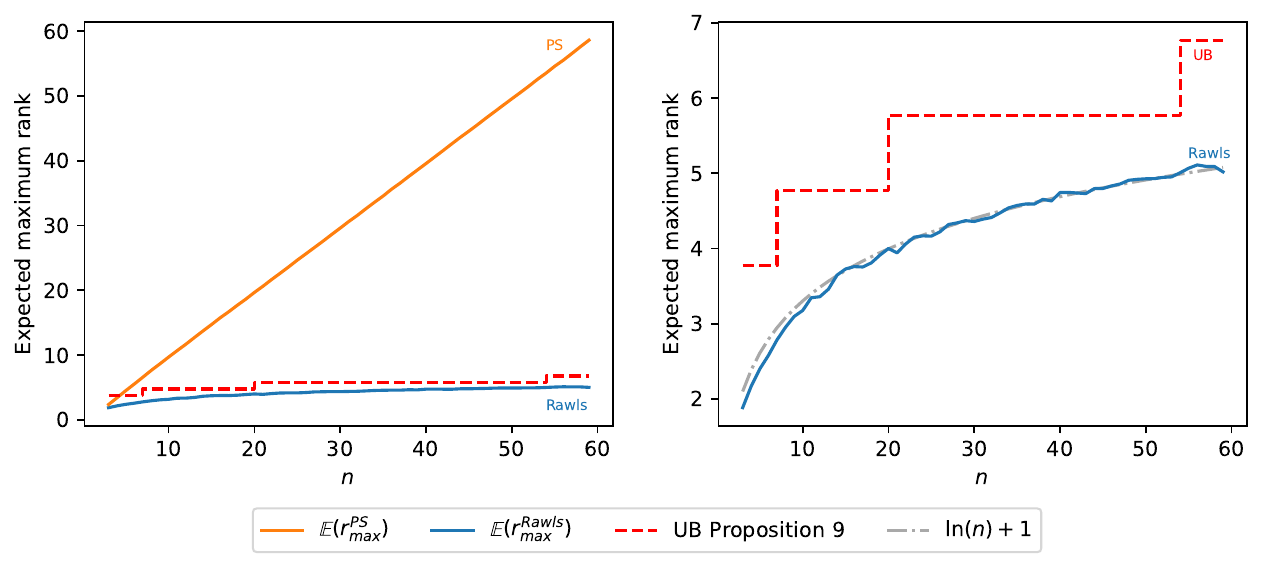}
\caption{Simulation results for the maximum ranks of the Rawlsian and PS rules.\label{fig:random}}
\end{figure}


\section{Concluding Remarks}\label{conclu}

We examined the allocation of indivisible goods to individuals when prices cannot be utilized. Our investigation draws inspiration from the context of housing cooperatives, where families express concerns about the fairness of the final assignment. Specifically, we aim to avoid assignments in which some families receive their top choices while others are assigned apartments ranked very low in their preferences. To address this, we introduce a concept called Rawlsian assignments, which prioritizes improving the allocations of individuals who are worst-off. We demonstrate that there always exists a unique Rawlsian assignment. Moreover, the Rawlsian rule is both sd-efficient and anonymous. Whereas the Rawlsian rule violates sd-strategyproofness and sd-envy-freeness, we show that this is unavoidable as either property is violated by any rule that minimizes the maximum rank of the objects to which the agents are assigned with positive probability. Furthermore, we compare our proposed rule with the PS rule and the currently employed rule. Our findings reveal that the Rawlsian rule significantly outperforms the other two rules in terms of egalitarian outcomes. 

\bibliographystyle{econometrica}
\bibliography{rawls}
\clearpage

\section{Appendix}

\subsection{Proofs}\label{app:proofs}

\subsubsection{Proof of Proposition \ref{prop1}.}

\begin{proof}
\textbf{Existence.}
The set of assignments $X$ is a compact set of $[0,1]^{n^2}$. For each $x\in X$ we construct the vector $B^x=(B^x_1, \ldots, B^x_n, B^x_{n+1}, \ldots, B^x_{2n},\ldots,B^x_{n^2}).$ Consider the function $\pi_1:X\rightarrow \mathbb{R}$ such that $\pi_1(x)=B^x_1$. Clearly, $\pi_1$ is a continuous function (it is a projection). Then, the problem $\min_{x \in X} \pi_1(x)$ has a solution. Let $S_1\subset X$ be the set of all solutions to the minimization problem.

Note that $S_1$ is compact. It is bounded because $X$ is bounded. It is also closed: take a convergent sequence $\{x^k\}_k \subset S_1$. All the elements $x^k$ have the same $B^{x^k}_1$, so the limit of the sequence must have the same $B^{x^k}_1$ as well. Then, the limit is an element of $S_1$.

Consider the function $\pi_2:S_1\rightarrow \mathbb{R}$ such that $\pi_2(x)=B^x_2$. Clearly, $\pi_2$ is a continuous function. Then, the problem $\min_{x \in S_1} \pi_2(x)$ has a solution. Let $S_2\subset X$ be the set of all the solutions to the minimization problem.

We continue in the same way for all the elements of the vector $B^x$. At the end, we will have a nonempty set $S_{n^2}$.\footnote{Note that there will be a unique vector $B^x$ that minimizes the problems. But, in principle, we could have many assignments associated with the same vector $B^x$.} The assignments in this set are Rawlsian. Indeed, suppose this is not the case, and consider $x\in S_{n^2}$ which is Rawlsian-dominated by another assignment $y$. This means that there exists an index  $j\in \{1,\ldots,n^2\}$ such that $B^{x}_j>B^y_j$, and for all $i<j$, $B^{x}_i=B^{y}_i$. Then, $y \in S_{j-1}$, and $\pi_{j}(y)<\pi_{j}(x)$. This implies that $x\notin S_j \Rightarrow x \notin S_{n^2}$, which is a contradiction. 

\textbf{Uniqueness.}
Suppose there are two Rawlsian assignments, $x$ and $y$. Both are associated with the same vector $B$. Given an assignment $z$, consider a matrix $P^z$ where agents are represented in the rows, and in column $k$ we include the probability with which each agent receives the object ranked in position $k$.\footnote{Matrix $P^z$ is created by reordering each row of $z$ based on the agents' preferences.} Starting from the last column, consider the first column where $P^x$ and $P^y$ differ (there is such a column as $x$ and $y$ are different assignments). Assume that this is the case for column $n-c$.

Note that the probabilities of column $n-c$ in each matrix are the same, but distributed differently. Until column $n-c$, the two matrices are the same, so the same agents get the same probabilities for the corresponding objects. This implies that all agents receive the same probability for objects ranked $n-(c-1),\ldots,n$ under $x$ and $y$. Therefore, they also receive the same probabilities under assignment $\frac{1}{2}(x+y)$. 

Consider the largest element of column $n-c$ of each of the matrices $P^x$, $P^y$, and $P^{\frac{1}{2}(x+y)}$. If the largest element of this column in  $P^x$ and $P^y$ corresponds to the same agent, then the probability with which this agent receives the object ranked in position $n-c$ coincides in $x$, $y$, and $\frac{1}{2}(x+y)$. If it corresponds to a different agent, then either the largest element of column $n-c$ in $P^x$ or the largest element of column $n-c$ in $P^y$, is larger than the largest element of $ P^{\frac{1}{2}(x+y)}$. In the first case, $\frac{1}{2}(x+y)$ R-dominates $x$, and in the second case $\frac{1}{2}(x+y)$ R-dominates $y$. But this contradicts the fact that $x$ and $y$ are Rawlsian assignments.
As the assignments $x$ and $y$ differ in at least one entry of column $n-c$, the assignment $\frac{1}{2}(x+y)$ R-dominates $x$ or $y$.  
\end{proof}

\subsubsection{Proof of Proposition \ref{prop2}.}\label{proof:prop2}

\begin{proof}
Consider two problems $(\succeq_{i})$ and $(\succeq_{\pi(i)})$, and let $(x_i)_i$ and $(x'_{i})_i$ be the Rawlsian assignments in each of these problems. 
We need to show that $x'_{i}=x_{\pi^{-1}(i)}$ for every $i\in \{1,\ldots,n\}$. 
Suppose this is not the case, and consider the assignments $(x_{\pi^{-1}(i)})_i$ and $(x'_i)_i$, and the first entry where vectors $B^{(x_{\pi^{-1}(i)})_i}$ and $B^{(x'_i)_i}$ differ. Because $(x'_i)_i$ is the Rawlsian assignment of $(\succeq_{\pi(i)})_i$, this entry in $B^{(x'_i)_i}$ is smaller than in the vector $B^{(x_{\pi(i)})_i}$. Note that the definition of the vector $B^x$ of each assignment $x$, does not look at the identities of the agents, thus: $B^{(x_i)_i}=B^{(x_{\pi(i)})_i}$.
But then $(x'_i)_i$ R-dominates $(x_{i})_i$ in problem $(\succeq_{\pi(i)})_i$, which is a contradiction.
\end{proof}

\subsubsection{Proof of Proposition \ref{prop:group_lower_invar}}
\label{app:group_lower_invar}
\begin{proof}
    Given a value $k\in\{1,\dots,n\}$, let $\succeq$ and $\succeq'$ denote two top-$k$-shuffled profiles. Denote by $x$ and $x'$ the Rawlsian assignments for profiles $\succeq$ and $\succeq'$, respectively. Assume, for contradiction, that $x$ and $x'$ do not give the same probabilities to some agent for an object she ranks in position $k+1, \ldots, n$, i.e., there exists an agent-object pair $(i,o) \in I \times O$ for which $r_{io}(\succeq_i) > k$ such that $x_{io} \neq x_{io}'$.

    Note that, because profiles $\succeq$ and $\succeq'$ are top-$k$-shuffled, the elements in vectors $B^x$ and $B^{x'}$ that correspond to the preferences $k+1, \ldots, n$ refer to the same agent-object pairs for both assignments $x$ and $x'$. Hence, if $x$ Rawlsian-dominates another assignment $y$ at profile $\succeq$ because $B^x$ is lexicographically smaller than $B^y$ in one of the elements of $B^y$ that correspond to preferences $k+1, \ldots, n$, then we know that $x$ would also Rawlsian-dominate $y$ at profile $\succeq'$. A similar argument holds when replacing $x$ by $x'$, and by replacing $\succeq$ by $\succeq'$ and $\succeq'$ by $\succeq$ in the previous sentence.
    

    We can consider two cases. First, despite $x$ and $x'$ having different assignment probabilities for some agent for an object she ranks in position $k+1, \ldots, n$, it could be that the elements of $B^x$ and $B^{x'}$ that correspond to preferences $k+1, \ldots, n$ are identical. In that case, assignment $z=\frac{1}{2}x+\frac{1}{2}x'$ Rawlsian-dominates both $x$ and $x'$, because $B^z$ is lexicographically smaller than $B^x=B^{x'}$ in one of the elements that correspond to preferences $k+1, \ldots, n$, contradicting the assumption that $x$ and $x'$ are the Rawlsian assignments at $\succeq$ and $\succeq'$, respectively.

    Otherwise, at least one of the elements of $B^x$ and $B^{x'}$ that correspond to preferences $k+1, \ldots, n$ differ. Note that, for each agent $i\in I$, and for all $\ell\in \{k+1, \dots, n\}$, the elements $b_i^x(\ell)$ and $b_i^{x'}(\ell)$ correspond to the cumulative assignment probabilities of being assigned to the same subset of objects. Hence, either $x$ Rawlsian-dominates $x'$ on both profiles $\succeq$ and $\succeq'$, or vice versa. Assume, without loss of generality, that $x$ Rawlsian-dominates $x'$. This contradicts the fact that $x'$ is the Rawlsian assignment at $\succeq'$.
\end{proof}

\subsection{A generalization of the Rawlsian assignment}\label{family}
Let $\sigma$ be an ordering of the set of integers $\{2,\ldots,n\}$, and let $\Sigma$ denote the set of all such orderings. We denote the $i$-th element of $\sigma$ by $\sigma_i$.\footnote{Note that we do not include the first preference in the orderings in $\Sigma$, because the total assignment probability of each agent always equals one.}
Similar to Section~\ref{rawls}, given an assignment $x$, an ordering $\sigma\in \Sigma$, and the vectors $(b_i^x)_{i\in I}$ containing the cumulative assignment probabilities, we define the vector $B^{x,\sigma}\in[0,1]^{n(n-1)}$ as follows.

\begin{enumerate}
    \item The first elements $(B^{x,\sigma}_1, \ldots, B^{x,\sigma}_n)$ are the elements $(b^x_1(\sigma_1), \ldots, b^x_n(\sigma_1))$ listed in a non-increasing order. 
    
    \item Elements  $(B^{x,\sigma}_{n+1}, \ldots, B^{x,\sigma}_{2n})$ are the elements $(b^x_1(\sigma_2), \ldots, b^x_n(\sigma_2))$ listed in a non-increasing order. 
    
    \item In general, elements $(B^{x,\sigma}_{(k-1)n+1}, \ldots, B^{x,\sigma}_{kn})$ for $k=1,\ldots, n-1$, are the elements $(b^x_1(\sigma_k), \ldots, b^x_n(\sigma_k))$ listed in a non-increasing order. 
\end{enumerate}

That is, the first elements of $B^{x,\sigma}$ are the probabilities with which each agent receives the objects with a rank between $\sigma_1$ and $n$, then the probabilities with which each agent receives the objects with a rank between $\sigma_2$ and $n$ and so on, and so forth. 
The Rawlsian assignment corresponds to $B^{x,\sigma_R}$, with $\sigma_R = (n, n-1, \ldots,2)$. Like in Section~\ref{rawls}, given two assignments $x$ and $y$, we compare the vectors $B^{x,\sigma}$ and $B^{y,\sigma}$ lexicographically.

\begin{definition}
Given two assignments $x$ and $y$ and an ordering $\sigma\in\Sigma$, $x$ \textbf{$\sigma$-dominates} $y$ if there is $j\in \{1,\ldots,n(n-1)\}$ such that $B^{x,\sigma}_j<B^{y,\sigma}_j$, and for all $i<j$, $B^{x,\sigma}_i=B^{y,\sigma}_i$.
\end{definition}

\begin{definition}
Given an ordering $\sigma\in\Sigma$, an assignment $x$ is $\bm{\sigma}$\textbf{-minimal} if it is not $\sigma$-dominated by any other assignment. \end{definition}

Similar to the results in Section~\ref{results}, a $\sigma$-minimal assignment is unique and sd-efficient, for any ordering $\sigma\in\Sigma$.

\begin{proposition}
    Each problem has a unique $\sigma$-minimal assignment, for any order $\sigma\in\Sigma$.
\end{proposition}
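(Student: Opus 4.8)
The plan is to follow the architecture of the proof of Proposition \ref{prop1}: existence is routine, and the substance lies in uniqueness, which I would obtain by an averaging argument. The catch is that the column-by-column bookkeeping used for the Rawlsian order does not transfer verbatim to a general $\sigma$, so the uniqueness step needs a different engine.

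For existence I would note that the set $X$ of assignments is compact and that, for each coordinate $j$, the map $x\mapsto B^{x,\sigma}_j$ is continuous (each $b^x_i(k)$ is linear in $x$, and the coordinate is an order statistic of finitely many such linear maps within its block, hence continuous). Lexicographic minimization over a compact set then succeeds: set $X_0=X$ and, for $j=1,2,\ldots$, let $v_j=\min_{x\in X_{j-1}}B^{x,\sigma}_j$ and $X_j=\{x\in X_{j-1}:B^{x,\sigma}_j=v_j\}$. Each $X_j$ is a nonempty compact set, so the final set is nonempty and coincides with the set of $\sigma$-minimal assignments. A useful byproduct is that every $\sigma$-minimal assignment carries the same vector $B^{x,\sigma}=(v_1,v_2,\ldots)$.

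For uniqueness, suppose $x\neq y$ are both $\sigma$-minimal; by the previous paragraph $B^{x,\sigma}=B^{y,\sigma}=:B$, so for each rank $k$ the multiset $\{b^x_i(k)\}_i$ equals $\{b^y_i(k)\}_i$ (the sorted blocks of $B$). Since the probability with which $i$ receives her rank-$k$ object equals $b^x_i(k)-b^x_i(k+1)$ (with $b^x_i(n+1)=0$), the cumulative values determine the row, so $x\neq y$ forces $\big(b^x_i(k)\big)_i$ and $\big(b^y_i(k)\big)_i$ to differ \emph{as vectors} for some rank $k\in\{2,\ldots,n\}$. Let $\sigma_{m^*}$ be the first such rank in the ordering $\sigma$, and set $z=\tfrac12(x+y)$, so $b^z_i(k)=\tfrac12\big(b^x_i(k)+b^y_i(k)\big)$ for all $i,k$. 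For every block $m<m^*$ the two cumulative vectors agree agent-by-agent, hence block $m$ of $B^{z,\sigma}$ equals block $m$ of $B$. The crucial step is block $m^*$: there $\big(b^x_i(\sigma_{m^*})\big)_i$ and $\big(b^y_i(\sigma_{m^*})\big)_i$ are distinct vectors with the same multiset of entries. Strict convexity of $v\mapsto\|v\|^2$ (which is constant on rearrangements) gives $\big\|(b^z_i(\sigma_{m^*}))_i\big\|^2=\|s\|^2-\tfrac14\big\|(b^x_i(\sigma_{m^*})-b^y_i(\sigma_{m^*}))_i\big\|^2<\|s\|^2$, where $s$ is the common sorted block; hence $(b^z_i(\sigma_{m^*}))_i$ is not a rearrangement of $s$, while a direct bound on top-$j$ partial sums shows it is majorized by $s$. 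For two non-increasing vectors, one majorized by the other and the two unequal, the first differing coordinate of the dominated one is strictly smaller; thus sorted block $m^*$ of $B^{z,\sigma}$ is lexicographically below that of $B$. As all earlier blocks agree, $B^{z,\sigma}<_{\mathrm{lex}}B=B^{x,\sigma}$, i.e.\ $z$ $\sigma$-dominates $x$, contradicting $\sigma$-minimality of $x$.

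I expect the main obstacle to be exactly the loss of the nested structure that made the Rawlsian proof work: for $\sigma_R=(n,n-1,\ldots,2)$ the cumulative $b_i(k)$ aggregates only ranks already processed, permitting a column-by-column argument, whereas for a general $\sigma$ the cumulatives entangle processed and unprocessed ranks. The majorization-plus-strict-convexity argument sidesteps this by working block-by-block in the cumulative probabilities, and the two auxiliary facts it uses—(i) the average of two vectors with equal sorted profile is strictly majorized by that profile unless the vectors coincide, and (ii) majorization of sorted vectors yields the lexicographic inequality at the first point of difference—are what I would isolate and verify as lemmas before assembling the contradiction.
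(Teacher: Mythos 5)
Your proof is correct, and it shares the paper's overall architecture: both obtain uniqueness by showing that the average $\tfrac{1}{2}(x+y)$ of two distinct $\sigma$-minimal assignments would $\sigma$-dominate them. Where you genuinely diverge is in the engine used to establish that domination. The paper's appendix proof is a single sentence deferring to ``a similar reasoning as the proof of Proposition \ref{prop1},'' and that earlier proof compares the matrices of rank-position probabilities column by column starting from the last column, then inspects the largest element of the first differing column. As you correctly observe, that bookkeeping exploits the nested structure of the Rawlsian order $(n,n-1,\ldots,2)$ — agreement of the already-processed columns is precisely what forces agreement of the cumulative values $b_i(k)$ — and it does not transfer verbatim to an arbitrary $\sigma$, since block $m$ of $B^{x,\sigma}$ aggregates all ranks weakly above $\sigma_m$ regardless of the order in which ranks are processed. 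Your substitute works directly with the cumulative vectors: equality of the vectors $B^{x,\sigma}=B^{y,\sigma}$ gives equal multisets in every block, the sorted block of the average is majorized by the common sorted profile, strict convexity of the squared norm rules out the average being a mere rearrangement, and majorization plus inequality forces a strict drop at the first differing coordinate. This closes exactly the step the paper glosses over (and, as a bonus, also tightens the ``largest element'' recursion left implicit in the proof of Proposition \ref{prop1} itself). Your compactness-based existence argument is likewise more careful than the paper's footnote, which argues by iterated improvement over a possibly infinite set of assignments. Both routes buy the same theorem; yours buys it with the key lemma isolated and actually proved.
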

\begin{proof}
    Given an ordering $\sigma\in\Sigma$, suppose $x$ and $y$ are two different assignments that are both $\sigma$-minimal, and have the same vector $B^\sigma$. Following a similar reasoning as the proof of Proposition~\ref{prop1}, we can show that the assignment $\frac{1}{2}(x+y)$ $\sigma$-dominates both $x$ and $y$, contradicting the fact that $x$ and $y$ are both $\sigma$-minimal assignments.
\end{proof}

\begin{proposition}
    For any ordering $\sigma\in\Sigma$, the $\sigma$-minimal assignment is sd-efficient.
\end{proposition}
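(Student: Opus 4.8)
The plan is to mirror the proof of Proposition~\ref{sd-eff}, replacing the Rawlsian order $(n,n-1,\ldots,2)$ by the arbitrary order $\sigma$, and to exploit the fact that an improving cycle lowers \emph{every} cumulative-probability entry, not merely those at one rank. Suppose, for contradiction, that $x$ is $\sigma$-minimal but not sd-efficient. By Lemma~3 of \cite{bogomolnaia2001new}, $x$ admits an improving cycle $(1,o_1,\ldots,K,o_K)$. Implementing it exactly as in the proof of Proposition~\ref{sd-eff} --- decreasing each $x_{ko_k}$ by $\epsilon=\min_{k}x_{ko_k}$ and increasing by $\epsilon$ the probability that each agent $k$ receives the strictly better object $o_{k-1}$ (with $o_0\equiv o_K$) --- produces a new assignment $y$.

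The key observation is that, because the cycle is improving, $y_i\succeq_i^{sd}x_i$ for every agent $i$ (strictly for the agents in the cycle, with equality for the others), and $y\neq x$. I would then restate this in terms of the vectors $b^x_i$: since $y_i\succeq_i^{sd}x_i$ is equivalent, by taking complements of the cumulative sums in the definition of stochastic dominance, to $b^y_i(t)\le b^x_i(t)$ for every rank $t$, we obtain that the \emph{entire} vector $b^y_i$ is componentwise no larger than $b^x_i$ for every agent $i$, with at least one strict inequality. This is where the generalization becomes transparent: unlike in the proof of Proposition~\ref{sd-eff}, there is no need to isolate a maximal rank $\bar r$ at which a change occurs, because the inequality holds simultaneously at all ranks, and hence the order $\sigma$ in which the blocks of $B^{\cdot,\sigma}$ are arranged is irrelevant.

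Next I would invoke the monotonicity of non-increasing sorting: if two real vectors satisfy $a'_i\le a_i$ coordinatewise, then their non-increasing rearrangements satisfy $a'_{(j)}\le a_{(j)}$ for every $j$, which follows from the identity $a_{(j)}=\max_{|S|=j}\min_{i\in S}a_i$. Applying this within each block $k$ to the collections $(b^y_i(\sigma_k))_{i\in I}$ and $(b^x_i(\sigma_k))_{i\in I}$ shows that every block of $B^{y,\sigma}$ is componentwise no larger than the corresponding block of $B^{x,\sigma}$. Therefore $B^{y,\sigma}_j\le B^{x,\sigma}_j$ for all $j$, with strict inequality in at least one coordinate. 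Consequently the first coordinate at which the two vectors differ is one where $B^{y,\sigma}$ is strictly smaller, so $y$ $\sigma$-dominates $x$, contradicting the $\sigma$-minimality of $x$.

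The only genuinely nontrivial ingredient is the monotonicity-of-sorting lemma; everything else is a direct re-reading of Proposition~\ref{sd-eff}. I expect the care to be concentrated in verifying the equivalence between $y_i\succeq_i^{sd}x_i$ and the coordinatewise inequality $b^y_i\le b^x_i$, and in confirming that at least one of these inequalities is strict --- which is guaranteed because the cycle strictly improves every agent it involves.
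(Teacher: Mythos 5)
Your proof is correct and follows essentially the same route as the paper's: implement the improving cycle, observe that every cumulative probability $b^y_i(t)$ is weakly below $b^x_i(t)$ with at least one strict inequality, and conclude that $B^{y,\sigma}$ lexicographically precedes $B^{x,\sigma}$ regardless of $\sigma$. You are in fact slightly more careful than the paper in making explicit the monotonicity-of-sorting step (that componentwise domination survives non-increasing rearrangement), which the paper's proof uses implicitly.
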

\begin{proof}
    Given any ordering $\sigma\in\Sigma$, suppose that assignment $x$ is 
    $\sigma$-minimal, but not sd-efficient. Then, there is an improving cycle $(1,o_1, 2,o_2, \ldots, K, o_K)$. We can assume wlog that $o_i\neq o_j$ for every $i,j$. Denote by $\epsilon>0$ the minimum of all the probabilities: $\epsilon=\min_{k=1,\ldots,K} x_{k o_k}$. Implement the cycle by decreasing each probability $x_{k o_k}$, for $k=1,\ldots,K$, by $\epsilon$, and increasing $x_{1 o_K}$, and $x_{k o_{k-1}}$, $k=2,\dots,K$ by the same share. We get a new random assignment $y$. We will show that it $\sigma$-dominates $x$, which is a contradiction.

    Because each agent is better off after we implement the improvement cycle, we know that $r_{k o_{k-1}} < r_{k o_k}$, for each $k=1,\ldots,K$ (we use the notation $o_{0}=o_K$). This implies that, for each agent $k = 1,\ldots,K$, the cumulative probability of being assigned to an object ranked $r_{k o_{k-1}}$-th or worse is equal in $x$ and $y$, i.e.,
    \begin{equation*}
        b^y_k(r_{k o_{k-1}}) = b^x_k(r_{k o_{k-1}}) + \epsilon - \epsilon = b^x_k(r_{k o_{k-1}}).
    \end{equation*}
    Similarly, the cumulative probability of being assigned to an object ranked $r_{k o_{k}}$-th or worse is lower in $y$ than in $x$, i.e.,
    \begin{equation*}
        b^y_k(r_{k o_{k}}) = b^x_k(r_{k o_{k}}) - \epsilon \Longleftrightarrow b^y_k(r_{k o_{k}}) < b^x_k(r_{k o_{k}}).
    \end{equation*}
    As a result, each element of the vector $B^{y,\sigma}$ associated with assignment $y$ is not larger than the corresponding element in the vector $B^{x,\sigma}$ associated to assignment $x$. Moreover, there are at least $K$ elements of $B^{y,\sigma}$ that are strictly smaller than the corresponding elements in $B^{x,\sigma}$. Hence, $y$ $\sigma$-dominates $x$.
\end{proof}

We define the $\sigma$-minimal rule as the rule that assigns the $\sigma$-minimal assignment to each problem. As for the Rawlsian rule, the $\sigma$-minimal rule satisfies anonymity.

\begin{proposition}
    The $\sigma$-minimal rule satisfies anonymity, for any ordering $\sigma\in\Sigma$.
\end{proposition}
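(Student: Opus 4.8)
The plan is to follow exactly the same strategy as the proof of Proposition~\ref{prop2} (anonymity of the Rawlsian rule), since the only feature of the Rawlsian construction that anonymity relies on is that the vector $B^x$ is built by listing probabilities in non-increasing order within each block, a procedure that discards all information about agents' identities. The key observation is that this \emph{label-blindness} is shared by the $\sigma$-minimal construction for \emph{every} ordering $\sigma\in\Sigma$, because reordering the blocks according to $\sigma$ rather than $\sigma_R=(n,n-1,\ldots,2)$ does not reintroduce any dependence on names: within each block we still sort the numbers $(b^x_1(\sigma_k),\ldots,b^x_n(\sigma_k))$ non-increasingly.

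First I would fix an ordering $\sigma\in\Sigma$, a preference profile $(\succeq_i)_{i\in I}$, and a permutation $\pi$. Let $(x_i)_i$ be the $\sigma$-minimal assignment of $(\succeq_i)_i$ and $(x'_i)_i$ the $\sigma$-minimal assignment of the permuted profile $(\succeq_{\pi(i)})_i$; the goal is to show $x'_i=x_{\pi^{-1}(i)}$ for all $i$. Suppose not. Then I would compare the two assignments $(x_{\pi^{-1}(i)})_i$ and $(x'_i)_i$ as feasible assignments of the permuted problem, and look at the first coordinate where $B^{(x_{\pi^{-1}(i)})_i,\sigma}$ and $B^{(x'_i)_i,\sigma}$ disagree. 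Since $(x'_i)_i$ is $\sigma$-minimal for the permuted problem, that coordinate must be strictly smaller in $B^{(x'_i)_i,\sigma}$.

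The crucial step is the identity $B^{(x_i)_i,\sigma}=B^{(x_{\pi(i)})_i,\sigma}$: permuting which agent holds which row leaves the multiset of cumulative probabilities in each block unchanged, and since each block of $B^{\cdot,\sigma}$ is obtained by sorting that multiset non-increasingly, the resulting vector is identical. Applying this with the inverse permutation gives $B^{(x_{\pi^{-1}(i)})_i,\sigma}=B^{(x_i)_i,\sigma}$, which is the $\sigma$-vector of the genuinely $\sigma$-minimal assignment of the \emph{original} problem. Combining this with the strict inequality above yields that $(x'_i)_i$ $\sigma$-dominates $(x_{\pi^{-1}(i)})_i$ in the permuted problem, contradicting $\sigma$-minimality of $(x_{\pi^{-1}(i)})_i$ there (equivalently, contradicting $\sigma$-minimality of $(x_i)_i$ in the original problem, transported by $\pi$). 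This contradiction forces $x'_i=x_{\pi^{-1}(i)}$, which is precisely anonymity.

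The only thing that needs care—and the place I would be most careful—is bookkeeping the index gymnastics between $\pi$ and $\pi^{-1}$ and verifying that a $\sigma$-minimal assignment of the original problem, once its rows are relabelled by $\pi$, is feasible and $\sigma$-minimal for the permuted problem; but this is routine given that feasibility (row and column sums equal to one) is permutation-invariant and that the block-sorting in the definition of $B^{\cdot,\sigma}$ erases labels. No step is genuinely harder than in Proposition~\ref{prop2}; the generalization to arbitrary $\sigma$ is immediate precisely because the sorting-within-blocks operation, not the particular choice of block order, is what delivers label-independence.
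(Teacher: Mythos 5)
Your proposal is correct and follows essentially the same route as the paper, whose proof of this proposition simply observes that $B^{x,\sigma}$ is independent of agents' identities and invokes the argument of Proposition~\ref{prop2} verbatim. Your write-up just makes explicit the label-blindness identity $B^{(x_i)_i,\sigma}=B^{(x_{\pi(i)})_i,\sigma}$ and the transport of the dominance relation back through $\pi$, which the paper leaves implicit.
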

\begin{proof}
The definition of vector $B^{x,\sigma}$ does not depend on agents' identities, only on the probabilities of the assignment $x$ and the order $\sigma$. Thus, the same proof as in Proposition \ref{prop2} (Appendix \ref{proof:prop2}) applies in this generalized setting. 
\end{proof}

\begin{remark}
    The family defined in this section changes only the order in which the cumulative-rank levels are considered. It therefore does not include the positive and prudent equality rules by \cite{duddy2025egalitarian}. Those rules can be obtained only by allowing a more general ordering of the individual sorted entries.
\end{remark}

\subsection{Relation with even-handed assignments by \texorpdfstring{\cite{duddy2025egalitarian}}{Duddy (2025}}\label{conal}

In this section, we define the egalitarian criterion of \textbf{even-handedness} that was proposed by \cite{duddy2025egalitarian}, and show that it is independent from the concept of Rawlsian assignments.

Let $t^x_i(k)$ be the probability with which agent $i$ receives her top $k$ objects under allocation $x_i$. That is:

$$
t^x_i (k)= \sum_{o \in O} \mathds{1} \{r_{io} \leq k \} x_{io}.
$$

We denote by $t_i^x$ the vector of agent $i$'s cumulative probabilities from the most to the least preferred object: $t_i^x = (t_i^x(1), t_i^x(2), \ldots, t_i^x(n) = 1)$. Note that there is a close connection between $t_i^x$ and $b_i^x$, the vector with agent $i$'s cumulative probabilities from the least to the most preferred object. Indeed, it holds that $t^x_i (k) = 1 - b^x_i (k + 1)$ for any agent $i$, for any assignment $x$, and for any $k = 1,\ldots, n - 1$. As such, all the definitions and descriptions of rules in this section can also be expressed in terms of the vector $(b_i^x)_{i\in I}$.

\citeauthor{duddy2025egalitarian}'s interpretation of Rawls' maximin principle is as follows. Suppose that in a given profile there exists an assignment $y$ and an agent $j$ such that $t_i^{y}(k) \geq t_j^{y}(k)$ for all values of~$k$. That is, agent $j$ has a weakly lower probability of being assigned to her top $k$ choices at $y$ than any other agent, for all values of $k$. Now assume that there exists another assignment $x$ at which the same agent $j$ is strictly worse off than at assignment $y$. That is, $t_j^{y}(k) \geq t_j^{x}(k)$ for all values of $k$, and there exists at least one value of $k$ for which this inequality is strict. This assignment $x$ should then be rejected by \citeauthor{duddy2025egalitarian}'s criterion, as it does not maximize the benefit of a least advantaged agent. 

\begin{definition}
    A random assignment $x$ is \textbf{even-handed} if there does not exist a random assignment $y$ and an agent $j\in I$ such that, for all $i\in I$, and for all $k=1,\ldots,n$,
    \[
    t^{y}_i(k) \geq t^{y}_j(k) \geq t^{x}_j(k),
    \]
    where the second inequality is strict for at least one value of $k$.
\end{definition}

Note that a given profile may admit multiple even-handed assignments. A rule is said to be even-handed if it selects an even-handed assignment for every profile.

\cite{duddy2025egalitarian} proposes three rules that are even-handed and sd-efficient. In particular, we will highlight two rules that are defined by lexicographically optimizing a differently ordered vector of the cumulative probabilities than the one that is used in the construction of the Rawlsian rule.

Denote by $(T_1^x(k), \ldots, T_n^x(k))$ the elements of $(t_1^x(k), \ldots, t_n^x(k))$ listed in non-decreasing order, for some value of $k$. That is, $T_1^x(k)$ denotes the smallest probability with which any agent is assigned to her top $k$ objects at $x$, $T_2^x(k)$ denotes the second-smallest such probability, etc. Note that $T_i^x(n) = 1$ for all agents $i$ and for all assignments $x\in X$. Using this notation, \cite{duddy2025egalitarian} proposes the following two rules. Both rules ensure guarantees on the minimum probabilities with which any agent is assigned to their top $k$ objects, but the order of importance in which these probabilities are considered differs. 

The positive equality rule first maximizes the lowest probability with which any agent is assigned to her \textit{first} choice, then maximizes the lowest probability with which any agent is assigned to her first two choices, etc.

\begin{definition}
    The \textbf{positive equality} rule selects the assignment $x$ that lexicographically maximizes the vector
    \[\left(T_1^x(1), T_1^x(2), \ldots, T_1^x(n), T_2^x(1), T_2^x(2), \ldots, T_2^x(n), \ldots, T_n^x(1), T_n^x(2), \ldots, T_n^x(n)\right).\]
\end{definition}

The prudent equality rule, on the other hand, first maximizes the lowest probability with which any agent is assigned to her first $n-1$ choices, then maximizes the lowest probability with which any agent is assigned to her first $n-2$ choices, etc. This is equivalent to first minimizing the probability with which any agent is assigned to her last choice, then minimizing the probability with which any agent is assigned to her last two choices, etc.

\begin{definition}
    The \textbf{prudent equality} rule selects the assignment $x$ that lexicographically maximizes the vector
    \[\left(T_1^x(n), T_1^x(n-1), \ldots, T_1^x(1), T_2^x(n), T_2^x(n-1), \ldots, T_2^x(1), \ldots, T_n^x(n), T_n^x(n-1), \ldots, T_n^x(1)\right).\]
\end{definition}

Note that the Rawlsian rule can also be expressed using this notation as the rule that selects the assignment $x$ that lexicographically maximizes the vector
\[\left(T_1^x(n), T_2^x(n), \ldots, T_n^x(n), T_1^x(n-1), T_2^x(n-1), \ldots, T_n^x(n-1), \ldots, T_1^x(1), T_2^x(1), \ldots, T_n^x(1)\right).\]





We conclude this section by observing that there is no logical relation between even-handed and Rawlsian assignments.

\begin{remark}
\label{remark:duddy1}
An even-handed assignment may not be Rawlsian. This follows immediately from the observation that some profiles admit multiple even-handed assignments, which is in contrast with the uniqueness of the Rawlsian assignment for any profile. 








\end{remark}

\begin{remark}
The Rawlsian assignment may violate even-handedness. Indeed, \citet[][Section 6]{duddy2025egalitarian} describes an example where the Rawlsian assignment violates the even-handedness criterion.  








\end{remark}

\subsection{Relation with strategyproofness axioms by \texorpdfstring{\cite{mennle2021partial}}{Mennle \& Seuken (2021)}}
\label{SP_axioms_Mennle_Seuken}
\cite{mennle2021partial} show that a rule is strategyproof if and only if it satisfies the axioms of swap monotonicity, upper invariance and lower invariance. Following the result by \cite{carroll2012local}, these three axioms only consider misreports in which the order of two consecutively ranked objects in a preference list is swapped. Upper invariance is defined in Definition \ref{def:upper_invar}. 
\begin{definition}
\label{def:swap_monoton}
    A rule $\phi$ is \textbf{swap monotonic} if, for all agents $i\in I$, all preference profiles $(\succeq_i,\succeq_{-i})$, and all misreports $\succeq_i'$ that are obtained by swapping two consecutively ranked objects $a$ and $b$, i.e., $a\succeq_i b$ but $b\succeq'_i a$, one of the following two conditions holds:
    \begin{itemize}
        \item either: $\phi_i(\succeq_i,\succeq_{-i}) = \phi_i(\succeq'_i,\succeq_{-i})$,
        \item or: $\phi_{i,b}(\succeq'_i,\succeq_{-i}) > \phi_{i,b}(\succeq_i,\succeq_{-i})$.
    \end{itemize}
\end{definition}

\begin{definition}
    \label{def:lower_invar}
    A rule $\phi$ is \textbf{lower invariant} if, for all agents $i\in I$, all preference profiles $(\succeq_i,\succeq_{-i})$, and all misreports $\succeq_i'$ that are obtained by swapping two consecutively ranked objects $a$ and $b$, i.e., $a\succeq_i b$ but $b\succeq'_i a$, it holds that $\phi_{ij}(\succeq_i,\succeq_{-i})=\phi_{ij}(\succeq'_i,\succeq_{-i})$ for all objects $j$ for which $b\succeq_i j$.
\end{definition}

The intuition behind Definition~\ref{def:swap_monoton} is that a rule is swap monotonic if increasing the preference for an object $b$ either increases the probability of being assigned to object $b$, or does not affect the agent's assignment probabilities at all. Moreover, a rule is upper (resp.\ lower) invariant if swapping the order of two consecutively ranked objects $a$ and $b$ does not affect the assignment probabilities of the objects that are more preferred than $a$ (resp.\ less preferred than $b$).

\begin{remark} {The Rawlsian rule satisfies lower invariance, because group lower invariance implies lower invariance (see Proposition \ref{prop:group_lower_invar}).}



\end{remark}

\begin{remark} {The Rawlsian rule violates swap monotonicity.}
    \begin{proof}
         Let $I=\{1,2,3\}, O = \{a,b,c\}$, and consider the preference profile $\succeq$, and the Rawlsian assignment $x(\succeq)$ at $\succeq$.
    \[
\begin{array}{c|ccc}
\succeq_1 & c & a & b \\ 
\succeq_2 & b & c & a \\ 
\succeq_3 & b & c & a \\ 
\end{array} \qquad \qquad
x(\succeq) = 
\begin{pmatrix}
1 & 0 & 0 \\
0 & \frac{1}{2}  & \frac{1}{2}  \\
0 &  \frac{1}{2} & \frac{1}{2}
\end{pmatrix}.
\]
If agent 1 alternatively reveals $\succeq'_1 = (c,b,a)$, swapping the order of objects $a$ and $b$, the Rawlsian assignment $x'(\succeq'_1, \succeq_{-1})$ becomes
\begin{equation*}
x'(\succeq'_1, \succeq_{-1}) = 
\begin{pmatrix}
\frac{1}{3} & 0 & \frac{2}{3} \\
\frac{1}{3} & \frac{1}{2}  & \frac{1}{6}  \\
\frac{1}{3} &  \frac{1}{2} & \frac{1}{6}
\end{pmatrix}.
\end{equation*}

The probability of being assigned to object $b$ for agent $1$ remains the same, while the probability of being assigned to object $a$ decreases, which violates the swap monotonicity axiom.

Note that this example also illustrates a violation of upper invariance: swapping the order of objects $a$ and $b$ causes the probability of agent $1$ being assigned to object $c$ to decrease.
    \end{proof}
\end{remark}

\subsection{Group lower invariance of other rules}
\label{app:group_lower_invar_other_rules}
In this section, we show that PS and RSD, as well as the positive and prudent equality rules by \cite{duddy2025egalitarian} all violate group lower invariance.

\begin{example}[PS violates group lower invariance]
We show that PS violates even the weaker notion of lower invariance, which implies a violation of group lower invariance. The following example is due to \cite{mennle2021partial}. Let $I=\{1,2,3\}$, $O=\{a,b,c\}$, and consider preference profile $\succeq$.
    \[
\begin{array}{c|ccc}
\succeq_1 & a & b & c  \\ 
\succeq_2 & b & a & c  \\ 
\succeq_3 & b & c & a \\ 
\end{array}
\]
Agent 1 receives probabilities $(\frac{3}{4}, 0, \frac{1}{4})$ for objects $a,b,c$, respectively, under PS at $\succeq$. However, if agent 1 reports $b\succeq'_1 a \succeq'_1 c$, profiles $\succeq$ and $(\succeq'_1, \succeq_{-1})$ are top-2-shuffled. In that case, she receives probabilities $(\frac{1}{2}, \frac{1}{3}, \frac{1}{6})$ for objects $a,b,c$, respectively, under PS at $\succeq'$. Hence, the probability with which agent 1 is assigned to object $c$ changes.\hfill$\qed$
\end{example}

\begin{example}[RSD violates group lower invariance]
Let $I=\{1,2,3,4\}$, $O=\{a,b,c,d\}$, and consider the preference profile $\succeq$, and the assignment probabilities under RSD at $\succeq$.
    \[
\begin{array}{c|cccc}
\succeq_1 & a & b & c &d \\ 
\succeq_2 & a & b & c &d \\ 
\succeq_3 & a & b & d &c \\ 
\succeq_4 & a & b & d &c \\ 
\end{array}
\qquad \qquad 
RSD(\succeq) = 
\begin{pmatrix}
\frac{1}{4} & \frac{1}{4} & \frac{5}{12} & \frac{1}{12}\\
\frac{1}{4} & \frac{1}{4} & \frac{5}{12} & \frac{1}{12}\\
\frac{1}{4} & \frac{1}{4} & \frac{1}{12} & \frac{5}{12} \\
\frac{1}{4} & \frac{1}{4} & \frac{1}{12} & \frac{5}{12}
\end{pmatrix}.
\]
However, if agents 1 and 2 change their preferences to $\succeq'_1$ and $\succeq'_2$, we obtain the following profile and corresponding RSD assignment:
    \[
\begin{array}{c|cccc}
\succeq'_1 & b & a & c &d \\ 
\succeq'_2 & a & c & b &d \\ 
\succeq_3 & a & b & d &c \\ 
\succeq_4 & a & b & d &c \\ 
\end{array}
\qquad \qquad 
RSD(\succeq'_1, \succeq'_2, \succeq_3, \succeq_4) = 
\begin{pmatrix}
0 & \frac{7}{12} & \frac{1}{4} & \frac{1}{6}\\
\frac{1}{3} & 0 & \frac{7}{12} & \frac{1}{12}\\
\frac{1}{3} & \frac{5}{24} & \frac{1}{12} & \frac{3}{8} \\
\frac{1}{3} & \frac{5}{24} & \frac{1}{12} & \frac{3}{8}
\end{pmatrix}.
\]
Note that profiles $\succeq$ and $(\succeq'_1, \succeq'_2, \succeq_3, \succeq_4)$ are top-3-shuffled. Nevertheless, the probability with which agent 1 is assigned to her fourth choice ($d$) changes.\hfill$\qed$
\end{example}

\begin{example}[Positive equality rule violates group lower invariance]
Let $I=\{1,2,3\}$, $O=\{a,b,c\}$, and consider the following two preference profiles
\[
\begin{array}{c|ccc}
\succeq_1 & a & b & c \\
\succeq_2 & a & b & c \\
\succeq_3 & a & c & b \\
\end{array}
\qquad
\text{and}
\qquad
\begin{array}{c|ccc}
\succeq_1 & a & b & c \\
\succeq_2 & a & b & c \\
\succeq'_3 & c & a & b \\
\end{array}.
\]
The positive equality rule selects assignments $x^+(\succeq)$ and $x^+(\succeq'_3, \succeq_{-3})$ at these profiles
\[
x^+(\succeq)=
\begin{pmatrix}
\frac13 & \frac12 & \frac16 \\
\frac13 & \frac12 & \frac16 \\
\frac13 & 0 & \frac23
\end{pmatrix}, \qquad
\qquad 
x^+(\succeq'_3, \succeq_{-3})=
\begin{pmatrix}
\frac12 & \frac12 & 0 \\
\frac12 & \frac12 & 0 \\
0 & 0 & 1
\end{pmatrix}.
\]
Note that profiles $\succeq$ and $(\succeq'_3, \succeq_{-3})$ are top-2-shuffled. Nevertheless, the probabilities with which agents 1 and 2 are assigned to their third choice ($c$) change from $\frac16$ to 0.\hfill$\qed$
\end{example}

\begin{example}[Prudent equality violates group lower invariance]
Let $I=\{1,2,3,4\}$, $O=\{a,b,c,d\}$, and consider the two preference profiles $\succeq$ and $\succeq'$
\[
\begin{array}{c|cccc}
\succeq_1 & a & b & d & c \\
\succeq_2 & b & a & c & d \\
\succeq_3 & b & a & c & d \\
\succeq_4 & b & c & d & a \\
\end{array}
\qquad
\text{and}
\qquad
\begin{array}{c|cccc}
\succeq'_1 & b & a & d & c \\
\succeq'_2 & a & b & c & d \\
\succeq'_3 & a & b & c & d \\
\succeq'_4 & c & b & d & a \\
\end{array}.
\]
These profiles are top-$2$-shuffled: for every agent, only the top two objects
are reshuffled, while positions $3$ and $4$ are unchanged.

The prudent equality rule selects assignments $x^P(\succeq)$ and $x^P(\succeq')$ at these profiles
\[
x^P(\succeq)=
\begin{pmatrix}
\frac12 & 0 & 0 & \frac12 \\
\frac14 & \frac13 & \frac{5}{12} & 0 \\
\frac14 & \frac13 & \frac{5}{12} & 0 \\
0 & \frac13 & \frac16 & \frac12
\end{pmatrix}, \qquad \qquad
x^P(\succeq')=
\begin{pmatrix}
0 & \frac12 & 0 & \frac12 \\
\frac12 & \frac14 & \frac14 & 0 \\
\frac12 & \frac14 & \frac14 & 0 \\
0 & 0 & \frac12 & \frac12
\end{pmatrix}.
\]
For agents $2$ and $3$, object $c$ is ranked in position $3$ in both profiles. However,
\[
x^P_{2c}(\succeq)=x^P_{3c}(\succeq)=\frac{5}{12},
\qquad
x^P_{2c}(\succeq')=x^P_{3c}(\succeq')=\frac14.
\]    \hfill$\qed$
\end{example}

\subsection{Relation with consistency and non-bossiness}
\label{app:cons_non_boss}
In this section, we formally define consistency and prove that the Rawlsian rule satisfies consistency and non-bossiness. We conclude by providing examples in which the positive and prudent equality rules by \cite{duddy2025egalitarian} violate both properties.

An allocation rule is \textbf{consistent} if the assignment it outputs for each problem ``agrees'' with the assignment it outputs for every reduced problem obtained when some agents leave with their initial assignments \citep{thomson2012axiomatics}. Consistency has been argued to contribute to the fairness, robustness, stability, and reinforcement of a mechanism, and we refer to \cite{thomson2012axiomatics} for a detailed discussion.

The consistency property has received extensive attention in the context of deterministic allocation rules \citep[see, e.g.,][]{ergin2000consistency, ehlers2007consistent}. We use the definition by \cite{han2016consistency} for a probabilistic interpretation of consistency.\footnote{Another probabilistic interpretation of consistency was proposed by \cite{chambers2004consistency}. He defined a rule to satisfy probabilistic consistency if, when an agent leaves, we randomly select the object with which she leaves according to her assignment probabilities. However, this extension of consistency to probabilistic rules is too strong for our purposes, as it characterizes, in combination with equal treatment of equals, the uniform rule, which gives identical probabilities to each agent-object pair. } It requires that when an agent leaves with her assignment probabilities, the rule assigns the same assignment probabilities for the remaining agents and the remaining capacities of the objects. Because this implies that objects need not have unit capacity after a set of agents leaves, we adopt a generalized notion of a problem instance. 

Given a capacity vector $(q_o)_{o\in O}$, we denote an instance as $e=(I,O,\succeq, q)$. Moreover, given an agent subset $S\subseteq I$ and an allocation $x$, we define the restricted instance $e \vert_{S,x}$ that is obtained when the agents in $S$ leave with their assignment probabilities as the following instance:
\begin{enumerate}[label=(\roman*)]
    \item $I \vert_{S,x}=I \setminus S$,
    \item $O\vert_{S,x}=O$,
    \item $\succeq \vert_{S,x}=(\succeq_i)_{i \in I\setminus S}$,
    \item $q\vert_{S,x}$ is such that the capacity of each object $o \in O$ is $q_o - \sum_{i\in S}x_{io}$.
\end{enumerate}

 Note that our definitions of random assignments naturally extend.\footnote{A random assignment for an instance $e=(I,O,\succeq, q)$ is then defined by $x=(x_i)_{i\in I}$, where each $x_i$ is a probability distribution over $O$, and for every $o\in O$, $\sum_{i\in I}x_{io} \leq q_o$. The generalized Birkhoff-von Neumann theorem \citep{budish2013designing} states that any random assignment for this generalized instance can be written as a convex combination over deterministic assignments.} 

 The following definition extends the consistency notion to probabilistic settings.

 \begin{definition}
 \label{def:cons}
    A rule $\phi$ satisfies \textbf{consistency} if, for each instance $e = (I,O,\succeq, q)$, where $q_o = 1$ for each $o\in O$, it holds for each $S\subseteq I$, and for each $i\in I\setminus S$, that 
    \begin{equation*}
        \phi_i(e\vert_{S,\phi(e)})=\phi_i(e).
    \end{equation*}
\end{definition}

Note that we will restrict our attention to instances where the sum of the capacities in the original market is equal to the number of agents. That is, we will only consider instances $(I,O,\succeq, q)$ where $\sum_{o\in O} q_o = |I|$. Definition \ref{def:cons} can be reformulated for arbitrary capacities in the original market.

We continue to show that the Rawlsian rule satisfies consistency and non-bossiness. \medskip

\noindent \textbf{Proof of Proposition \ref{prop:cons_non_boss}.}
\begin{proof}

We first prove that the Rawlsian rule satisfies consistency.

Consider an instance $e=(I,O,\succeq, q)$, with $q_o = 1$ for each $o\in O$. Let $x = \phi(e)$ denote the assignment probabilities by the Rawlsian rule in the original instance $e$. Consider a subset $S\subseteq I$ of the agents, and the restricted instance $e\vert_{S,x}$. Let $x' = \phi(e\vert_{S,x})$ denote the assignment probabilities of the agents in $I\setminus S$ in the modified instance $e\vert_{S,x}$.

Assume, for contradiction, that the assignment probabilities in $x$ and $x'$ differ for the agents in $I\setminus S$. We denote the assignment probabilities of a subset $S\subseteq I$ of the agents in assignment $x$ by $x_S$.

We know that $x'_{I\setminus S}$ R-dominates $x_{I\setminus S}$ in instance $e\vert_{S,x}$. We define the following aggregated assignment:
\begin{equation*}x'_i = 
    \begin{cases}
    x'_i &\text{if } i\in I\setminus S,\\
    x_i &\text{otherwise.}
\end{cases}
\end{equation*}
By construction of the capacities in the modified instance $e\vert_{S,x}$, $x'$ is a feasible assignment in instance $e$. The Rawlsian-dominance relationship between two assignments is not affected when extending both assignments by adding the same agents with the same assignment probabilities and preferences. Because the preferences of the agents in $I\setminus S$ are identical in instances $e\vert_{S,x}$ and $e$, this implies that $x'$ R-dominates $x$ in instance $e$. However, this is in contradiction with the fact that $x$ is the unique Rawlsian assignment in instance~$e$.

It is well-known that consistency implies non-bossiness \citep{thomson2016non}. For completeness, we include a proof of this result.

Consider an instance $e=(I,O,\succeq, q)$, where $q_o = 1$ for each $o\in O$. Consider an agent $i\in I$ who manipulates by reporting $\succeq'_i$, resulting in instance $e'=(I,O,(\succeq'_i, \succeq_{-i}), q)$, but receives the same probabilities by the Rawlsian rule $\phi$, i.e., $\phi_i(e) = \phi_i(e')$. Then, the restricted instances after removing $i$ from either instance are identical, i.e., $e|_{\{i\},\phi(e)} = e'|_{\{i\},\phi(e')}$. Hence, the Rawlsian rule finds the same probabilities: $\phi(e|_{\{i\},\phi(e)}) = \phi(e'|_{\{i\},\phi(e')})$. Because the Rawlsian rule is consistent, it holds for each agent $j\in I\setminus \{i\}$ that $\phi_j(e) = \phi_j(e|_{\{i\},\phi(e)}) = \phi_j(e'|_{\{i\},\phi(e')}) = \phi_j(e')$.
\end{proof}

We conclude this section by showing that the positive and prudent equality rules by \cite{duddy2025egalitarian} violate non-bossiness. Hence, they also violate consistency.

\begin{example}[Positive equality violates non-bossiness] Let $I=\{1,2,3,4\}$, $O=\{a,b,c,d\}$, and consider the following preference
profile $\succeq$, and the assignment $x^+(\succeq)$ by the positive equality rule at $\succeq$.
\[
\begin{array}{c|cccc}
\succeq_1 & b & a & c & d\\
\succeq_2 & d & a & b & c\\
\succeq_3 & b & c & d & a\\
\succeq_4 & d & a & c & b
\end{array}
\qquad \qquad
x^+(\succeq)=
\begin{pmatrix}
\frac13 & \frac12 & \frac16 & 0\\
\frac13 & 0 & \frac16 & \frac12\\
0 & \frac12 & \frac12 & 0\\
\frac13 & 0 & \frac16 & \frac12
\end{pmatrix}.
\]
Consider the following profile $(\succeq'_3, \succeq_{-3})$, in which agent 3 swaps the order of objects $c$ and $d$. The positive equality rule finds the following assignment $x^+(\succeq'_3, \succeq_{-3})$ at this profile.
\[
\begin{array}{c|cccc}
\succeq_1 & b & a & c & d\\
\succeq_2 & d & a & b & c\\
\succeq'_3 & b & d & c & a\\
\succeq_4 & d & a & c & b
\end{array}
\qquad \qquad
x^+(\succeq'_3, \succeq_{-3})=
\begin{pmatrix}
\frac14 & \frac12 & \frac14 & 0\\
\frac12 & 0 & 0 & \frac12\\
 0 & \frac12 & \frac12 & 0\\
\frac14 & 0 & \frac14 & \frac12
\end{pmatrix}.
\]
Note that the allocation of agent 3 is unaffected by her manipulation, whereas the allocations of agents 1, 2, and 4 are different at $\succeq$ and $(\succeq'_3, \succeq_{-3})$. Hence, the positive equality rule violates non-bossiness.
\hfill $\qed$
\end{example}

\begin{example}[Prudent equality violates non-bossiness] Let $I=\{1,2,\ldots,7\}$, $O=\{a,b,\ldots,g\}$, and consider the following preference
profile $\succeq$.\footnote{This example is inspired by the instance of \citet[][Section 6]{duddy2025egalitarian}, but three dummy agents 5, 6, and 7 are added, as well as three dummy objects $e$, $f$, and $g$.} The assignment $x^P(\succeq)$ by the prudent equality rule at $\succeq$ is displayed below, where the probabilities for the objects are ranked in decreasing order of preference.
    \[
\begin{array}{c|ccccccc}
\succeq_1 & a & b & c & \ldots\\ 
\succeq_2 & a & b & c & \ldots\\ 
\succeq_3 & a & d & c  & \ldots\\
\succeq_4 & d & a & b  & \ldots\\ 
\succeq_5 & a & e & b  & \ldots\\ 
\succeq_6 & a & f & b  & \ldots\\ 
\succeq_7 & a & g & b  & \ldots\\ 
\end{array}\qquad \qquad
\begin{array}{c|cccc}
x^P_1(\succeq) & \frac16 & \frac12 & \frac13 & \ldots\\ 
x^P_2(\succeq) & \frac16 & \frac12 & \frac13 & \ldots\\
x^P_3(\succeq) & \frac23 & 0 & \frac13 & \ldots\\
x^P_4(\succeq) & 1 & 0 & 0  & \ldots\\ 
x^P_5(\succeq) & 0 & 1 & 0  & \ldots\\ 
x^P_6(\succeq) & 0 & 1 & 0  & \ldots\\ 
x^P_7(\succeq) & 0 & 1 & 0  & \ldots\\ 
\end{array}
\]
Consider the following profile $\succeq'=(\succeq'_7, \succeq_{-7})$, in which agent 7 swaps the order of objects $a$ and $g$. The prudent equality rule finds the following assignment $x^P(\succeq')$ at this profile, where the objects are ranked in decreasing order of preference.
 \[
\begin{array}{c|ccccccc}
\succeq_1 & a & b & c & \ldots\\ 
\succeq_2 & a & b & c & \ldots\\ 
\succeq_3 & a & d & c  & \ldots\\
\succeq_4 & d & a & b  & \ldots\\ 
\succeq_5 & a & e & b  & \ldots\\ 
\succeq_6 & a & f & b  & \ldots\\ 
\succeq'_7 & g & a & b  & \ldots\\ 
\end{array}\qquad \qquad
\begin{array}{c|cccc}
x^P_1(\succeq') & \frac13 & \frac13 & \frac13 & \ldots\\ 
x^P_2(\succeq') & \frac13 & \frac13 & \frac13 & \ldots\\
x^P_3(\succeq') & \frac13 & \frac13 & \frac13 & \ldots\\
x^P_4(\succeq') & \frac23 & 0 & \frac13  & \ldots\\ 
x^P_5(\succeq') & 0 & 1 & 0  & \ldots\\ 
x^P_6(\succeq') & 0 & 1 & 0  & \ldots\\ 
x^P_7(\succeq') & 1 & 0 & 0  & \ldots\\ 
\end{array}
\]
Note that the allocation of agent 7 is unaffected by her manipulation, whereas the allocations of agents 1, 2, 3, and 4 are different at $\succeq$ and $\succeq'$. Hence, the prudent equality rule violates non-bossiness. 

The intuition behind this change is that the manipulation of agent 7 controls which cumulative probabilities should be maximized first. At profile $\succeq$ the prudent equality rule will first maximize the fourth-lowest probability with which any agent is assigned to her first two choices, thus enforcing agents 4-7 to be assigned to their first two choices with probability 1 (implying that $x^P_{4b}(\succeq) = 0$). At profile $\succeq'$, however, the prudent equality rule will first maximize the third-lowest probability with which any agent is assigned to her first choice, thus enforcing that agents 1-3 are assigned to their first choice with at least $\frac13$ (implying that $x^P_{1a}(\succeq')=x^P_{2a}(\succeq')=x^P_{3a}(\succeq') = \frac13$). This, in turn, implies that agents 1-2 must also be assigned to their second choice $b$ with $\frac13$, and hence agent 4 must also be assigned to her third choice $b$ with  $\frac13$.
\hfill $\qed$
\end{example}

\subsection{The MTAV rule.}\label{MTAV}
In this section, we define the MTAV following \cite{Paleo2021}. Let $\mathcal{M}$ be the set of deterministic assignments. Each assignment in $\mathcal{M}$ is represented by an $n \times n$ matrix $M$ where $m_{ij} =1$ if, and only if, agent $i$ receives object $j$. Given two matrices $M$ and $M'$ in $\mathcal{M}$, we denote by $M \odot M'$ the matrix where each element is the product of the corresponding elements of $M$ and $M'$: $(M \odot M')_{ij}=(m_{ij})(m'_{ij})$. Also, denote by $\max(M)$ and $sum(M)$ the maximum and the sum of the elements of $M$, respectively. 

Given a problem $(\succeq)$, define the matrix $P$ with agents' preferences, where $P_{ij}$ is the rank of object $j$ in agent $i$'s preferences (equivalently, $P_{ij}=r_{ij}$). The \textbf{MTAV} is defined as follows.

\begin{enumerate}
    \item For each $M \in \mathcal{M}$ compute $P\odot M$.
    \item Compute $\max (P\odot M)$.
    \item Select the assignments that minimize $\max (P\odot M)$.
    \item Among the assignments selected in Step 3, select those that minimize $sum (P\odot M)$.
    \item If multiple assignments are selected in the last step, take one assignment at random. 
\end{enumerate}

To implement MTAV, we have used its publicly available code (\url{https://github.com/eze91/MTAV}). In the last step, this code will simply generate one of the multiple assignments, rather than generating all of them and selecting each with equal probability. Note that both approaches are not equivalent, as discussed in detail in \cite{demeulemeester2023fair}.

\section{Proof of Proposition \ref{large}}\label{large_proof}

We prove Proposition~\ref{large} and refer to Section \ref{proofs_large2} of the Online Appendix for the proof of Proposition~\ref{prop:LB_Rawls}.

\begin{proof}

Consider a complete bipartite graph $(I,O,E)$ where one set of nodes $I$ is the set of agents, and the other set of nodes $O$ is the set of objects. There is an edge in $E$ between an agent and an object if the agent ranks the object. When we choose exactly $k$ edges for each agent, this is the same as considering the $k$ most preferred objects of each agent.

For a given preference profile $\succeq$, denote by $\tilde{\mathcal{G}}(n,k)$ the resulting bipartite graph when we consider the $k$ most preferred objects of each agent. The maximum rank of the Rawlsian assignment, $r^{\text{Rawls}}_{max}$, is equal to some value $k\in\mathbb{N}$ when there exists a perfect matching in $\tilde{\mathcal{G}}(n,k)$, but not in $\tilde{\mathcal{G}}(n,k-1)$. Indeed, if there does not exist a perfect matching in $\tilde{\mathcal{G}}(n,k-1)$, then not all agents can be assigned to an object they rank $(k-1)$-th or better. Because, in our model, the number of agents equals the number of objects and agents have complete preferences, this implies that there must exist an agent who is assigned to an object ranked $k$-th or worse. As such, the maximum rank of the Rawlsian assignment is higher than $k-1$.

Additionally, let $\tilde{X}_{n,k}$ be the indicator of the event that $\tilde{\mathcal{G}}(n,k)$ contains a perfect matching, i.e., 
    
        $$
        \tilde X_{n,k} =  \begin{cases}
      1 & \text{ if } \tilde{\mathcal{G}}(n,k) \text{ has a perfect matching}, \\
      0        & \text{ otherwise.}
    \end{cases}
        $$
        
Let
\[
A_{n,k}:=\{\widetilde X_{n,k}=1\}
\]
be the event that the graph $\widetilde{\mathcal{G}}(n,k)$ has a perfect matching. Since $\widetilde{\mathcal{G}}(n,k)$ contains all edges from $\widetilde{\mathcal{G}}(n,k-1)$ plus additional edges, we have
\[
A_{n,k-1}\subseteq A_{n,k}.
\]

Therefore, the event $A_{n,k}$ can be decomposed into two disjoint parts:
\[
A_{n,k}
=
\big(A_{n,k}\cap A_{n,k-1}\big)
\cup
\big(A_{n,k}\cap A_{n,k-1}^c\big).
\]
Since $A_{n,k-1}\subseteq A_{n,k}$, the first term satisfies
\[
A_{n,k}\cap A_{n,k-1}=A_{n,k-1}.
\]
Hence,
\[
A_{n,k}
=
A_{n,k-1}
\cup
\big(A_{n,k}\cap A_{n,k-1}^c\big),
\]
where the two events on the right-hand side are disjoint. Taking probabilities gives
\[
\mathbb P(A_{n,k})
=
\mathbb P(A_{n,k-1})
+
\mathbb P(A_{n,k}\cap A_{n,k-1}^c).
\]
Rearranging,
\[
\mathbb P(A_{n,k}\cap A_{n,k-1}^c)
=
\mathbb P(A_{n,k})
-
\mathbb P(A_{n,k-1}).
\]

Therefore,
\begin{equation}
    \mathbb P(r^{\mathrm{Rawls}}_{\max}=k)
=
\mathbb P(\widetilde X_{n,k}=1)
-
\mathbb P(\widetilde X_{n,k-1}=1). \label{eq:rank}
\end{equation}

Additionally, denote by $\mathcal{G}(n,p)$ the random bipartite graph with $n$ nodes in each set, and in which each edge is independently selected with probability $p$. Similarly to $\tilde{\mathcal{G}}(n,k)$, let $X_{n,p}$ denote the event that $\mathcal{G}(n,p)$ contains a perfect matching. 
        
Denote the minimum degree of a graph~$\mathcal{G}$ by $\delta(\mathcal{G})$. \citeapp{erdHos1966existence} and \citeapp{erdHos1968random} showed for various types of graphs that, in the limit (when $n\rightarrow +\infty$), the probability that a perfect matching exists is equal to the probability that the minimum degree of the graph is at least one (no isolated vertices). Because half of the nodes in $\tilde{\mathcal{G}}(n,k)$ have a guaranteed degree of~$k$, the probability of having a minimum degree of at least one is not smaller in $\tilde{\mathcal{G}}(n,k)$ than in $\mathcal{G}(n,\frac{k}{n})$. As a consequence, it holds that, for every $k \in \{1,\ldots,n\}$:
        
     \begin{equation}
            \lim_{n\to\infty}\mathbb{P}(\tilde{X}_{n,k} = 1) = \lim_{n\to\infty} \mathbb{P}(\delta(\tilde{\mathcal{G}}(n,k))\geq 1) \geq \lim_{n\to\infty} \mathbb{P}(\delta(\mathcal{G}(n,\frac{k}{n})\geq 1)) = \lim_{n\to\infty}\mathbb{P}(X_{n,\frac{k}{n}} = 1).\label{eq:linkErdos}
        \end{equation}


        The expected maximum rank of the Rawlsian assignment is equal to:
        \begin{align}
         \mathbb{E}(r_n^{max}) &= \sum_{k=1}^n k\cdot \mathbb{P}(r^{max}_n = k).
        \end{align}
       By applying Equation~(\ref{eq:rank}), we obtain that
        \begin{align}
            \mathbb{E}(r^{\text{Rawls}}_{max}) &= \sum_{k=1}^n k\cdot \left(\mathbb{P}(\tilde{X}_{n,k} = 1) - \mathbb{P}(\tilde{X}_{n,k-1} = 1)\right)\\
            &=n - \sum_{k=1}^{n-1} \mathbb{P}(\tilde{X}_{n,k} = 1), \label{fact1}
        \end{align}
        where $\mathbb{P}(\tilde{X}_{n,0} = 1)=0$, and the second equality follows from the fact that $\mathbb{P}(\tilde{X}_{n,n} = 1) = 1$.

We are interested in evaluating this expression in the limit. By applying Equation~(\ref{eq:linkErdos}), we can bound the limit of the expected maximum rank by the Rawlsian assignment in terms of the probability that a perfect matching exists in the well-studied class of bipartite random graphs~$\mathcal{G}(n,p)$ in which each edge is selected with uniform probability~$p$.
        \begin{align}
            \lim_{n\to\infty}\mathbb{E}(r^{\text{Rawls}}_{max}) &= \lim_{n\to\infty}\left(n - \sum_{k=1}^{n-1} \mathbb{P}(\tilde{X}_{n,k} = 1)\right)\\
            &\leq \lim_{n\to\infty}\left(n - \sum_{k=1}^{n-1} \mathbb{P}(X_{n,\frac{k}{n}} = 1)\right).\label{eq:rank0}
        \end{align}

After some manipulations, the expression of which we take the limit can be written as:
\begin{align}
    &n-\sum_{k=1}^{n-1}\mathbb{P}(X_{n,\frac{k}{n}} = 1)= n-\sum_{k=1}^{n-1} (1 - (1-\mathbb{P}(X_{n,\frac{k}{n}} = 1)))\\
    &= n-\sum_{k=1}^{n-1}1 + \sum_{k=1}^{n-1}(1-\mathbb{P}(X_{n,\frac{k}{n}} = 1))\\
    &= 1 + \sum_{k=1}^{n-1}(1-\mathbb{P}(X_{n,\frac{k}{n}} = 1))
    \label{eq:last}
\end{align}

We know from \citeapp{erdHos1968random} that 
        \begin{equation}
\label{eq:prob_perfect_proof}
    \lim_{n\to\infty}(1-\mathbb{P}(X_{n,\frac{\ln(n)+c_n}{n}} = 1)) =\\
    \begin{cases}
        1 &  \text{ if } c_n  \to -\infty\\
        1-e^{-2e^{-c}} & \text{ if } c_n \to c\\
        0 & \text{ if } c_n \to \infty
    \end{cases}.
\end{equation}

We can then rewrite Equation (\ref{eq:rank0}) as
\begin{align}
    \lim_{n\to\infty}\mathbb{E}(r^{\text{Rawls}}_{max}) &\leq 1 + \lim_{n\to\infty}\left(\sum_{k=1-\ln(n)}^{n-1 - \ln(n)} (1-\mathbb{P}(X_{n,\frac{\ln(n)+k}{n}}=1))\right)\\
    &= 1 + \lim_{n\to\infty}\left(\sum_{k=1-\ln(n)}^{\lfloor\ln(n)\rfloor - \ln(n) -1} (1-\mathbb{P}(X_{n,\frac{\ln(n)+k}{n}}=1))\right. \notag\\ &\quad \quad \left.+ \sum_{k=\lfloor\ln(n)\rfloor - \ln(n)}^{n-1-\ln(n)}(1-\mathbb{P}(X_{n,\frac{\ln(n)+k}{n}}=1))\right)\label{eq:start}
\end{align} 
Note that these expressions are a bit unfamiliar because the indices of the summations are not integer: the interpretation is that, starting from the lower index, we sum for values of $k$ that increase with step size one. 

If we can show that the limit of the first summation is well-defined, and that second summation is a constant, we can rewrite this expression as:\footnote{In fact, we will show that this expression is similar to $\lim_{n\to\infty}(\ln(n) + constant) = \lim_{n\to\infty}(\ln(n)) + constant$.}
\begin{align}
    &1+  \lim_{n\to\infty}\left(\sum_{k=1-\ln(n)}^{\lfloor\ln(n)\rfloor - \ln(n)-1} (1-\mathbb{P}(X_{n,\frac{\ln(n)+k}{n}}=1))\right) \notag\\ &\:
+\lim_{n\to\infty}\left( \sum_{k=\lfloor\ln(n)\rfloor - \ln(n)}^{n-1-\ln(n)}(1-\mathbb{P}(X_{n,\frac{\ln(n)+k}{n}}=1))\right)\label{eq:result}.
\end{align}

Because probabilities are non-negative, an easy bound on the limit of the first summation would simply be: 
\begin{align}
     \sum_{k=1-\ln(n)}^{\lfloor\ln(n)\rfloor - \ln(n)-1} (1-\mathbb{P}(X_{n,\frac{\ln(n)+k}{n}}=1)) \leq  \sum_{k=1-\ln(n)}^{\lfloor\ln(n)\rfloor - \ln(n)-1} 1\notag\\
     =\lfloor\ln(n)\rfloor-\ln(n) - 1 - (1-\ln(n)) + 1 = \lfloor\ln(n)\rfloor -1.\label{eq:easybound}
\end{align}
For the limit of the second summation, we can write that
\begin{align}
    \lim_{n\to\infty}\left( \sum_{k=\lfloor\ln(n)\rfloor - \ln(n)}^{n-1-\ln(n)}(1-\mathbb{P}(X_{n,\frac{\ln(n)+k}{n}}=1))\right) &\leq \lim_{n\to\infty}\left( \sum_{k=-1}^{n-1-\lceil\ln(n)\rceil}(1-\mathbb{P}(X_{n,\frac{\ln(n)+k}{n}}=1))\right)\\
    &\leq \lim_{n\to\infty}\left( \sum_{k=-1}^{+\infty}(1-\mathbb{P}(X_{n,\frac{\ln(n)+k}{n}}=1))\right)\label{eq:second}
\end{align}
Because $1-\mathbb{P}(X_{n,\frac{\ln(n)+k}{n}}=1)$ is a decreasing function in $k$, we find an upper bound on this expression by shifting the values for which this function is evaluated downwards with a positive value of $\lceil\ln(n)\rceil - \ln(n)$ (first inequality). 
Because $1-\mathbb{P}(X_{n,\frac{\ln(n)+k}{n}}=1))\geq0$, we obtain an upper bound by changing the ending index of the summation from $n-1-\lceil\ln(n)\rceil$ to $+\infty$ (second inequality). 

Now observe from (\ref{eq:prob_perfect_proof}) that $\lim_{n\to\infty}(1-\mathbb{P}(X_{n,\frac{\ln(n)+k}{n}}=1)) = 1-e^{-2e^{-k}}$ for a constant value of $k$. If we can show that 
\begin{equation}
    \sum_{k=-1}^{+\infty}1-e^{-2e^{-k}}
\end{equation} converges to a constant, then we can bring the limit inside the summation in Expression (\ref{eq:second}). We will use the ratio test to show the convergence of the series $a_k = 1-e^{-2e^{-k}}$.
\begin{align}
    \lim_{k\to\infty}\left|\frac{a_{k+1}}{a_k}\right| &= \lim_{k\to\infty}\left|\frac{1-e^{-2e^{-(k+1)}}}{1-e^{-2e^{-k}}}\right| = \lim_{k\to\infty}\frac{1-e^{-2e^{-(k+1)}}}{1-e^{-2e^{-k}}},
\end{align}
where the last equality holds because both the numerator and the denominator are positive. We can simplify the numerator as:
\begin{align}
    1-e^{-2e^{-(k+1)}} = 1-e^\frac{-2e^{-k}}{e}
\end{align}
For large values of $k$, both the numerator and the denominator go to zero. We can use L'Hôpital's rule, which states that
\begin{align}
    \lim_{k\to\infty}\frac{1-e^{-2e^{-(k+1)}}}{1-e^{-2e^{-k}}} &= \lim_{k\to\infty}\frac{\frac{d}{dk}\left(1-e^{-2e^{-(k+1)}}\right)}{\frac{d}{dk}\left(1-e^{-2e^{-k}}\right)}\\
    &= \lim_{k\to\infty}\frac{-2e^{-k-1}e^{-2e^{-k-1}}}{-2e^{-k}e^{-2e^{-k}}}\\
    &= \lim_{k \to \infty} \frac{e^{-k-1}}{e^{-k}} \cdot \frac{e^{-2e^{-k-1}}}{e^{-2e^{-k}}}\\
&= \lim_{k \to \infty} e^{-1} \cdot e^{2(e^{-k} - e^{-k-1})}.
\end{align}
As $e^{-1}$ is a constant term, and because the exponent $2(e^{-k} - e^{-k-1})  $ goes to zero for large values of $k$, this is equivalent to:
\begin{align}
    \lim_{k \to \infty} e^{-1} \cdot e^{2(e^{-k} - e^{-k-1})} = e^{-1} \cdot \lim_{k \to \infty} e^{2(e^{-k} - e^{-k-1})} = e^{-1} \cdot e^{0} = e^{-1}.
\end{align}
Note that $\frac{1}{e} \approx 0.3679 < 1$. As a consequence, $\sum_{k=-1}^{+\infty}1-e^{-2e^{-k}}$ converges to a constant. As a result, we can rewrite Expression (\ref{eq:start}) as Expression (\ref{eq:result}).

Putting the bounds of Expressions (\ref{eq:easybound}) and (\ref{eq:second}) together in Expression (\ref{eq:result}), we obtain:
\begin{align}
    \lim_{n\to\infty}\mathbb{E}(r^{Rawls}_{max}) &\leq 1+  \lim_{n\to\infty}\left(\sum_{k=1-\ln(n)}^{\lfloor\ln(n)\rfloor - \ln(n)-1} (1-\mathbb{P}(X_{n,\frac{\ln(n)+k}{n}}=1))\right) \notag\\ &\quad \quad +\lim_{n\to\infty}\left( \sum_{k=\lfloor\ln(n)\rfloor - \ln(n)}^{n-1-\ln(n)}(1-\mathbb{P}(X_{n,\frac{\ln(n)+k}{n}}=1))\right)\\
    &\leq \lim_{n\to\infty}\lfloor\ln(n)\rfloor + \sum_{k=-1}^{+\infty}\left(1-e^{-2e^{-k}}\right)\\
    &\approx \lim_{n\to\infty}\lfloor\ln(n)\rfloor + 2.77026.
\end{align}

\end{proof}

\newpage
\setcounter{section}{0}
\clearpage
\setcounter{page}{1} 
\appendix

\begin{center}
    \LARGE{\textbf{Online Appendix (not for publication)}}
\end{center}
\resetcounters

\section{Relation with rank efficiency}\label{rank}

Given an assignment $x$, we define $M^x(k)=\sum_{i \in I} b_i^x(k)$ for $k=1,\ldots,n$. So, for example, $M^x(n)$ is the sum of the probabilities with which each agent is assigned her least preferred option. 
Equivalently, $M^x(k)$ is the expected number of agents who receive an object ranked in position $k$ or lower at $x$.

\begin{definition}\label{featherstone}
An assignment, $x$, is said to \textbf{rank-dominate} another assignment, $y$, if the rank distribution of $y$ first-order stochastically dominates that of $x$, that is, if:

$$
M_{y}(k) \geq  M_{x}(k).
$$

for all ranks, $k$, with a strict inequality for at least one $k$. 
\medskip

\noindent If an assignment is not rank-dominated by any other assignment, it is \textbf{rank efficient}.
    
\end{definition}

Definition \ref{featherstone} is equivalent to the original definition of \citeapp{featherstone2020rank}, where the sum of the probabilities are computed from the most preferred object to $k$. The following example shows a problem where the Rawlsian assignment is not rank-efficient. 

\begin{remark}{A rank efficient assignment may not be Rawlsian.}\label{Rem_rank_eff}
\begin{proof}
    Consider the following problem:

\begin{center}

\[
\begin{array}{c|ccccc}
\succeq_1 & a & \fbox{d} & c & b &e\\ 
\succeq_2 & \fbox{b} & c & a & d & e\\ 
\succeq_3 & \fbox{c} & b & a & d & e\\ 
\succeq_4 & b & \fbox{a} & d & c & e\\ 
\succeq_5 & b & a & \fbox{d} & e & c\\ 
\end{array}
\]
\end{center}

 The following assignment is rank efficient:

$$y=
\begin{pmatrix}
1 & 0 & 0  & 0 & 0\\
0 & 1 & 0  & 0 & 0 \\
0 & 0 & 1  & 0 & 0 \\
0 & 0 & 0  & 1 & 0 \\
0 & 0 & 0  & 0 & 1 
\end{pmatrix},
$$

but it is not Rawlsian. Indeed, its associated vector is 

$$
B^y=(0,0,0,0,0;1,0,0,0,0;1,1,0,0,0;1,1,0,0,0;1,1,1,1,1).
$$

Consider the following assignment $x$ (boxed in agents' preferences):

$$x=
\begin{pmatrix}
0 & 0 & 0 & 1 & 0\\
0 & 1 & 0  & 0 & 0 \\
0 & 0 & 1  & 0 & 0 \\
1 & 0 & 0  & 0 & 0 \\
0 & 0 & 0  & 0 & 1 
\end{pmatrix}.
$$

It is easy to see that $x$ $R$-dominates $y$.
\end{proof}
\end{remark}

\begin{remark}
A Rawlsian assignment may not be rank-efficient. 
\begin{proof}
Consider the following problem:

\[
\begin{array}{c|ccc}
\succeq_1 & a & b & c \\ 
\succeq_2 & a & b & c \\ 
\succeq_3 & b & a & c \\ 
\end{array}
\]

The following is the Rawlsian assignment of the problem:
$$
x=
\begin{pmatrix}
 \frac{1}{2} &  \frac{1}{6} &  \frac{1}{3}  \\
 \frac{1}{2} &  \frac{1}{6} &  \frac{1}{3}  \\
0 &  \frac{2}{3} &  \frac{1}{3}
\end{pmatrix}.
$$
Consider the following assignment:
$$
y=
\begin{pmatrix}
1 & 0 & 0  \\
0 &  0 & 1  \\
0 & 1 & 0
\end{pmatrix}.
$$

Compute $M_x(k)$ and $M_y(k)$:

\begin{center}
\begin{tabular}{ |c|c|c|c| } 
 \hline
  & $M_j(3)$ & $M_j(2)$ & $M_j(1)$  \\ \hline 
 $x$ & 1 & $\frac{4}{3}$ & 3  \\  \hline
 $y$ & 1 & 1 & 3 \\ \hline
\end{tabular}.
\end{center}

Assignment $y$ Rank-dominates assignment $x$.

\end{proof}

\end{remark}

\section{Relation with fractional Boston assignment}
\label{frac_Boston}
The fractional Boston rule by \citeapp{bogomolnaia2015random}, which was further studied by \citeapp{chen2023fractional}, is the outcome of the following procedure.

\begin{itemize}
    \item In Step 1, all agents start by ``consuming'' their most preferred object simultaneously at equal speeds. An agent stops consuming when the object is exhausted. Step 1 ends by removing all agents who are assigned to their first-ranked object with probability one.
    \item In Step $k$, all remaining agents consume from their $k$-th preferred object. An agent stops consuming when the object is exhausted, or when the sum of the assignment probabilities of the agent equals one. Step $k$ ends by removing the agents for whom the sum of the assignment probabilities equals one.
\end{itemize}
The algorithm ends when the last agent is removed.

While both the fractional Boston rule and the $\sigma^B$-Rawlsian assignment for $\sigma^B = (2,3,\ldots,n)$ start by lexicographically maximizing the probabilities of being assigned to the most preferred object, the following example shows that they are not identical.
\begin{remark}
    The outcome of the fractional Boston rule does not correspond to the $\sigma_B$-minimal assignment for $\sigma^B = (2,3,\ldots,n)$. 
    
    \begin{proof}
        Consider the following problem with 7 agents:
                \[
\begin{array}{c|cccc}
\succeq_1 & a & b & e & \ldots \\ 
\succeq_2 & a & b & c & \ldots\\ 
\succeq_3, \succeq_4 & c & b & a & \ldots\\
\succeq_5, \succeq_6, \succeq_7 & d & a & e & \ldots\\
\end{array}
\]

The assignment probabilities for the three most preferred objects of each agent in the $\sigma^B$-minimal assignment and in the fractional Boston assignment are (element $x_{ij}$ is the probability with which agent $i$ is assigned to her $j$-th most preferred object):

\begin{equation*}
x^{\sigma^B} = 
\begin{pmatrix}
\frac{1}{2} & \frac{1}{4} & \bm{0} & \ldots\\
\frac{1}{2} & \frac{1}{4} & 0 & \ldots\\
\frac{1}{2} & \frac{1}{4} & 0 & \ldots\\
\frac{1}{2} & \frac{1}{4} & 0 & \ldots\\
\frac{1}{3} & 0 & {\frac{\bm{1}}{\bm{3}}} & \ldots\\
\frac{1}{3} & 0 & {\frac{\bm{1}}{\bm{3}}} & \ldots\\
\frac{1}{3} & 0 & {\frac{\bm{1}}{\bm{3}}} & \ldots
\end{pmatrix}, \quad x^{FB} = 
\begin{pmatrix}
\frac{1}{2} & \frac{1}{4} & {\frac{\bm{1}}{\bm{4}}} & \ldots\\
\frac{1}{2} & \frac{1}{4} & 0 & \ldots\\
\frac{1}{2} & \frac{1}{4} & 0 & \ldots\\
\frac{1}{2} & \frac{1}{4} & 0 & \ldots\\
\frac{1}{3} & 0 & {\frac{\bm{1}}{\bm{4}}} & \ldots\\
\frac{1}{3} & 0 & {\frac{\bm{1}}{\bm{4}}} & \ldots\\
\frac{1}{3} & 0 & {\frac{\bm{1}}{\bm{4}}} & \ldots
\end{pmatrix}.
\end{equation*}

The difference in the assignment probabilities for the object of third choice comes from the following reasoning. It can be shown that the $\sigma^B$-minimal assignment can be obtained by first lexicographically maximizing the vector of the probabilities with which the agents are assigned to the object of their first choice, then to lexicographically maximize the vector with the probabilities with which the agents are assigned to the objects of their first or second choice, etc. As such, after having fixed the probabilities of being assigned to each agent's two most preferred objects, the $\sigma^B$-minimal assignment maximizes the vector of the cumulative probabilities of being assigned to one of the three most preferred objects, and, therefore, divides object $e$ equally among agents 5,6, and 7. In the fractional Boston assignment, however, at the beginning of the thirds step, agents 1,5,6, and 7 start eating with equal speeds from object $e$, which allows agent 1 to consume one quarter of the object. Thus, in the $\sigma^B$-minimal assignment the minimum probability of being assigned to the three most preferred objects is $\frac{2}{3}$, while it is $\frac{7}{12}<\frac{2}{3}$ in the fractional Boston assignment.

    \end{proof}
    \end{remark}

\section{Rawlsian assignments and cardinal utilities}\label{cardinal}

In this section, we discuss the relation between the Rawlsian assignment and cardinal representations of agents' ordinal preferences. \citeapp{mclennan2002ordinal} characterizes sd-efficiency in terms of utilitarian welfare: an assignment is sd-efficient if and only if there exists a profile of cardinal utilities, consistent with agents' ordinal preferences, for which the assignment maximizes total expected utility. See also \citeapp{manea2008constructive}. 

One may wonder whether an analogous statement holds for the Rawlsian assignment. In particular: can the Rawlsian assignment be characterized as maximizing the expected utility of the worst-off agent for some cardinal representation? A first observation is that every sd-efficient assignment admits such a cardinal maximin representation. That is, if $x$ is sd-efficient, there exist cardinal utilities consistent with the ordinal preferences such that:

\begin{equation}\label{maxmin}
x \in \argmax_{x' \in X} \{\min_{i \in I} U_i(x')\},
\end{equation}

\noindent where $U_i(x')=\sum_{o \in O} x'_{io}u_{io}$, and $u_{io}$ is agent $i$'s cardinal utility for object $o$. The Rawlsian assignment is sd-efficient, so the result implies that for each problem there exist cardinal utilities under which the Rawlsian assignment maximizes the expected utility of the worst-off agent.

\begin{proposition}
\label{prop:max_worst_utility}
Consider a problem $\succeq$ and an sd-efficient assignment $x$ for problem $\succeq$. Then, there exists a cardinal representation of $\succeq$, denoted by $U = (U_i)_{i\in I}$, such that:

\[
x \in \argmax_{x' \in X} \{\min_{i \in I} U_i(x')\}.
\]

\end{proposition}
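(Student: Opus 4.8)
The plan is to leverage McLennan's characterization of sd-efficiency (\cite{mclennan2002ordinal}) and then convert utilitarian optimality into maximin optimality by an order-preserving shift of each agent's utilities. Since $x$ is sd-efficient, McLennan's result guarantees a cardinal representation $u = (u_{io})$ consistent with $\succeq$ for which $x$ maximizes the utilitarian welfare, i.e.\ $x \in \argmax_{x' \in X} \sum_{i \in I} U_i(x')$, where $U_i(x') = \sum_{o \in O} x'_{io} u_{io}$. Everything then reduces to turning a maximizer of the \emph{sum} into a maximizer of the \emph{minimum}.

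First I would modify this representation without changing its set of utilitarian maximizers. For each agent $i$, define $v_{io} = u_{io} + b_i$ with $b_i = c - U_i(x)$ for an arbitrary constant $c$. Adding a constant to all of agent $i$'s object utilities preserves the ranking of $\succeq_i$, so $V = (V_i)$ with $V_i(x') = \sum_{o \in O} x'_{io} v_{io}$ remains a cardinal representation of $\succeq$. Moreover, $\sum_{i \in I} V_i(x') = \sum_{i \in I} U_i(x') + \sum_{i \in I} b_i$ differs from $\sum_{i \in I} U_i(x')$ only by a constant that does not depend on $x'$, so $x$ still maximizes $\sum_{i \in I} V_i$ over $X$. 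By construction, the shifts equalize the agents' expected utilities at $x$: $V_i(x) = c$ for every $i \in I$.

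The key step is then a ``minimum $\le$ average $\le$ sum'' sandwich. For any $x' \in X$,
$$\min_{i \in I} V_i(x') \le \frac{1}{n}\sum_{i \in I} V_i(x') \le \frac{1}{n}\sum_{i \in I} V_i(x) = c = \min_{i \in I} V_i(x),$$
where the first inequality is that a minimum is at most the average, the second uses that $x$ maximizes $\sum_{i \in I} V_i$, and the final two equalities use $V_i(x) = c$ for all $i \in I$. Hence $\min_{i \in I} V_i(x') \le \min_{i \in I} V_i(x)$ for every $x' \in X$, so $x \in \argmax_{x' \in X} \min_{i \in I} V_i(x')$, which is the claim with $U$ taken to be $V$.

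I do not expect a serious technical obstacle: the argument is short once the construction is in hand. The two points to verify carefully are that the constant shifts $b_i$ keep $V$ a legitimate cardinal representation of $\succeq$ (they do, since adding a constant to one agent's utilities is order-preserving) and that they do not disturb the utilitarian maximizer (they do not, since $\sum_{i \in I} b_i$ is independent of $x'$). The real ``main obstacle'' is conceptual rather than computational, namely spotting that equalizing the agents' utility levels at $x$ makes the generic bound ``minimum $\le$ average'' bind exactly at $x$, thereby promoting utilitarian optimality to maximin optimality.
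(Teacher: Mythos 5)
Your proof is correct, and it takes a genuinely different---and in one important respect more careful---route than the paper's. Both arguments share the central device of choosing a cardinal representation that equalizes all agents' expected utilities at $x$, so that the minimum at $x$ coincides with every agent's utility there. The paper then finishes by arguing directly from the definition of sd-efficiency: it picks an \emph{arbitrary} equalizing representation and asserts that if some $y$ had a strictly higher minimum, then every agent's expected utility would be strictly higher at $y$, ``contradicting sd-efficiency.'' That last step is delicate, because an ex-ante Pareto improvement with respect to a \emph{single} cardinal representation does not by itself contradict sd-efficiency: sd-dominance requires the improvement to hold for \emph{every} consistent representation, and it is known (already from \cite{bogomolnaia2001new}) that an ordinally efficient assignment can be ex-ante Pareto dominated for particular consistent utilities. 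Your proof supplies exactly the structure that makes this step legitimate: by starting from McLennan's characterization, $x$ maximizes $\sum_{i} U_i$ for the chosen representation; the per-agent constant shifts preserve both consistency with $\succeq$ and the utilitarian argmax; and the chain $\min_{i} V_i(x') \le \tfrac{1}{n}\sum_{i} V_i(x') \le \tfrac{1}{n}\sum_{i} V_i(x) = c = \min_{i} V_i(x)$ closes the argument without ever needing to convert a cardinal Pareto improvement back into an sd-dominance statement. The cost of your route is the reliance on McLennan's theorem (which the paper already invokes in the main text); the benefit is that the final step is airtight where the paper's direct appeal to sd-efficiency leaves a gap.
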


\begin{proof}
Let $x$ be an sd-efficient assignment. By \citeapp{mclennan2002ordinal}, there exists a cardinal representation $u=(u_i)_{i\in I}$ of the agents' ordinal preferences such that $x$ maximizes utilitarian welfare over the set of feasible assignments. That is,
$$
x \in \argmax_{z\in X}
\sum_{i\in I}\sum_{o\in O} z_{io}u_i(o).
$$

For each agent $i$, let
$$
\alpha_i=\sum_{o\in O}x_{io}u_i(o)
$$
be agent $i$'s expected utility under $x$. Define a new cardinal
representation $U=(U_i)_{i\in I}$ by
$$
U_i(o)=u_i(o)-\alpha_i
\qquad \text{for every } o\in O.
$$
Since $U_i$ is obtained from $u_i$ by adding a constant to the utility of all objects, it represents the same ordinal preference as $u_i$. Moreover, under assignment
$x$, every agent obtains expected utility zero:
$$
\sum_{o\in O}x_{io}U_i(o)
=
\sum_{o\in O}x_{io}u_i(o)-\alpha_i
=
0
\qquad \text{for every } i\in I.
$$
Thus,
$$
\min_{i\in I}\sum_{o\in O}x_{io}U_i(o)=0.
$$

We claim that $x$ maximizes the expected utility of the worst-off agent under
the cardinal representation $U$. Suppose, by contradiction, that there exists an assignment $y\in X$ such that
$$
\min_{i\in I}\sum_{o\in O}y_{io}U_i(o)>0.
$$
Then every agent obtains strictly positive expected utility under $y$:
$$
\sum_{o\in O}y_{io}U_i(o)>0
\qquad \text{for every } i\in I.
$$
Summing over agents, we obtain
$$
\sum_{i\in I}\sum_{o\in O}y_{io}U_i(o)>0.
$$
However, since $U_i(o)=u_i(o)-\alpha_i$, and each row of an assignment sums to one, we have for every assignment $z\in X$
$$
\sum_{i\in I}\sum_{o\in O}z_{io}U_i(o)
=
\sum_{i\in I}\sum_{o\in O}z_{io}u_i(o)-\sum_{i\in I}\alpha_i.
$$
Therefore, because $x$ maximizes
$\sum_{i\in I}\sum_{o\in O}z_{io}u_i(o)$, it also maximizes
$\sum_{i\in I}\sum_{o\in O}z_{io}U_i(o)$. In particular,
$$
\sum_{i\in I}\sum_{o\in O}y_{io}U_i(o)
\leq
\sum_{i\in I}\sum_{o\in O}x_{io}U_i(o)
=0,
$$
which contradicts the strict inequality above. Hence, no assignment gives the
worst-off agent expected utility strictly greater than zero under $U$. Since
$x$ gives every agent expected utility zero, we conclude that
$$
x\in \argmax_{z\in X}
\left\{
\min_{i\in I}\sum_{o\in O}z_{io}U_i(o)
\right\}.
$$
\end{proof}

What about the converse? Fix a profile of ordinal preferences, and suppose that an assignment maximizes the expected utility of the worst-off agent for some cardinal representation consistent with those preferences. Must this assignment be Rawlsian? The answer is no. The cardinal maximin objective depends on the particular cardinal representation chosen, while the Rawlsian assignment depends only on ordinal preferences. Thus, changing the utility assigned to intermediate ranks can change the assignment that maximizes the expected utility of the worst-off agent, even though the ordinal preferences remain fixed. Therefore, the converse should not be interpreted as a Rawlsian-specific statement, it illustrates a general distinction between ordinal Rawlsianism and cardinal welfare criteria: cardinal criteria may select different assignments for different utility representations of the same ordinal preferences.
Furthermore, Proposition \ref{prop:infinite_max_worst_case} also shows that the difficulty is more fundamental: no ordinal rule satisfying equal treatment of equals can select, for every cardinal representation consistent with the ordinal preferences, an assignment that maximizes the expected utility of the worst-off agent.


\begin{proposition}
\label{prop:infinite_max_worst_case}
Given a problem $\succeq$, no rule treats equals equally and maximizes the utility of the worst-off agent for all cardinal utilities consistent with ordinal preferences~$\succeq$.
\end{proposition}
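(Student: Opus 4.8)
The plan is to prove this impossibility by exhibiting a single problem $\succeq$ together with cardinal utilities that expose the conflict between equal treatment of equals and maximin optimality. The conceptual point I would exploit is that a cardinal profile $(u_{io})$ is required only to be \emph{consistent} with $\succeq$, i.e. $u_{io}>u_{io'}$ whenever $o\succ_i o'$; it is not required to assign the same cardinal values to two agents who happen to share the same ordinal ranking. Hence the maximin program in \eqref{maxmin} need not be symmetric across ordinally identical agents, whereas equal treatment of equals forces the rule's output to be symmetric across them. This asymmetry is the lever.

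Concretely, I would take a problem in which two agents, say $1$ and $2$, have identical preferences (the simplest instance being $n=2$ with $a\succ_i b$ for $i\in\{1,2\}$). First, equal treatment of equals together with feasibility pins the output of any admissible rule down completely: since $\phi_1(\succeq)=\phi_2(\succeq)$ and each column sums to $1$, the rule must output the symmetric assignment in which each of the two agents receives $a$ with probability $\tfrac12$. Next I would choose a consistent but \emph{asymmetric} cardinal profile, e.g. $u_{1a}=1$, $u_{2a}=M$ with $M>1$, and $u_{1b}=u_{2b}=0$. Writing $p$ for the probability that agent $1$ receives $a$, the two expected utilities become $U_1=p$ and $U_2=M(1-p)$, so the worst-off utility is $\min\{p,\,M(1-p)\}$, a function whose unique maximizer is $p^\star=\tfrac{M}{M+1}\neq\tfrac12$.

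It then follows that the symmetric assignment forced by equal treatment is strictly suboptimal for this $u$: its worst-off utility is $\tfrac12$, whereas $p^\star$ attains $\tfrac{M}{M+1}>\tfrac12$. Since every rule satisfying equal treatment of equals is compelled to return the symmetric assignment on this problem, no such rule can maximize the worst-off agent's utility for this particular consistent $u$, \emph{a fortiori} not for all consistent $u$. To reflect the point emphasized before the proposition, that different cardinal utilities induce different maximin solutions, I would additionally remark that as $M$ ranges over $(0,\infty)$ the optimizer $p^\star=\tfrac{M}{M+1}$ sweeps out a continuum of distinct maximin solutions, exactly one of which ($M=1$) coincides with the Rawlsian (and symmetric) assignment while all others differ from it. The only step requiring genuine care is verifying that the maximin optimizer is \emph{unique} and strictly separated from $\tfrac12$ --- i.e. that $\min\{p,M(1-p)\}$ is single-peaked with peak at the crossing point $p^\star$ --- and checking that the chosen cardinal values are indeed consistent with $\succeq$; both are routine once the two utilities are written out, and everything extends verbatim to larger problems by adding agents with distinct top objects that are irrelevant to the binding constraint.
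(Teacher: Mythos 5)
Your argument is logically sound under the literal reading of ``consistent with ordinal preferences'' (only the order of $u_{io}$ within each agent is constrained), and the mechanics check out: equal treatment of equals plus bistochasticity forces $p=\tfrac12$ in the two-agent, two-object problem, while $\min\{p,\,M(1-p)\}$ is single-peaked at $p^\star=\tfrac{M}{M+1}\neq\tfrac12$ for $M\neq 1$. However, your route differs from the paper's in a way that matters. The paper's proof uses three agents and three objects, restricts attention to cardinal utilities that are \emph{identical across agents} (each agent gets $u(k)$ for her $k$-th ranked object), and shows that even then the maximin optimizer $\alpha^\star$ sweeps a continuum $[\tfrac14,\tfrac13]$ as the common intermediate value $u(2)$ varies. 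Your construction instead derives the conflict entirely from \emph{interpersonal scale asymmetry}: agent $2$'s utilities are inflated by the factor $M$. This is the weak point of your approach. The maximin criterion only makes sense once some interpersonal comparability is assumed, and the paper itself gestures at this (``after some normalization''). Under any standard normalization --- say $u_{i,\text{best}}=1$ and $u_{i,\text{worst}}=0$ --- your two-object example collapses: both agents then have utilities $(1,0)$, and the symmetric assignment \emph{is} maximin optimal, so your counterexample proves nothing. With only two objects there is no free intermediate utility level to exploit, which is precisely why the paper needs $n=3$: there the normalized profile still has a free parameter $u(2)\in[0,1]$, and the continuum of optimizers survives normalization. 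So while your proof is valid against the proposition as literally stated, it establishes a strictly weaker and more fragile fact; to make the argument robust to the natural normalization (and to match the claim in the text that even a fixed, symmetric cardinalization pins down a non-Rawlsian maximin solution), you should move to at least three objects and let the middle utility do the work, as the paper does.
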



\begin{proof}
Consider the following preference profile:
\[
\begin{array}{ccc}
\succ_1 & \succ_2 & \succ_3 \\
\hline
a & a & b \\
b & b & a \\
c & c & c
\end{array}
\]

Consider a cardinal representation in which the object ranked in position $k$ has the same utility for every agent. Normalize utilities so that
the utility of the first-ranked object is $u(1)=1$, the utility of the
third-ranked object is $u(3)=0$, and the utility of the second-ranked object is
$u(2)=u\in(0,1)$. Thus,
\[
U_1(a)=U_2(a)=U_3(b)=1,
\]
\[
U_1(b)=U_2(b)=U_3(a)=u,
\]
and
\[
U_1(c)=U_2(c)=U_3(c)=0.
\]

We first characterize the assignments that can maximize the expected utility of the worst-off agent subject to equal treatment of equals. Since agents 1 and 2 must receive the same lottery, any such assignment gives them the same probability of each object. Moreover, agents 1 and 2 must each receive object $a$ with probability $1/2$, while agent 3 receives object $a$ with probability zero. Indeed, if agents 1 and 2 received object $a$ with probability strictly less than $1/2$, then they would receive object $b$ with positive probability. We could then transfer a small amount of probability of object $b$ from agents 1 and 2 to agent 3, and transfer the same amount of probability of object $a$ from agent 3 to agents 1 and 2. This would
strictly increase the expected utility of all agents. Hence such an assignment cannot maximize the expected utility of the worst-off agent.

Therefore, any candidate assignment satisfying equal treatment of equals and maximizing the expected utility of the worst-off agent must be of the form
\[
x^\alpha=
\begin{pmatrix}
\frac12 & \frac12-\alpha & \alpha \\
\frac12 & \frac12-\alpha & \alpha \\
0 & 2\alpha & 1-2\alpha
\end{pmatrix},
\qquad \alpha\in\left[0,\frac12\right].
\]
For this assignment, agents 1 and 2 obtain expected utility
\[
U_1(x^\alpha_1)=U_2(x^\alpha_2)
=
\frac12+\left(\frac12-\alpha\right)u,
\]
while agent 3 obtains expected utility
\[
U_3(x^\alpha_3)=2\alpha.
\]
The first expression is decreasing in $\alpha$, while the second is increasing in $\alpha$. Hence the value of $\alpha$ that maximizes the expected utility of the worst-off agent is the value $\alpha^*$ at which these two expressions
are equal:
\[
\frac12+\left(\frac12-\alpha^*\right)u=2\alpha^*.
\]
Solving for \(\alpha^*\), we obtain
\[
\alpha^*=\frac{1+u}{4+2u}.
\]
Thus, for each cardinal representation with $u\in(0,1)$, the assignment that maximizes the expected utility of the worst-off agent subject to equal treatment of equals is $x^{\alpha^*}$. In particular, as $u$ varies in $(0,1)$, $\alpha^*$ varies in $(1/4,1/3)$. On the other hand, the ordinal Rawlsian assignment for this preference profile is $x^{1/3}$. Therefore, except in the limiting case in which $u=1$, the assignment that
maximizes the expected utility of the worst-off agent for the cardinal representation above is not the Rawlsian assignment.
\end{proof}

\section{Relation with weak sd-strategyproofness}
\label{weak_sdSP}
The following example shows that the Rawlsian rule is not weakly sd-strategyproof. 


\begin{example}\label{ex_sp}
Let $I=\{1,2,3\}$, $O=\{a,b,c\}$, and

\[
\begin{array}{c|ccc}
\succeq_1 & a & b & c \\ 
\succeq_2 & b & c & a \\ 
\succeq_3 & b & c & a \\ 
\end{array}
\]


The Rawlsian assignment of the problem $(\succeq_1,\succeq_2,\succeq_3)$ is:

\begin{equation*}
x = 
\begin{pmatrix}
1 & 0 & 0 \\
0 & \frac{1}{2}  & \frac{1}{2}  \\
0 &  \frac{1}{2} & \frac{1}{2}
\end{pmatrix}.
\end{equation*}

Suppose agent 2 reports the preferences $\succeq'_2=(b,a,c)$. Then, the Rawlsian assignment of the problem $(\succeq_1,\succeq'_2,\succeq_3)$ is:

\begin{equation*}
y = 
\begin{pmatrix}
1 & 0 & 0 \\
0 & 1  & 0  \\
0 &  0 & 1
\end{pmatrix}.
\end{equation*}

Agent 2 prefers (according to $\succeq_2$) her allocation under $y$ to her allocation under $x$. \hfill$\qed$

\end{example}
\section{Descriptive Statistics}\label{charac}
We first illustrate some basic characteristics of the cooperatives we analyze. Table \ref{statistics} shows the number of families in each cooperative (which equals the number of apartments), and the (cumulative) number of different objects ranked in the top $k$ positions ($k=1,2,3,4$) in absolute terms, and as percentage of the total number of objects.

\begin{table}[htbp]
     \centering
     \begin{tabular}{|c|r|rrrr|rrrr|}
     \hline
      & Size & 1st & 2nd & 3rd & 4th & 1st& 2nd & 3rd & 4th \\ \hline
     $C_1$ & 26 & 16 & 20 & 21 & 22 & 61 & 76 & 80 & 84 \\ 
     $C_2$ & 18 & 9 & 11 & 13 & 14 & 50 & 61 & 72 & 77 \\
     $C_3$& 4 & 3 & 4 & 4 & 4 & 75 & 100 & 100 & 100 \\
     $C_4$ & 4 & 2 & 2 & 4 & 4 & 50 & 50 & 100 & 100 \\
     $C_5$ & 28 & 17 & 23 & 24 & 25 & 60 & 82 & 85 & 89 \\
     $C_6$ & 8 & 5 & 8 & 8 & 8 & 62 & 100 & 100 & 100 \\
     $C_7$ & 29 & 16 & 22 & 24 & 25 & 55 & 75 & 82 & 86 \\
     $C_8$ & 12 & 6 & 9 & 11 & 11 & 50 & 75 & 91 & 91 \\
     $C_9$ & 15 & 9 & 11 & 13 & 15 & 60 & 73 & 86 & 100 \\
     $C_{10}$ & 4 & 3 & 3 & 3 & 4 & 75 & 75 & 75 & 100 \\
     $C_{11}$ & 11 & 6 & 7 & 10 & 10 & 54 & 63 & 90 & 90 \\
     $C_{12}$ & 16 & 11 & 14 & 15 & 15 & 68 & 87 & 93 & 93 \\
     $C_{13}$ & 39 & 13 & 23 & 28 & 31 & 33 & 58 & 71 & 79 \\
     $C_{14}$ & 42 & 19 & 24 & 27 & 28 & 45 & 57 & 64 & 66 \\
     $C_{15}$ & 14 & 5 & 8 & 9 & 10 & 35 & 57 & 64 & 71 \\
     $C_{16}$ & 6 & 3 & 4 & 5 & 5 & 50 & 66 & 83 & 83 \\
     $C_{17}$ & 9 & 6 & 8 & 9 & 9 & 66 & 88 & 100 & 100 \\
     $C_{18}$ & 15 & 10 & 10 & 12 & 12 & 66 & 66 & 80 & 80 \\
     $C_{19}$ & 9 & 4 & 5 & 5 & 5 & 44 & 55 & 55 & 55 \\
     $C_{20}$ & 20 & 8 & 11 & 15 & 17 & 40 & 55 & 75 & 85 \\
     $C_{21}$& 24 & 16 & 21 & 23 & 23 & 66 & 87 & 95 & 95 \\
     $C_{22} $& 7 & 4 & 7 & 7 & 7 & 57 & 100 & 100 & 100 \\
     $C_{23} $& 40 & 17 & 26 & 30 & 34 & 42 & 65 & 75 & 85 \\
     $C_{24}$ & 8 & 3 & 5 & 5 & 5 & 37 & 62 & 62 & 62 \\
     \hline
     \end{tabular}
    \caption{Size is the number of families in each cooperative. The next four columns, 1st, 2nd, 3rd, and 4th, have the cumulative number of different apartments ranked in the top 1, 2 ,3 and 4 positions, respectively. The last four columns express the corresponding count columns as a percentage of the total number of apartments.}
     \label{statistics}
     \end{table}

\newpage

\section{Bootstrap Analysis}\label{OA:bootstrap}

We first conduct a bootstrap exercise based on the observed preferences in each cooperative. For each cooperative with $n$ families, we generate bootstrap preference profiles by drawing $n$ preference lists independently with replacement from the submitted rankings in that cooperative. For each bootstrap profile, we compute the maximum rank under the Rawlsian rule and under PS. This exercise preserves the empirical distribution of submitted preferences within each cooperative, while allowing us to evaluate how sensitive the comparison is to resampling variation in the observed preference profile. The results are reported in Table~\ref{tab:bootstrap_max_rank}. 

Second, we added a controlled simulation exercise that varies the degree of preference correlation. For each cooperative, we first construct a cooperative-specific central ranking of apartments using the average observed rank of each apartment. We then generate artificial preference profiles around this central ranking. Specifically, we compute the average observed rank of apartment $o$ as
\[
\bar r(o)=\frac{1}{n}\sum_{i\in I} r_{io}.
\]
The central ranking, $\rho_c$, orders apartments by increasing values of $\bar r(o)$. 

We then generate artificial preference profiles around this central ranking. For each value of the concentration parameter ($\theta\geq 0$), we draw a complete ranking sequentially. At each rank position, among the set ($S$) of apartments not yet ranked, apartment $o\in S$ is selected with probability
\[
\Pr(o\text{ is selected next}\mid S)=
\frac{\exp(-\theta \rho_c(o))}
{\sum_{o'\in S}\exp(-\theta \rho_c(o'))},
\]
where $\rho_c(o)$ is the position of apartment $o$ in the central ranking. Hence, apartments that are higher in the central ranking are more likely to be selected earlier.

When $\theta=0$, all remaining apartments are equally likely to be selected at each step, so the induced distribution over complete rankings is uniform. As $\theta$ increases, agents are increasingly likely to rank highly the apartments that are high in the central ranking. Therefore, simulated preferences become more similar to each other. In the limit, as $\theta$ becomes large, all agents’ rankings converge to the same central ranking. For each value of $\theta$, we generate simulated preference profiles and compute the maximum rank under the Rawlsian rule and under probabilistic serial. In Figure \ref{fig:theta} we illustrate the selection probabilities for different values of $\theta$.

\begin{table}[htbp]
     \begin{center}
     \caption{Size and maximum rank in bootstrap samples of each cooperative for the Rawlsian and PS assignment.}
     \label{tab:bootstrap_max_rank}
     \begin{tabular}{lcccc}
     \hline
     Coop. & $n$ & Max.\ Rawls & Max.\ Rawls (\%) & Max.\ PS \\ 
     \hline
     $C_1$ & 26 & 14.76 & 57 & 25.99 \\
     $C_2$ & 18 & 12.72 & 71 & 18.00 \\
     $C_3$ & 4 & 2.63 & 66 & 3.26 \\
     $C_4$ & 4 & 3.13 & 78 & 3.64 \\
     $C_5$ & 28 & 10.66 & 38 & 27.64 \\
     $C_6$ & 8 & 4.41 & 55 & 7.56 \\
     $C_7$ & 29 & 10.06 & 35 & 28.57 \\
     $C_8$ & 12 & 8.39 & 70 & 11.98 \\
     $C_9$ & 15 & 9.37 & 62 & 14.30 \\
     $C_{10}$ & 4 & 4.00 & 100 & 4.00 \\
     $C_{11}$ & 11 & 5.58 & 51 & 10.84 \\
     $C_{12}$ & 16 & 7.28 & 46 & 15.28 \\
     $C_{13}$ & 39 & 22.25 & 57 & 39.00 \\
     $C_{14}$ & 42 & 33.83 & 81 & 42.00 \\
     $C_{15}$ & 14 & 9.96 & 71 & 14.00 \\
     $C_{16}$ & 6 & 6.00 & 100 & 6.00 \\
     $C_{17}$ & 9 & 4.31 & 48 & 8.62 \\
     $C_{18}$ & 15 & 8.30 & 55 & 14.62 \\
     $C_{19}$ & 9 & 9.00 & 100 & 9.00 \\
     $C_{20}$ & 20 & 11.16 & 56 & 19.98 \\
     $C_{21}$ & 24 & 10.05 & 42 & 23.63 \\
     $C_{22}$ & 7 & 3.69 & 53 & 6.45 \\
     $C_{23}$ & 40 & 13.85 & 35 & 39.86 \\
     $C_{24}$ & 8 & 7.10 & 89 & 7.95 \\
     \hline
     \end{tabular}
     \end{center}
   \footnotesize{\textit{Notes}:} 
\footnotesize{Coop.\ stands for cooperative, each denoted as $C_i$ for $i=1,\ldots,24$. The table reports averages across bootstrap samples. For each cooperative, we generate 500 bootstrap samples by drawing, with replacement, the same number of preference lists as in the original cooperative from the observed submitted rankings. For Max.\ Rawls (Max.\ PS), we compute the rank of the least preferred object assigned with positive probability by the Rawlsian (PS) rule for each bootstrap sample, and then take the average across bootstrap samples. Max.\ Rawls (\%) expresses Max.\ Rawls as a percentage of the length of families' preferences, or equivalently, the size of the cooperative.}
\end{table}

\begin{figure}[htbp]
\centering
\begin{tikzpicture}
\begin{axis}[
    width=0.85\textwidth,
    height=0.48\textwidth,
    xlabel={Central rank $\rho_c(o)$},
    ylabel={Selection probability},
    title={Selection probabilities for different values of $\theta$},
    xmin=1, xmax=10,
    ymin=0, ymax=0.95,
    xtick={1,2,3,4,5,6,7,8,9,10},
    legend pos=north east,
    grid=both,
    grid style={gray!20},
]

\addplot[
    mark=*,
    thick
]
coordinates {
    (1,0.1000)
    (2,0.1000)
    (3,0.1000)
    (4,0.1000)
    (5,0.1000)
    (6,0.1000)
    (7,0.1000)
    (8,0.1000)
    (9,0.1000)
    (10,0.1000)
};
\addlegendentry{$\theta=0$}

\addplot[
    mark=square*,
    thick
]
coordinates {
    (1,0.2096)
    (2,0.1716)
    (3,0.1405)
    (4,0.1151)
    (5,0.0942)
    (6,0.0771)
    (7,0.0631)
    (8,0.0517)
    (9,0.0423)
    (10,0.0347)
};
\addlegendentry{$\theta=0.2$}

\addplot[
    mark=triangle*,
    thick
]
coordinates {
    (1,0.3961)
    (2,0.2403)
    (3,0.1457)
    (4,0.0884)
    (5,0.0536)
    (6,0.0325)
    (7,0.0197)
    (8,0.0120)
    (9,0.0073)
    (10,0.0044)
};
\addlegendentry{$\theta=0.5$}

\addplot[
    mark=diamond*,
    thick
]
coordinates {
    (1,0.6321)
    (2,0.2326)
    (3,0.0856)
    (4,0.0315)
    (5,0.0116)
    (6,0.0043)
    (7,0.0016)
    (8,0.0006)
    (9,0.0002)
    (10,0.0001)
};
\addlegendentry{$\theta=1$}

\addplot[
    mark=*,
    thick,
    dashed
]
coordinates {
    (1,0.8647)
    (2,0.1170)
    (3,0.0158)
    (4,0.0021)
    (5,0.0003)
    (6,0.0000)
    (7,0.0000)
    (8,0.0000)
    (9,0.0000)
    (10,0.0000)
};
\addlegendentry{$\theta=2$}

\end{axis}
\end{tikzpicture}
\caption{The figure considers an illustrative remaining set $S$ with ten apartments, ordered according to the central ranking $\rho_c(o)=1,\ldots,10$. For $\theta=0$, all apartments are selected with equal probability. As $\theta$ increases, probability mass shifts toward apartments ranked higher in the central ranking.}
\label{fig:theta}
\end{figure}

\newpage

\subsection{Maximum rank under different levels of preference correlation for each of the cooperatives.}

 The following tables report averages across simulated preference profiles for each cooperative. For each value of $\theta$, artificial preference profiles are generated using a distribution centered at the cooperative-specific central ranking. Rawlsian and PS report the average maximum rank under each rule. The normalized columns divide maximum ranks by the number of agents in the cooperative.

\begin{table}[htbp]
\centering
\footnotesize
\caption{Maximum rank under different levels of preference correlation: $C_{1}$}
\label{tab:correlation_simulation_andamios2d}
\begin{threeparttable}
\setlength{\tabcolsep}{4pt}
\begin{tabular}{lrrrrrr}
\hline
 $\theta$ &  Rawlsian &    PS &  PS $-$ Rawlsian &  Rawlsian$/n$ &  PS$/n$ &  $(PS-R)/n$ \\
\hline
     0.00 &      4.21 & 25.61 &            21.40 &          0.16 &    0.98 &        0.82 \\
     0.05 &      5.13 & 25.83 &            20.70 &          0.20 &    0.99 &        0.80 \\
     0.10 &      7.95 & 25.95 &            18.01 &          0.31 &    1.00 &        0.69 \\
     0.20 &     14.62 & 26.00 &            11.38 &          0.56 &    1.00 &        0.44 \\
     0.40 &     19.62 & 26.00 &             6.38 &          0.75 &    1.00 &        0.25 \\
     0.80 &     22.61 & 26.00 &             3.39 &          0.87 &    1.00 &        0.13 \\
     1.60 &     24.29 & 26.00 &             1.71 &          0.93 &    1.00 &        0.07 \\
     3.20 &     25.34 & 26.00 &             0.66 &          0.97 &    1.00 &        0.03 \\
\hline
\end{tabular}

\end{threeparttable}
\end{table}

\begin{table}[htbp]
\centering
\footnotesize
\caption{Maximum rank under different levels of preference correlation: $C_{2}$}
\label{tab:correlation_simulation_andamios3d}
\begin{threeparttable}
\setlength{\tabcolsep}{4pt}
\begin{tabular}{lrrrrrr}
\hline
 $\theta$ &  Rawlsian &    PS &  PS $-$ Rawlsian &  Rawlsian$/n$ &  PS$/n$ &  $(PS-R)/n$ \\
\hline
     0.00 &      3.91 & 17.66 &            13.76 &          0.22 &    0.98 &        0.76 \\
     0.05 &      4.31 & 17.82 &            13.51 &          0.24 &    0.99 &        0.75 \\
     0.10 &      5.34 & 17.92 &            12.58 &          0.30 &    1.00 &        0.70 \\
     0.20 &      8.45 & 17.99 &             9.53 &          0.47 &    1.00 &        0.53 \\
     0.40 &     12.64 & 18.00 &             5.36 &          0.70 &    1.00 &        0.30 \\
     0.80 &     15.10 & 18.00 &             2.89 &          0.84 &    1.00 &        0.16 \\
     1.60 &     16.56 & 18.00 &             1.44 &          0.92 &    1.00 &        0.08 \\
     3.20 &     17.47 & 18.00 &             0.53 &          0.97 &    1.00 &        0.03 \\
\hline
\end{tabular}
\end{threeparttable}
\end{table}

\begin{table}[htbp]
\centering
\footnotesize
\caption{Maximum rank under different levels of preference correlation: $C_{3}$}
\label{tab:correlation_simulation_andamiosampliables}
\begin{threeparttable}
\setlength{\tabcolsep}{4pt}
\begin{tabular}{lrrrrrr}
\hline
 $\theta$ &  Rawlsian &   PS &  PS $-$ Rawlsian &  Rawlsian$/n$ &  PS$/n$ &  $(PS-R)/n$ \\
\hline
     0.00 &      2.19 & 3.34 &             1.15 &          0.55 &    0.84 &        0.29 \\
     0.05 &      2.23 & 3.37 &             1.14 &          0.56 &    0.84 &        0.28 \\
     0.10 &      2.18 & 3.39 &             1.21 &          0.54 &    0.85 &        0.30 \\
     0.20 &      2.22 & 3.39 &             1.17 &          0.56 &    0.85 &        0.29 \\
     0.40 &      2.41 & 3.49 &             1.09 &          0.60 &    0.87 &        0.27 \\
     0.80 &      2.78 & 3.69 &             0.91 &          0.69 &    0.92 &        0.23 \\
     1.60 &      3.41 & 3.92 &             0.52 &          0.85 &    0.98 &        0.13 \\
     3.20 &      3.86 & 3.99 &             0.13 &          0.96 &    1.00 &        0.03 \\
\hline
\end{tabular}
\end{threeparttable}
\end{table}

\begin{table}[htbp]
\centering
\footnotesize
\caption{Maximum rank under different levels of preference correlation: $C_{4}$}
\label{tab:correlation_simulation_coviagricola1d}
\begin{threeparttable}
\setlength{\tabcolsep}{4pt}
\begin{tabular}{lrrrrrr}
\hline
 $\theta$ &  Rawlsian &   PS &  PS $-$ Rawlsian &  Rawlsian$/n$ &  PS$/n$ &  $(PS-R)/n$ \\
\hline
     0.00 &      2.15 & 3.31 &             1.16 &          0.54 &    0.83 &        0.29 \\
     0.05 &      2.23 & 3.43 &             1.20 &          0.56 &    0.86 &        0.30 \\
     0.10 &      2.18 & 3.38 &             1.21 &          0.54 &    0.85 &        0.30 \\
     0.20 &      2.24 & 3.42 &             1.18 &          0.56 &    0.86 &        0.30 \\
     0.40 &      2.46 & 3.58 &             1.11 &          0.62 &    0.89 &        0.28 \\
     0.80 &      2.84 & 3.75 &             0.91 &          0.71 &    0.94 &        0.23 \\
     1.60 &      3.44 & 3.93 &             0.49 &          0.86 &    0.98 &        0.12 \\
     3.20 &      3.83 & 3.99 &             0.17 &          0.96 &    1.00 &        0.04 \\
\hline
\end{tabular}
\end{threeparttable}
\end{table}

\begin{table}[htbp]
\centering
\footnotesize
\caption{Maximum rank under different levels of preference correlation: $C_{5}$}
\label{tab:correlation_simulation_coviagricola2d}
\begin{threeparttable}
\setlength{\tabcolsep}{4pt}
\begin{tabular}{lrrrrrr}
\hline
 $\theta$ &  Rawlsian &    PS &  PS $-$ Rawlsian &  Rawlsian$/n$ &  PS$/n$ &  $(PS-R)/n$ \\
\hline
     0.00 &      4.30 & 27.59 &            23.29 &          0.15 &    0.99 &        0.83 \\
     0.05 &      5.40 & 27.88 &            22.48 &          0.19 &    1.00 &        0.80 \\
     0.10 &      8.91 & 27.97 &            19.06 &          0.32 &    1.00 &        0.68 \\
     0.20 &     16.15 & 27.99 &            11.85 &          0.58 &    1.00 &        0.42 \\
     0.40 &     21.47 & 28.00 &             6.53 &          0.77 &    1.00 &        0.23 \\
     0.80 &     24.51 & 28.00 &             3.49 &          0.88 &    1.00 &        0.12 \\
     1.60 &     26.27 & 28.00 &             1.73 &          0.94 &    1.00 &        0.06 \\
     3.20 &     27.27 & 28.00 &             0.73 &          0.97 &    1.00 &        0.03 \\
\hline
\end{tabular}
\end{threeparttable}
\end{table}

\begin{table}[htbp]
\centering
\footnotesize
\caption{Maximum rank under different levels of preference correlation: $C_{6}$}
\label{tab:correlation_simulation_coviagricola2dc}
\begin{threeparttable}
\setlength{\tabcolsep}{4pt}
\begin{tabular}{lrrrrrr}
\hline
 $\theta$ &  Rawlsian &   PS &  PS $-$ Rawlsian &  Rawlsian$/n$ &  PS$/n$ &  $(PS-R)/n$ \\
\hline
     0.00 &      3.01 & 7.70 &             4.70 &          0.38 &    0.96 &        0.59 \\
     0.05 &      2.95 & 7.72 &             4.77 &          0.37 &    0.96 &        0.60 \\
     0.10 &      3.14 & 7.69 &             4.55 &          0.39 &    0.96 &        0.57 \\
     0.20 &      3.44 & 7.81 &             4.37 &          0.43 &    0.98 &        0.55 \\
     0.40 &      4.65 & 7.92 &             3.28 &          0.58 &    0.99 &        0.41 \\
     0.80 &      6.01 & 7.98 &             1.97 &          0.75 &    1.00 &        0.25 \\
     1.60 &      7.00 & 7.99 &             0.99 &          0.88 &    1.00 &        0.12 \\
     3.20 &      7.71 & 8.00 &             0.29 &          0.96 &    1.00 &        0.04 \\
\hline
\end{tabular}
\end{threeparttable}
\end{table}

\begin{table}[htbp]
\centering
\footnotesize
\caption{Maximum rank under different levels of preference correlation: $C_{7}$}
\label{tab:correlation_simulation_coviagricola3d}
\begin{threeparttable}
\setlength{\tabcolsep}{4pt}
\begin{tabular}{lrrrrrr}
\hline
 $\theta$ &  Rawlsian &    PS &  PS $-$ Rawlsian &  Rawlsian$/n$ &  PS$/n$ &  $(PS-R)/n$ \\
\hline
     0.00 &      4.37 & 28.66 &            24.29 &          0.15 &    0.99 &        0.84 \\
     0.05 &      5.55 & 28.93 &            23.38 &          0.19 &    1.00 &        0.81 \\
     0.10 &      9.13 & 28.97 &            19.84 &          0.31 &    1.00 &        0.68 \\
     0.20 &     16.85 & 29.00 &            12.15 &          0.58 &    1.00 &        0.42 \\
     0.40 &     22.40 & 29.00 &             6.60 &          0.77 &    1.00 &        0.23 \\
     0.80 &     25.52 & 29.00 &             3.48 &          0.88 &    1.00 &        0.12 \\
     1.60 &     27.21 & 29.00 &             1.79 &          0.94 &    1.00 &        0.06 \\
     3.20 &     28.23 & 29.00 &             0.77 &          0.97 &    1.00 &        0.03 \\
\hline
\end{tabular}
\end{threeparttable}
\end{table}

\begin{table}[htbp]
\centering
\footnotesize
\caption{Maximum rank under different levels of preference correlation: $C_{8}$}
\label{tab:correlation_simulation_coviagricola4d}
\begin{threeparttable}
\setlength{\tabcolsep}{4pt}
\begin{tabular}{lrrrrrr}
\hline
 $\theta$ &  Rawlsian &    PS &  PS $-$ Rawlsian &  Rawlsian$/n$ &  PS$/n$ &  $(PS-R)/n$ \\
\hline
     0.00 &      3.36 & 11.76 &             8.40 &          0.28 &    0.98 &        0.70 \\
     0.05 &      3.48 & 11.73 &             8.25 &          0.29 &    0.98 &        0.69 \\
     0.10 &      3.83 & 11.82 &             7.99 &          0.32 &    0.99 &        0.67 \\
     0.20 &      5.02 & 11.91 &             6.88 &          0.42 &    0.99 &        0.57 \\
     0.40 &      7.51 & 11.99 &             4.48 &          0.63 &    1.00 &        0.37 \\
     0.80 &      9.55 & 11.99 &             2.44 &          0.80 &    1.00 &        0.20 \\
     1.60 &     10.82 & 12.00 &             1.18 &          0.90 &    1.00 &        0.10 \\
     3.20 &     11.66 & 12.00 &             0.34 &          0.97 &    1.00 &        0.03 \\
\hline
\end{tabular}
\end{threeparttable}
\end{table}

\begin{table}[htbp]
\centering
\footnotesize
\caption{Maximum rank under different levels of preference correlation: $C_{9}$}
\label{tab:correlation_simulation_coviedu2d}
\begin{threeparttable}
\setlength{\tabcolsep}{4pt}
\begin{tabular}{lrrrrrr}
\hline
 $\theta$ &  Rawlsian &    PS &  PS $-$ Rawlsian &  Rawlsian$/n$ &  PS$/n$ &  $(PS-R)/n$ \\
\hline
     0.00 &      3.65 & 14.69 &            11.05 &          0.24 &    0.98 &        0.74 \\
     0.05 &      3.86 & 14.75 &            10.89 &          0.26 &    0.98 &        0.73 \\
     0.10 &      4.49 & 14.87 &            10.38 &          0.30 &    0.99 &        0.69 \\
     0.20 &      6.57 & 14.98 &             8.41 &          0.44 &    1.00 &        0.56 \\
     0.40 &     10.07 & 15.00 &             4.93 &          0.67 &    1.00 &        0.33 \\
     0.80 &     12.26 & 15.00 &             2.74 &          0.82 &    1.00 &        0.18 \\
     1.60 &     13.66 & 15.00 &             1.34 &          0.91 &    1.00 &        0.09 \\
     3.20 &     14.51 & 15.00 &             0.49 &          0.97 &    1.00 &        0.03 \\
\hline
\end{tabular}
\end{threeparttable}
\end{table}

\begin{table}[htbp]
\centering
\footnotesize
\caption{Maximum rank under different levels of preference correlation: $C_{10}$}
\label{tab:correlation_simulation_coviedu3d}
\begin{threeparttable}
\setlength{\tabcolsep}{4pt}
\begin{tabular}{lrrrrrr}
\hline
 $\theta$ &  Rawlsian &   PS &  PS $-$ Rawlsian &  Rawlsian$/n$ &  PS$/n$ &  $(PS-R)/n$ \\
\hline
     0.00 &      2.21 & 3.44 &             1.24 &          0.55 &    0.86 &        0.31 \\
     0.05 &      2.19 & 3.39 &             1.20 &          0.55 &    0.85 &        0.30 \\
     0.10 &      2.20 & 3.39 &             1.20 &          0.55 &    0.85 &        0.30 \\
     0.20 &      2.26 & 3.41 &             1.14 &          0.57 &    0.85 &        0.29 \\
     0.40 &      2.37 & 3.50 &             1.13 &          0.59 &    0.88 &        0.28 \\
     0.80 &      2.81 & 3.72 &             0.91 &          0.70 &    0.93 &        0.23 \\
     1.60 &      3.42 & 3.94 &             0.52 &          0.85 &    0.98 &        0.13 \\
     3.20 &      3.84 & 3.99 &             0.15 &          0.96 &    1.00 &        0.04 \\
\hline
\end{tabular}
\end{threeparttable}
\end{table}

\begin{table}[htbp]
\centering
\footnotesize
\caption{Maximum rank under different levels of preference correlation: $C_{11}$}
\label{tab:correlation_simulation_delnavio2d}
\begin{threeparttable}
\setlength{\tabcolsep}{4pt}
\begin{tabular}{lrrrrrr}
\hline
 $\theta$ &  Rawlsian &    PS &  PS $-$ Rawlsian &  Rawlsian$/n$ &  PS$/n$ &  $(PS-R)/n$ \\
\hline
     0.00 &      3.36 & 10.72 &             7.36 &          0.31 &    0.97 &        0.67 \\
     0.05 &      3.39 & 10.72 &             7.33 &          0.31 &    0.97 &        0.67 \\
     0.10 &      3.70 & 10.82 &             7.13 &          0.34 &    0.98 &        0.65 \\
     0.20 &      4.48 & 10.90 &             6.41 &          0.41 &    0.99 &        0.58 \\
     0.40 &      6.82 & 10.99 &             4.17 &          0.62 &    1.00 &        0.38 \\
     0.80 &      8.63 & 11.00 &             2.37 &          0.78 &    1.00 &        0.22 \\
     1.60 &      9.84 & 11.00 &             1.16 &          0.89 &    1.00 &        0.11 \\
     3.20 &     10.63 & 11.00 &             0.37 &          0.97 &    1.00 &        0.03 \\
\hline
\end{tabular}
\end{threeparttable}
\end{table}

\begin{table}[htbp]
\centering
\footnotesize
\caption{Maximum rank under different levels of preference correlation: $C_{12}$}
\label{tab:correlation_simulation_delnavio3d}
\begin{threeparttable}
\setlength{\tabcolsep}{4pt}
\begin{tabular}{lrrrrrr}
\hline
 $\theta$ &  Rawlsian &    PS &  PS $-$ Rawlsian &  Rawlsian$/n$ &  PS$/n$ &  $(PS-R)/n$ \\
\hline
     0.00 &      3.71 & 15.67 &            11.95 &          0.23 &    0.98 &        0.75 \\
     0.05 &      4.02 & 15.76 &            11.74 &          0.25 &    0.98 &        0.73 \\
     0.10 &      4.72 & 15.91 &            11.20 &          0.29 &    0.99 &        0.70 \\
     0.20 &      7.31 & 15.96 &             8.66 &          0.46 &    1.00 &        0.54 \\
     0.40 &     10.92 & 15.99 &             5.07 &          0.68 &    1.00 &        0.32 \\
     0.80 &     13.23 & 16.00 &             2.77 &          0.83 &    1.00 &        0.17 \\
     1.60 &     14.62 & 16.00 &             1.38 &          0.91 &    1.00 &        0.09 \\
     3.20 &     15.52 & 16.00 &             0.48 &          0.97 &    1.00 &        0.03 \\
\hline
\end{tabular}
\end{threeparttable}
\end{table}

\begin{table}[htbp]
\centering
\footnotesize
\caption{Maximum rank under different levels of preference correlation: $C_{13}$}
\label{tab:correlation_simulation_gardelianasur}
\begin{threeparttable}
\setlength{\tabcolsep}{4pt}
\begin{tabular}{lrrrrrr}
\hline
 $\theta$ &  Rawlsian &    PS &  PS $-$ Rawlsian &  Rawlsian$/n$ &  PS$/n$ &  $(PS-R)/n$ \\
\hline
     0.00 &      4.64 & 38.58 &            33.94 &          0.12 &    0.99 &        0.87 \\
     0.05 &      7.22 & 38.97 &            31.75 &          0.19 &    1.00 &        0.81 \\
     0.10 &     14.84 & 39.00 &            24.16 &          0.38 &    1.00 &        0.62 \\
     0.20 &     25.51 & 39.00 &            13.48 &          0.65 &    1.00 &        0.35 \\
     0.40 &     31.67 & 39.00 &             7.33 &          0.81 &    1.00 &        0.19 \\
     0.80 &     35.08 & 39.00 &             3.92 &          0.90 &    1.00 &        0.10 \\
     1.60 &     37.17 & 39.00 &             1.83 &          0.95 &    1.00 &        0.05 \\
     3.20 &     38.20 & 39.00 &             0.80 &          0.98 &    1.00 &        0.02 \\
\hline
\end{tabular}
\end{threeparttable}
\end{table}

\begin{table}[htbp]
\centering
\footnotesize
\caption{Maximum rank under different levels of preference correlation: $C_{14}$}
\label{tab:correlation_simulation_maleconmaua}
\begin{threeparttable}
\setlength{\tabcolsep}{4pt}
\begin{tabular}{lrrrrrr}
\hline
 $\theta$ &  Rawlsian &    PS &  PS $-$ Rawlsian &  Rawlsian$/n$ &  PS$/n$ &  $(PS-R)/n$ \\
\hline
     0.00 &      4.68 & 41.61 &            36.94 &          0.11 &    0.99 &        0.88 \\
     0.05 &      8.11 & 41.96 &            33.86 &          0.19 &    1.00 &        0.81 \\
     0.10 &     17.09 & 41.99 &            24.90 &          0.41 &    1.00 &        0.59 \\
     0.20 &     28.37 & 42.00 &            13.63 &          0.68 &    1.00 &        0.32 \\
     0.40 &     34.60 & 42.00 &             7.40 &          0.82 &    1.00 &        0.18 \\
     0.80 &     38.01 & 42.00 &             3.99 &          0.91 &    1.00 &        0.09 \\
     1.60 &     40.03 & 42.00 &             1.97 &          0.95 &    1.00 &        0.05 \\
     3.20 &     41.14 & 42.00 &             0.86 &          0.98 &    1.00 &        0.02 \\
\hline
\end{tabular}
\end{threeparttable}
\end{table}

\begin{table}[htbp]
\centering
\footnotesize
\caption{Maximum rank under different levels of preference correlation: $C_{15}$}
\label{tab:correlation_simulation_nuevaera2d}
\begin{threeparttable}
\setlength{\tabcolsep}{4pt}
\begin{tabular}{lrrrrrr}
\hline
 $\theta$ &  Rawlsian &    PS &  PS $-$ Rawlsian &  Rawlsian$/n$ &  PS$/n$ &  $(PS-R)/n$ \\
\hline
     0.00 &      3.61 & 13.74 &            10.13 &          0.26 &    0.98 &        0.72 \\
     0.05 &      3.86 & 13.79 &             9.93 &          0.28 &    0.98 &        0.71 \\
     0.10 &      4.26 & 13.90 &             9.64 &          0.30 &    0.99 &        0.69 \\
     0.20 &      6.08 & 13.97 &             7.89 &          0.43 &    1.00 &        0.56 \\
     0.40 &      9.14 & 13.99 &             4.85 &          0.65 &    1.00 &        0.35 \\
     0.80 &     11.38 & 14.00 &             2.62 &          0.81 &    1.00 &        0.19 \\
     1.60 &     12.70 & 14.00 &             1.30 &          0.91 &    1.00 &        0.09 \\
     3.20 &     13.56 & 14.00 &             0.44 &          0.97 &    1.00 &        0.03 \\
\hline
\end{tabular}
\end{threeparttable}
\end{table}

\begin{table}[htbp]
\centering
\footnotesize
\caption{Maximum rank under different levels of preference correlation: $C_{16}$}
\label{tab:correlation_simulation_nuevaera3d}
\begin{threeparttable}
\setlength{\tabcolsep}{4pt}
\begin{tabular}{lrrrrrr}
\hline
 $\theta$ &  Rawlsian &   PS &  PS $-$ Rawlsian &  Rawlsian$/n$ &  PS$/n$ &  $(PS-R)/n$ \\
\hline
     0.00 &      2.58 & 5.48 &             2.90 &          0.43 &    0.91 &        0.48 \\
     0.05 &      2.65 & 5.60 &             2.95 &          0.44 &    0.93 &        0.49 \\
     0.10 &      2.61 & 5.57 &             2.95 &          0.44 &    0.93 &        0.49 \\
     0.20 &      2.83 & 5.58 &             2.75 &          0.47 &    0.93 &        0.46 \\
     0.40 &      3.46 & 5.79 &             2.33 &          0.58 &    0.96 &        0.39 \\
     0.80 &      4.35 & 5.91 &             1.56 &          0.72 &    0.98 &        0.26 \\
     1.60 &      5.23 & 5.98 &             0.74 &          0.87 &    1.00 &        0.12 \\
     3.20 &      5.78 & 6.00 &             0.22 &          0.96 &    1.00 &        0.04 \\
\hline
\end{tabular}
\end{threeparttable}
\end{table}

\begin{table}[htbp]
\centering
\footnotesize
\caption{Maximum rank under different levels of preference correlation: $C_{17}$}
\label{tab:correlation_simulation_puebla2d}
\begin{threeparttable}
\setlength{\tabcolsep}{4pt}
\begin{tabular}{lrrrrrr}
\hline
 $\theta$ &  Rawlsian &   PS &  PS $-$ Rawlsian &  Rawlsian$/n$ &  PS$/n$ &  $(PS-R)/n$ \\
\hline
     0.00 &      3.09 & 8.70 &             5.62 &          0.34 &    0.97 &        0.62 \\
     0.05 &      3.15 & 8.71 &             5.56 &          0.35 &    0.97 &        0.62 \\
     0.10 &      3.30 & 8.78 &             5.48 &          0.37 &    0.98 &        0.61 \\
     0.20 &      3.85 & 8.89 &             5.05 &          0.43 &    0.99 &        0.56 \\
     0.40 &      5.31 & 8.97 &             3.66 &          0.59 &    1.00 &        0.41 \\
     0.80 &      6.84 & 8.99 &             2.16 &          0.76 &    1.00 &        0.24 \\
     1.60 &      7.93 & 9.00 &             1.07 &          0.88 &    1.00 &        0.12 \\
     3.20 &      8.64 & 9.00 &             0.36 &          0.96 &    1.00 &        0.04 \\
\hline
\end{tabular}
\end{threeparttable}
\end{table}

\begin{table}[htbp]
\centering
\footnotesize
\caption{Maximum rank under different levels of preference correlation: $C_{18}$}
\label{tab:correlation_simulation_puebla3d}
\begin{threeparttable}
\setlength{\tabcolsep}{4pt}
\begin{tabular}{lrrrrrr}
\hline
 $\theta$ &  Rawlsian &    PS &  PS $-$ Rawlsian &  Rawlsian$/n$ &  PS$/n$ &  $(PS-R)/n$ \\
\hline
     0.00 &      3.63 & 14.67 &            11.05 &          0.24 &    0.98 &        0.74 \\
     0.05 &      3.91 & 14.75 &            10.84 &          0.26 &    0.98 &        0.72 \\
     0.10 &      4.44 & 14.87 &            10.43 &          0.30 &    0.99 &        0.70 \\
     0.20 &      6.60 & 14.98 &             8.38 &          0.44 &    1.00 &        0.56 \\
     0.40 &     10.00 & 14.99 &             4.99 &          0.67 &    1.00 &        0.33 \\
     0.80 &     12.28 & 15.00 &             2.72 &          0.82 &    1.00 &        0.18 \\
     1.60 &     13.61 & 15.00 &             1.39 &          0.91 &    1.00 &        0.09 \\
     3.20 &     14.54 & 15.00 &             0.46 &          0.97 &    1.00 &        0.03 \\
\hline
\end{tabular}
\end{threeparttable}
\end{table}

\begin{table}[htbp]
\centering
\footnotesize
\caption{Maximum rank under different levels of preference correlation: $C_{19}$}
\label{tab:correlation_simulation_puertofabini3d}
\begin{threeparttable}
\setlength{\tabcolsep}{4pt}
\begin{tabular}{lrrrrrr}
\hline
 $\theta$ &  Rawlsian &   PS &  PS $-$ Rawlsian &  Rawlsian$/n$ &  PS$/n$ &  $(PS-R)/n$ \\
\hline
     0.00 &      3.10 & 8.72 &             5.62 &          0.34 &    0.97 &        0.62 \\
     0.05 &      3.10 & 8.66 &             5.56 &          0.34 &    0.96 &        0.62 \\
     0.10 &      3.23 & 8.75 &             5.52 &          0.36 &    0.97 &        0.61 \\
     0.20 &      3.86 & 8.83 &             4.98 &          0.43 &    0.98 &        0.55 \\
     0.40 &      5.28 & 8.96 &             3.68 &          0.59 &    1.00 &        0.41 \\
     0.80 &      6.90 & 9.00 &             2.10 &          0.77 &    1.00 &        0.23 \\
     1.60 &      8.00 & 9.00 &             0.99 &          0.89 &    1.00 &        0.11 \\
     3.20 &      8.66 & 9.00 &             0.34 &          0.96 &    1.00 &        0.04 \\
\hline
\end{tabular}
\end{threeparttable}
\end{table}

\begin{table}[htbp]
\centering
\footnotesize
\caption{Maximum rank under different levels of preference correlation: $C_{20}$}
\label{tab:correlation_simulation_santamonica}
\begin{threeparttable}
\setlength{\tabcolsep}{4pt}
\begin{tabular}{lrrrrrr}
\hline
 $\theta$ &  Rawlsian &    PS &  PS $-$ Rawlsian &  Rawlsian$/n$ &  PS$/n$ &  $(PS-R)/n$ \\
\hline
     0.00 &      4.03 & 19.65 &            15.62 &          0.20 &    0.98 &        0.78 \\
     0.05 &      4.42 & 19.83 &            15.42 &          0.22 &    0.99 &        0.77 \\
     0.10 &      5.86 & 19.95 &            14.08 &          0.29 &    1.00 &        0.70 \\
     0.20 &     10.02 & 19.99 &             9.97 &          0.50 &    1.00 &        0.50 \\
     0.40 &     14.30 & 20.00 &             5.70 &          0.72 &    1.00 &        0.28 \\
     0.80 &     16.89 & 20.00 &             3.11 &          0.84 &    1.00 &        0.16 \\
     1.60 &     18.47 & 20.00 &             1.53 &          0.92 &    1.00 &        0.08 \\
     3.20 &     19.39 & 20.00 &             0.61 &          0.97 &    1.00 &        0.03 \\
\hline
\end{tabular}
\end{threeparttable}
\end{table}

\begin{table}[htbp]
\centering
\footnotesize
\caption{Maximum rank under different levels of preference correlation: $C_{21}$}
\label{tab:correlation_simulation_tataypy2d}
\begin{threeparttable}
\setlength{\tabcolsep}{4pt}
\begin{tabular}{lrrrrrr}
\hline
 $\theta$ &  Rawlsian &    PS &  PS $-$ Rawlsian &  Rawlsian$/n$ &  PS$/n$ &  $(PS-R)/n$ \\
\hline
     0.00 &      4.18 & 23.60 &            19.42 &          0.17 &    0.98 &        0.81 \\
     0.05 &      4.97 & 23.85 &            18.88 &          0.21 &    0.99 &        0.79 \\
     0.10 &      7.22 & 23.95 &            16.74 &          0.30 &    1.00 &        0.70 \\
     0.20 &     12.84 & 23.99 &            11.15 &          0.53 &    1.00 &        0.46 \\
     0.40 &     17.92 & 24.00 &             6.08 &          0.75 &    1.00 &        0.25 \\
     0.80 &     20.77 & 24.00 &             3.23 &          0.87 &    1.00 &        0.13 \\
     1.60 &     22.38 & 24.00 &             1.62 &          0.93 &    1.00 &        0.07 \\
     3.20 &     23.40 & 24.00 &             0.60 &          0.97 &    1.00 &        0.03 \\
\hline
\end{tabular}
\end{threeparttable}
\end{table}

\begin{table}[htbp]
\centering
\footnotesize
\caption{Maximum rank under different levels of preference correlation: $C_{22}$}
\label{tab:correlation_simulation_tataypy3d}
\begin{threeparttable}
\setlength{\tabcolsep}{4pt}
\begin{tabular}{lrrrrrr}
\hline
 $\theta$ &  Rawlsian &   PS &  PS $-$ Rawlsian &  Rawlsian$/n$ &  PS$/n$ &  $(PS-R)/n$ \\
\hline
     0.00 &      2.79 & 6.59 &             3.80 &          0.40 &    0.94 &        0.54 \\
     0.05 &      2.77 & 6.61 &             3.83 &          0.40 &    0.94 &        0.55 \\
     0.10 &      2.87 & 6.63 &             3.77 &          0.41 &    0.95 &        0.54 \\
     0.20 &      3.19 & 6.71 &             3.51 &          0.46 &    0.96 &        0.50 \\
     0.40 &      3.85 & 6.89 &             3.04 &          0.55 &    0.98 &        0.43 \\
     0.80 &      5.26 & 6.99 &             1.73 &          0.75 &    1.00 &        0.25 \\
     1.60 &      6.09 & 7.00 &             0.91 &          0.87 &    1.00 &        0.13 \\
     3.20 &      6.76 & 7.00 &             0.24 &          0.97 &    1.00 &        0.03 \\
\hline
\end{tabular}
\end{threeparttable}
\end{table}

\begin{table}[htbp]
\centering
\footnotesize
\caption{Maximum rank under different levels of preference correlation: $C_{23}$}
\label{tab:correlation_simulation_virazon2d}
\begin{threeparttable}
\setlength{\tabcolsep}{4pt}
\begin{tabular}{lrrrrrr}
\hline
 $\theta$ &  Rawlsian &    PS &  PS $-$ Rawlsian &  Rawlsian$/n$ &  PS$/n$ &  $(PS-R)/n$ \\
\hline
     0.00 &      4.74 & 39.56 &            34.82 &          0.12 &    0.99 &        0.87 \\
     0.05 &      7.48 & 39.96 &            32.48 &          0.19 &    1.00 &        0.81 \\
     0.10 &     15.63 & 39.99 &            24.36 &          0.39 &    1.00 &        0.61 \\
     0.20 &     26.26 & 40.00 &            13.73 &          0.66 &    1.00 &        0.34 \\
     0.40 &     32.65 & 40.00 &             7.35 &          0.82 &    1.00 &        0.18 \\
     0.80 &     36.12 & 40.00 &             3.88 &          0.90 &    1.00 &        0.10 \\
     1.60 &     38.06 & 40.00 &             1.94 &          0.95 &    1.00 &        0.05 \\
     3.20 &     39.14 & 40.00 &             0.86 &          0.98 &    1.00 &        0.02 \\
\hline
\end{tabular}
\end{threeparttable}
\end{table}

\begin{table}[htbp]
\centering
\footnotesize
\caption{Maximum rank under different levels of preference correlation: $C_{24}$}
\label{tab:correlation_simulation_virazon3d}
\begin{threeparttable}
\setlength{\tabcolsep}{4pt}
\begin{tabular}{lrrrrrr}
\hline
 $\theta$ &  Rawlsian &   PS &  PS $-$ Rawlsian &  Rawlsian$/n$ &  PS$/n$ &  $(PS-R)/n$ \\
\hline
     0.00 &      2.98 & 7.64 &             4.66 &          0.37 &    0.95 &        0.58 \\
     0.05 &      2.96 & 7.69 &             4.74 &          0.37 &    0.96 &        0.59 \\
     0.10 &      3.06 & 7.71 &             4.65 &          0.38 &    0.96 &        0.58 \\
     0.20 &      3.49 & 7.77 &             4.28 &          0.44 &    0.97 &        0.54 \\
     0.40 &      4.62 & 7.93 &             3.31 &          0.58 &    0.99 &        0.41 \\
     0.80 &      6.05 & 7.98 &             1.93 &          0.76 &    1.00 &        0.24 \\
     1.60 &      7.05 & 7.99 &             0.94 &          0.88 &    1.00 &        0.12 \\
     3.20 &      7.72 & 8.00 &             0.28 &          0.97 &    1.00 &        0.03 \\
\hline
\end{tabular}
\end{threeparttable}
\end{table}

\newpage

\section{Expected number of families that are assigned apartments with rank \texorpdfstring{$k=1,\ldots,n$}{k=1,...,n} by the Rawlsian, PS and MTAV rules}\label{dist_agents}

\begin{table}[htbp]
\small
     \centering
     \caption{Expected number of families that are assigned apartments with rank up to $k=1,\ldots,16$ by the Rawlsian, PS and MTAV rules.}
     \begin{tabular}{|c|ccc|ccc|ccc|ccc|}
          \hline
        & \multicolumn{3}{c|}{$C_{1}$} & \multicolumn{3}{c|}{$C_{2}$} & \multicolumn{3}{c|}{$C_{3}$} & \multicolumn{3}{c|}{$C_{4}$}  \\ \hline
     Position & Rawls & PS & MTAV & Rawls & PS & MTAV & Rawls & PS & MTAV & Rawls & PS & MTAV \\  \hline
     1 & 7.17 & 12.54 & 10 & 2.4 & 6.42 & 8 & 2 & 2.44 & 2 & 2 & 2 & 2 \\2 & 10.36 & 14.61 & 15 & 4.4 & 7.13 & 10 & 4 & 3.33 & 4 & 2 & 2 & 2 \\3 & 13.33 & 15.64 & 16 & 5 & 7.64 & 10 & 4 & 3.56 & 4 & 4 & 4 & 4 \\4 & 16.67 & 16.51 & 18 & 5 & 8.34 & 10 & 4 & 4 & 4 & 4 & 4 & 4 \\5 & 17 & 16.99 & 18 & 8 & 9.42 & 11 & - & - & - & - & - & - \\6 & 18 & 17.73 & 19 & 11 & 10.55 & 13 & - & - & - & - & - & - \\7 & 20 & 18.69 & 20 & 15 & 11.91 & 13 & - & - & - & - & - & - \\8 & 22 & 19.62 & 23 & 15 & 12.7 & 13 & - & - & - & - & - & - \\9 & 23 & 19.78 & 23 & 17 & 13.46 & 14 & - & - & - & - & - & - \\10 & 23 & 20.01 & 23 & 17 & 14.13 & 14 & - & - & - & - & - & - \\11 & 24 & 20.15 & 23 & 17 & 14.66 & 16 & - & - & - & - & - & - \\12 & 25 & 20.45 & 24 & 18 & 15.04 & 18 & - & - & - & - & - & - \\13 & 26 & 21.49 & 26 & 18 & 15.34 & 18 & - & - & - & - & - & - \\14 & 26 & 21.91 & 26 & 18 & 15.98 & 18 & - & - & - & - & - & - \\15 & 26 & 22.51 & 26 & 18 & 16.32 & 18 & - & - & - & - & - & - \\16 & 26 & 22.76 & 26 & 18 & 17.13 & 18 & - & - & - & - & - & - \\Total & 26 & 26 & 26 & 18 & 18 & 18 & - & - & - & - & - & -      \\
     \hline 
     \end{tabular}
     \label{dist1}
     \end{table}

\begin{table}[htbp]
    \small
     \centering
      \caption{Expected number of families that are assigned apartments with rank up to  $k=1,\ldots,16$ by the Rawlsian, PS and MTAV rules.}
     \begin{tabular}{|c|ccc|ccc|ccc|ccc|}
 \hline
        & \multicolumn{3}{c|}{$C_{5}$} & \multicolumn{3}{c|}{$C_{6}$} & \multicolumn{3}{c|}{$C_{7}$} & \multicolumn{3}{c|}{$C_{8}$}  \\ \hline
     Position & Rawls & PS & MTAV & Rawls & PS & MTAV & Rawls & PS & MTAV & Rawls & PS & MTAV \\  \hline
     1 & 7.75 & 13.82 & 11 & 4 & 4.28 & 4 & 11.5 & 11.72 & 13 & 3.5 & 4.51 & 4 \\2 & 13.58 & 18.36 & 17 & 7 & 5.78 & 7 & 14.5 & 14.83 & 16 & 7.67 & 7.52 & 8 \\3 & 17 & 18.97 & 20 & 8 & 6.72 & 8 & 15.5 & 16.63 & 18 & 11 & 9.21 & 11 \\4 & 19 & 20.03 & 21 & 8 & 7.22 & 8 & 18 & 17.94 & 20 & 11 & 9.33 & 11 \\5 & 21.5 & 21.04 & 22 & 8 & 7.56 & 8 & 21.5 & 19.36 & 20 & 11 & 9.66 & 11 \\6 & 24.5 & 21.52 & 23 & 8 & 7.89 & 8 & 23 & 20.09 & 22 & 11 & 10.02 & 11 \\7 & 27 & 22.07 & 26 & 8 & 7.89 & 8 & 27 & 21.42 & 27 & 12 & 10.27 & 12 \\8 & 28 & 22.52 & 28 & 8 & 8 & 8 & 29 & 22.46 & 29 & 12 & 10.76 & 12 \\9 & 28 & 23.05 & 28 & - & - & - & 29 & 23.12 & 29 & 12 & 10.94 & 12 \\10 & 28 & 23.34 & 28 & - & - & - & 29 & 23.95 & 29 & 12 & 11.05 & 12 \\11 & 28 & 24.04 & 28 & - & - & - & 29 & 24.87 & 29 & 12 & 11.25 & 12 \\12 & 28 & 24.55 & 28 & - & - & - & 29 & 25.42 & 29 & 12 & 12 & 12 \\13 & 28 & 25.14 & 28 & - & - & - & 29 & 25.91 & 29 & - & - & - \\14 & 28 & 25.33 & 28 & - & - & - & 29 & 26.2 & 29 & - & - & - \\15 & 28 & 25.93 & 28 & - & - & - & 29 & 27.03 & 29 & - & - & - \\16 & 28 & 26.11 & 28 & - & - & - & 29 & 27.8 & 29 & - & - & - \\Total & 28 & 28 & 28 & - & - & - & 29 & 29 & 29 & - & - & -  \\
     \hline 
     \end{tabular}
     \label{dist2}
     \end{table}

\begin{table}[htbp]
     \small
     \centering
      \caption{Expected number of families that are assigned apartments with rank up to  $k=1,\ldots,16$ by the Rawlsian, PS and MTAV rules.}
     \begin{tabular}{|c|ccc|ccc|ccc|ccc|}
    \hline
        & \multicolumn{3}{c|}{$C_{9}$} & \multicolumn{3}{c|}{$C_{10}$} & \multicolumn{3}{c|}{$C_{11}$} & \multicolumn{3}{c|}{$C_{12}$}  \\ \hline
     Position & Rawls & PS & MTAV & Rawls & PS & MTAV & Rawls & PS & MTAV & Rawls & PS & MTAV \\ \hline
     1 & 4.88 & 7.23 & 8 & 2.42 & 2.42 & 3 & 4 & 5.63 & 6 & 7.81 & 9.05 & 8 \\2 & 7.25 & 8.59 & 9 & 2.75 & 2.75 & 3 & 7 & 6.33 & 7 & 11.38 & 11.73 & 12 \\3 & 9 & 9.4 & 10 & 3 & 3 & 3 & 10 & 8.56 & 9 & 13 & 12.44 & 13 \\4 & 13 & 10.85 & 12 & 4 & 4 & 4 & 10 & 9.13 & 10 & 13.75 & 12.99 & 14 \\5 & 14 & 12.16 & 13 & - & - & - & 11 & 9.59 & 11 & 15 & 13.53 & 15 \\6 & 15 & 13.03 & 15 & - & - & - & 11 & 10.31 & 11 & 16 & 13.88 & 16 \\7 & 15 & 13.29 & 15 & - & - & - & 11 & 10.44 & 11 & 16 & 14.26 & 16 \\8 & 15 & 13.35 & 15 & - & - & - & 11 & 10.6 & 11 & 16 & 14.58 & 16 \\9 & 15 & 13.69 & 15 & - & - & - & 11 & 10.86 & 11 & 16 & 15.03 & 16 \\10 & 15 & 14.47 & 15 & - & - & - & 11 & 10.89 & 11 & 16 & 15.29 & 16 \\11 & 15 & 14.74 & 15 & - & - & - & 11 & 11 & 11 & 16 & 15.4 & 16 \\12 & 15 & 14.8 & 15 & - & - & - & - & - & - & 16 & 15.48 & 16 \\13 & 15 & 14.93 & 15 & - & - & - & - & - & - & 16 & 15.65 & 16 \\14 & 15 & 15 & 15 & - & - & - & - & - & - & 16 & 15.85 & 16 \\15 & 15 & 15 & 15 & - & - & - & - & - & - & 16 & 15.91 & 16 \\16 & - & - & - & - & - & - & - & - & - & 16 & 16 & 16 \\Total & - & - & - & - & - & - & - & - & - & - & - & -  \\
     \hline 
     \end{tabular}
     \label{dist3}
     \end{table}

\begin{table}[htbp]
\small
     \centering
      \caption{Expected number of families that are assigned apartments with rank up to  $k=1,\ldots,16$ by the Rawlsian, PS and MTAV rules.}
     \begin{tabular}{|c|ccc|ccc|ccc|ccc|}
    \hline
        & \multicolumn{3}{c|}{$C_{13}$} & \multicolumn{3}{c|}{$C_{14}$} & \multicolumn{3}{c|}{$C_{15}$} & \multicolumn{3}{c|}{$C_{16}$}  \\ \hline
     Position & Rawls & PS & MTAV & Rawls & PS & MTAV & Rawls & PS & MTAV & Rawls & PS & MTAV \\ \hline
     1 & 4.5 & 9.29 & 9 & 3.11 & 12 & 7 & 1.78 & 3.65 & 4 & 2.17 & 2.44 & 3 \\2 & 11.5 & 16.52 & 14 & 8.46 & 15.38 & 16 & 3.75 & 5.88 & 8 & 3.5 & 3.53 & 4 \\3 & 15.5 & 19.01 & 19 & 9.97 & 17.15 & 19 & 4.58 & 6.55 & 8 & 4.83 & 3.92 & 5 \\4 & 19.5 & 20.97 & 24 & 14.15 & 18.66 & 20 & 7.33 & 7.75 & 8 & 5 & 4.51 & 5 \\5 & 23 & 23.22 & 28 & 18.4 & 19.78 & 24 & 10 & 9.37 & 10 & 5 & 5 & 5 \\6 & 26 & 24.68 & 30 & 19.99 & 21.01 & 24 & 13 & 10.43 & 11 & 6 & 6 & 6 \\7 & 29 & 25.84 & 31 & 22.96 & 22.7 & 26 & 13 & 10.52 & 12 & - & - & - \\8 & 29.5 & 27.09 & 31 & 24.79 & 24.02 & 31 & 13 & 11.06 & 12 & - & - & - \\9 & 34 & 27.76 & 33 & 26.08 & 24.51 & 32 & 14 & 11.71 & 14 & - & - & - \\10 & 36 & 28.49 & 34 & 27.08 & 24.81 & 33 & 14 & 12.02 & 14 & - & - & - \\11 & 36 & 29.12 & 35 & 28.58 & 25.49 & 33 & 14 & 12.47 & 14 & - & - & - \\12 & 37 & 29.59 & 36 & 30.33 & 26.69 & 34 & 14 & 13.19 & 14 & - & - & - \\13 & 38 & 30.44 & 37 & 33.5 & 28.07 & 35 & 14 & 13.52 & 14 & - & - & - \\14 & 39 & 31.04 & 39 & 35.5 & 29.28 & 35 & 14 & 14 & 14 & - & - & - \\15 & 39 & 31.27 & 39 & 36 & 30.03 & 36 & - & - & - & - & - & - \\16 & 39 & 31.66 & 39 & 37 & 30.37 & 36 & - & - & - & - & - & - \\Total & 39 & 39 & 39 & 42 & 42 & 42 & - & - & - & - & - & -  \\
     \hline 
     \end{tabular}
     \label{dist4}
     \end{table}

\begin{table}[htbp]
\small
     \centering
     \caption{Expected number of families that are assigned apartments with rank up to  $k=1,\ldots,16$ by the Rawlsian, PS and MTAV rules.}
     \begin{tabular}{|c|ccc|ccc|ccc|ccc|}
      \hline
        & \multicolumn{3}{c|}{$C_{17}$} & \multicolumn{3}{c|}{$C_{18}$} & \multicolumn{3}{c|}{$C_{19}$} & \multicolumn{3}{c|}{$C_{20}$}  \\ \hline
     Position & Rawls & PS & MTAV & Rawls & PS & MTAV & Rawls & PS & MTAV & Rawls & PS & MTAV \\ \hline
     1 & 3.5 & 4.62 & 5 & 5.33 & 8.39 & 9 & 2.8 & 3.14 & 4 & 2.21 & 5.34 & 5 \\2 & 8 & 6.75 & 7 & 7 & 8.58 & 9 & 3.8 & 3.97 & 5 & 5.92 & 6.74 & 8 \\3 & 9 & 7.68 & 9 & 9.5 & 9.09 & 9 & 3.8 & 4.34 & 5 & 7.17 & 8.66 & 9 \\4 & 9 & 8.6 & 9 & 9.5 & 9.15 & 9 & 4.83 & 4.57 & 5 & 10.42 & 10.57 & 12 \\5 & 9 & 8.8 & 9 & 10.25 & 9.69 & 10 & 5.75 & 5.33 & 5 & 14.75 & 12.5 & 14 \\6 & 9 & 8.83 & 9 & 10.5 & 10.17 & 10 & 7 & 6.33 & 7 & 16 & 13.53 & 16 \\7 & 9 & 8.91 & 9 & 14 & 12.18 & 14 & 7.89 & 7.11 & 7 & 16 & 13.84 & 16 \\8 & 9 & 8.98 & 9 & 15 & 13.09 & 15 & 8 & 8 & 8 & 17 & 14.64 & 16 \\9 & 9 & 9 & 9 & 15 & 13.27 & 15 & 9 & 9 & 9 & 19 & 15.61 & 19 \\10 & - & - & - & 15 & 13.65 & 15 & - & - & - & 20 & 16.11 & 20 \\11 & - & - & - & 15 & 14.19 & 15 & - & - & - & 20 & 16.53 & 20 \\12 & - & - & - & 15 & 14.27 & 15 & - & - & - & 20 & 17.15 & 20 \\13 & - & - & - & 15 & 14.61 & 15 & - & - & - & 20 & 17.69 & 20 \\14 & - & - & - & 15 & 14.9 & 15 & - & - & - & 20 & 17.95 & 20 \\15 & - & - & - & 15 & 15 & 15 & - & - & - & 20 & 18.61 & 20 \\16 & - & - & - & - & - & - & - & - & - & 20 & 18.89 & 20 \\Total & - & - & - & - & - & - & - & - & - & 20 & 20 & 20  \\
     \hline 
     \end{tabular}
     \label{dist5}
     \end{table}

\begin{table}[htbp]
\small
     \centering
      \caption{Expected number of families that are assigned apartments with rank up to  $k=1,\ldots,16$ by the Rawlsian, PS and MTAV rules.}
     \begin{tabular}{|c|ccc|ccc|ccc|ccc|}
     \hline
        & \multicolumn{3}{c|}{$C_{21}$} & \multicolumn{3}{c|}{$C_{22}$} & \multicolumn{3}{c|}{$C_{23}$} & \multicolumn{3}{c|}{$C_{24}$}  \\ \hline
     Position & Rawls & PS & MTAV & Rawls & PS & MTAV & Rawls & PS & MTAV & Rawls & PS & MTAV \\ \hline
     1 & 8.79 & 12.44 & 12 & 4 & 3.75 & 4 & 5.5 & 12.16 & 10 & 2.38 & 2.83 & 3 \\2 & 10.71 & 13.01 & 14 & 7 & 5.38 & 7 & 11.31 & 17.44 & 15 & 4.55 & 3.4 & 5 \\3 & 13.83 & 14.38 & 17 & 7 & 6.67 & 7 & 15.94 & 20.44 & 19 & 4.8 & 3.97 & 5 \\4 & 15.17 & 15.24 & 17 & 7 & 6.83 & 7 & 24.42 & 23.16 & 27 & 4.8 & 4.29 & 5 \\5 & 22 & 18.45 & 21 & 7 & 6.88 & 7 & 28.47 & 25.91 & 29 & 5 & 5 & 5 \\6 & 23 & 19.15 & 22 & 7 & 6.92 & 7 & 32.36 & 27.93 & 34 & 7 & 6.29 & 7 \\7 & 24 & 19.79 & 24 & 7 & 7 & 7 & 34.92 & 29.24 & 35 & 8 & 7.46 & 8 \\8 & 24 & 20.24 & 24 & - & - & - & 37.5 & 30.17 & 37 & 8 & 8 & 8 \\9 & 24 & 20.51 & 24 & - & - & - & 38 & 31.08 & 37 & - & - & - \\10 & 24 & 20.88 & 24 & - & - & - & 38.5 & 31.54 & 38 & - & - & - \\11 & 24 & 21.69 & 24 & - & - & - & 40 & 32.24 & 40 & - & - & - \\12 & 24 & 21.84 & 24 & - & - & - & 40 & 32.63 & 40 & - & - & - \\13 & 24 & 22.2 & 24 & - & - & - & 40 & 32.99 & 40 & - & - & - \\14 & 24 & 22.35 & 24 & - & - & - & 40 & 33.68 & 40 & - & - & - \\15 & 24 & 22.67 & 24 & - & - & - & 40 & 33.99 & 40 & - & - & - \\16 & 24 & 22.77 & 24 & - & - & - & 40 & 34.35 & 40 & - & - & - \\Total & 24 & 24 & 24 & - & - & - & 40 & 40 & 40 & - & - & -  \\
     \hline 
     \end{tabular}
     \label{dist6}
     \end{table}
     
\newpage 

\subsection{Cumulative distribution function of the expected number of families that are assigned apartments by rank by the Rawlsian, PS and MTAV rules.}\label{cum_figs}

\begin{figure}[htp]
\centering
\begin{subfigure}{.5\textwidth}
  \centering
  \includegraphics[width=1.1\linewidth]{./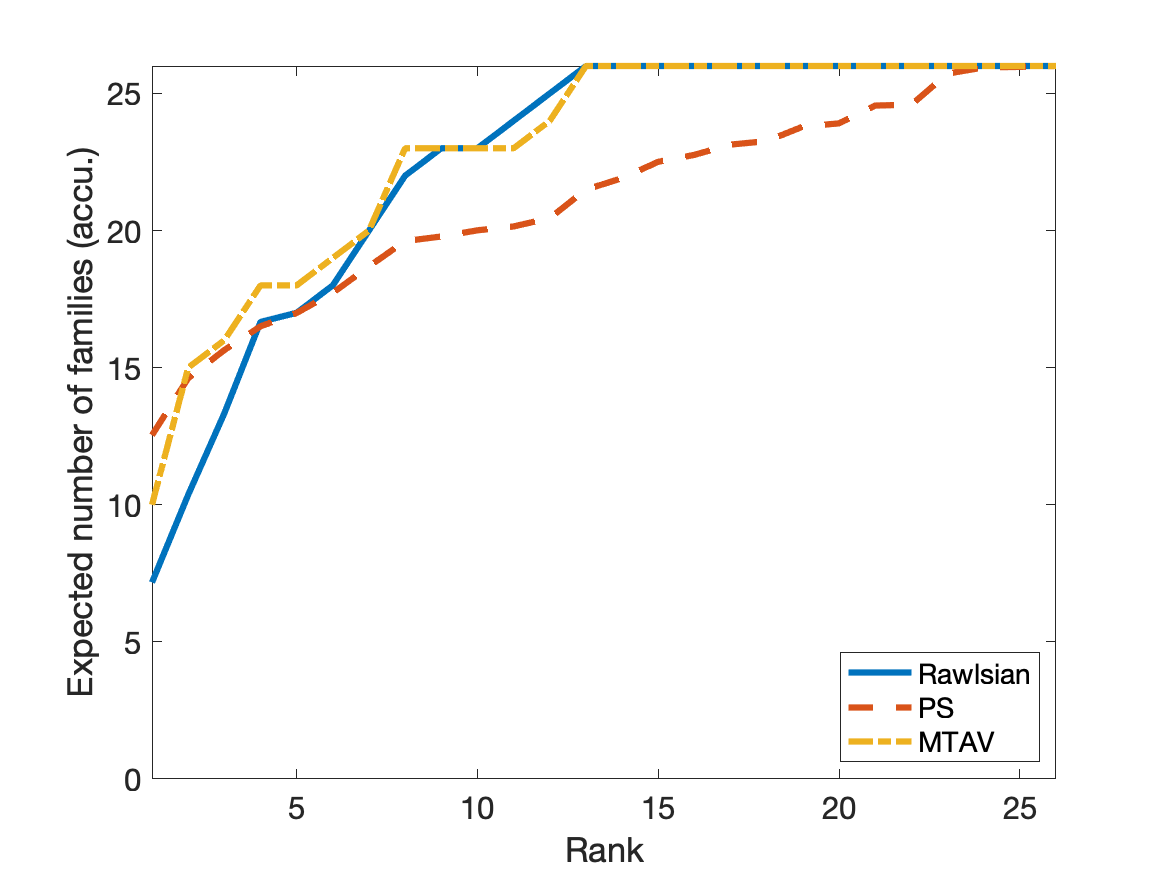}
  \caption{Cooperative $C_1$}
  \label{fig:dist}
\end{subfigure}%
\begin{subfigure}{.5\textwidth}
  \centering
  \includegraphics[width=1.1\linewidth]{./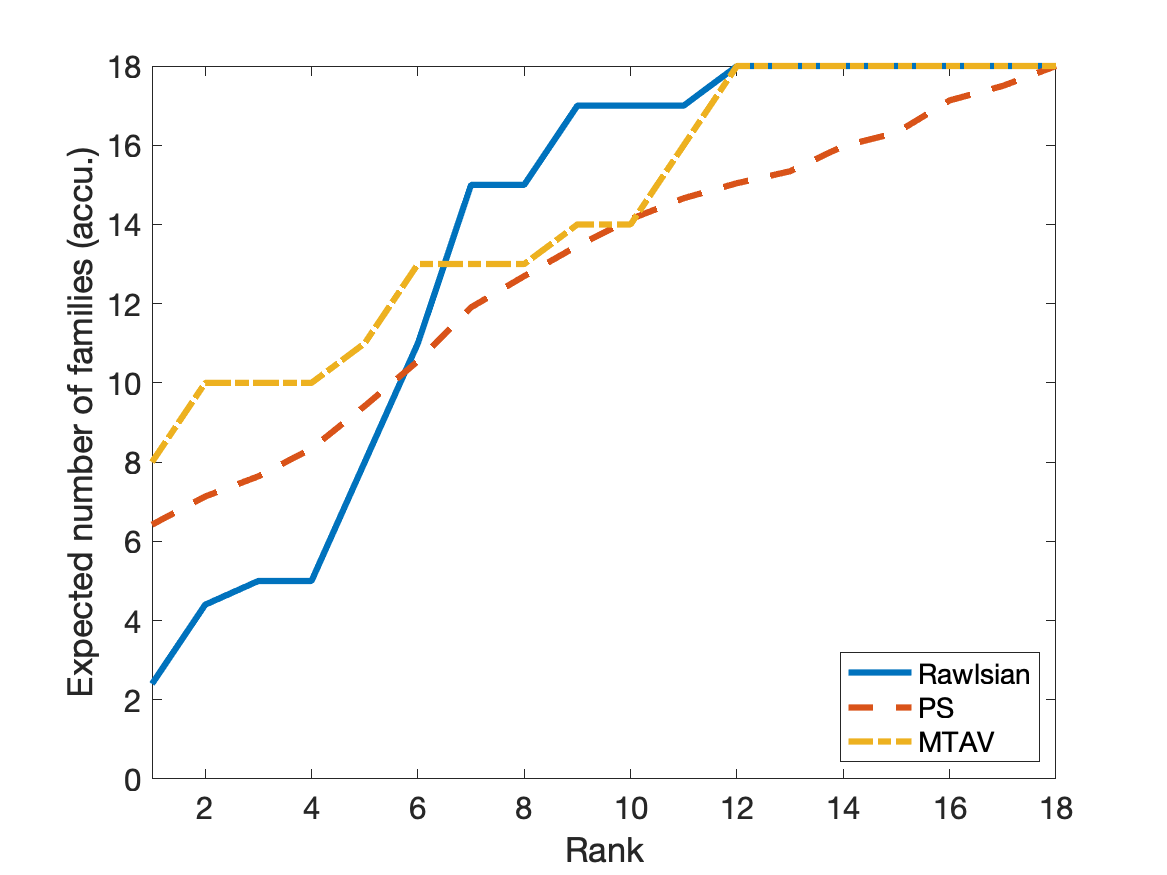}
  \caption{Cooperative $C_{2}$}
  \label{fig:sub2}
\end{subfigure}
\caption{}
\end{figure}

\begin{figure}[htp]
\centering
\begin{subfigure}{.5\textwidth}
  \centering
  \includegraphics[width=1.1\linewidth]{./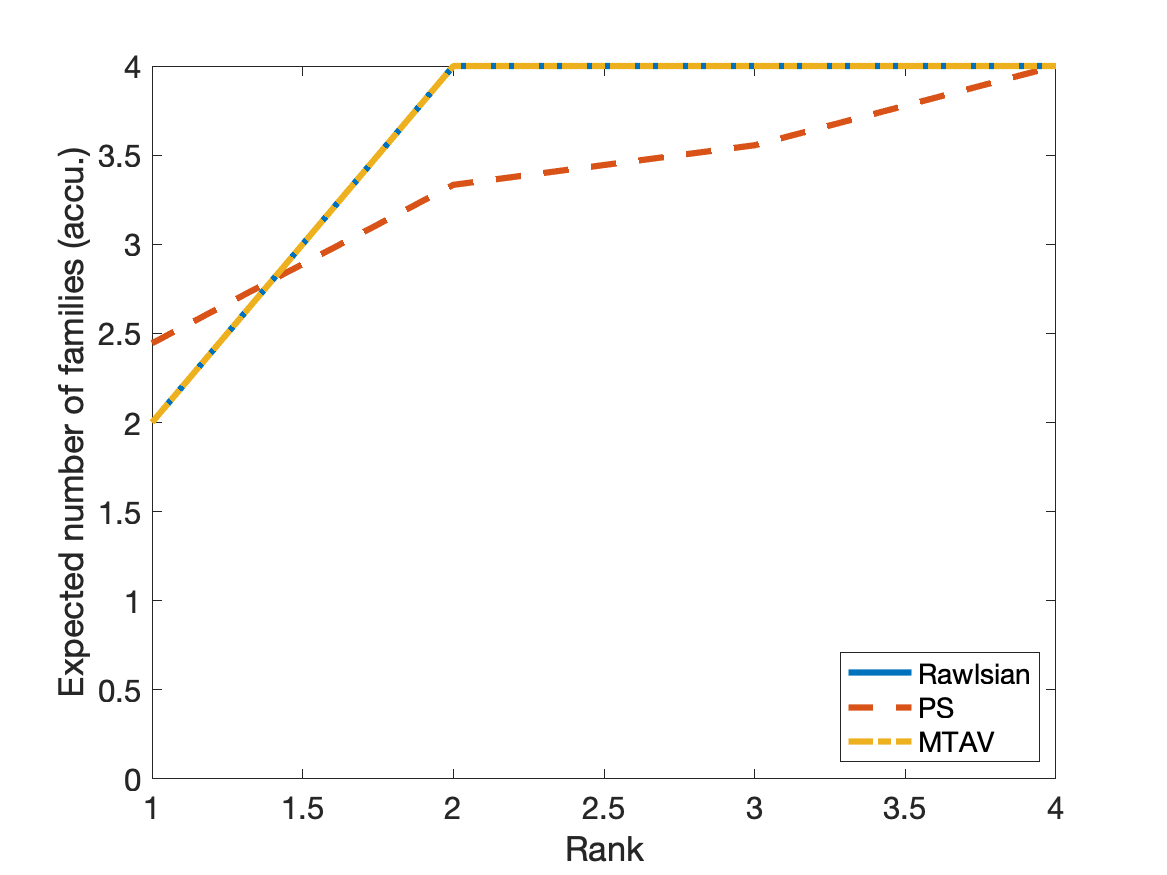}
  \caption{Cooperative $C_3$}
  \label{fig:dist}
\end{subfigure}%
\begin{subfigure}{.5\textwidth}
  \centering
  \includegraphics[width=1.1\linewidth]{./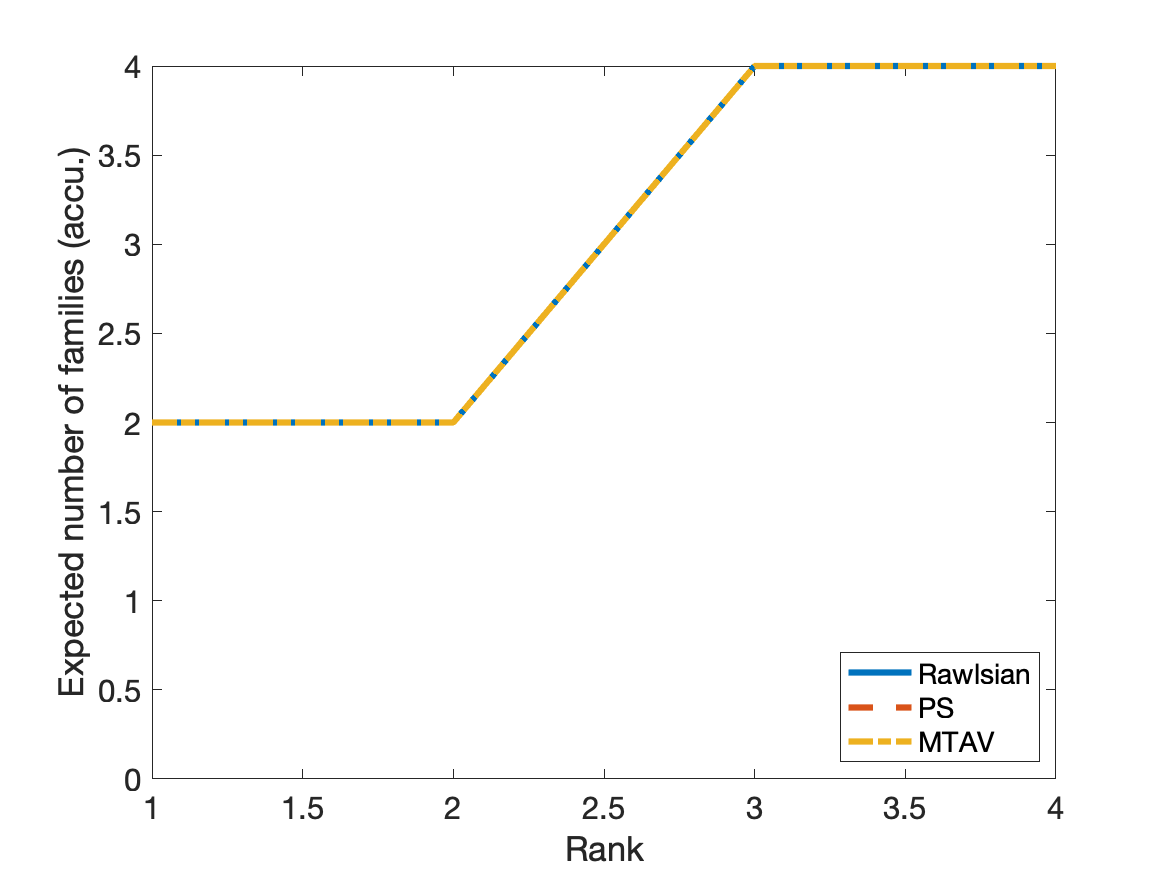}
  \caption{Cooperative $C_{4}$}
  \label{fig:sub2}
\end{subfigure}
\caption{}
\end{figure}

\begin{figure}[htp]
\centering
\begin{subfigure}{.5\textwidth}
  \centering
  \includegraphics[width=1.1\linewidth]{./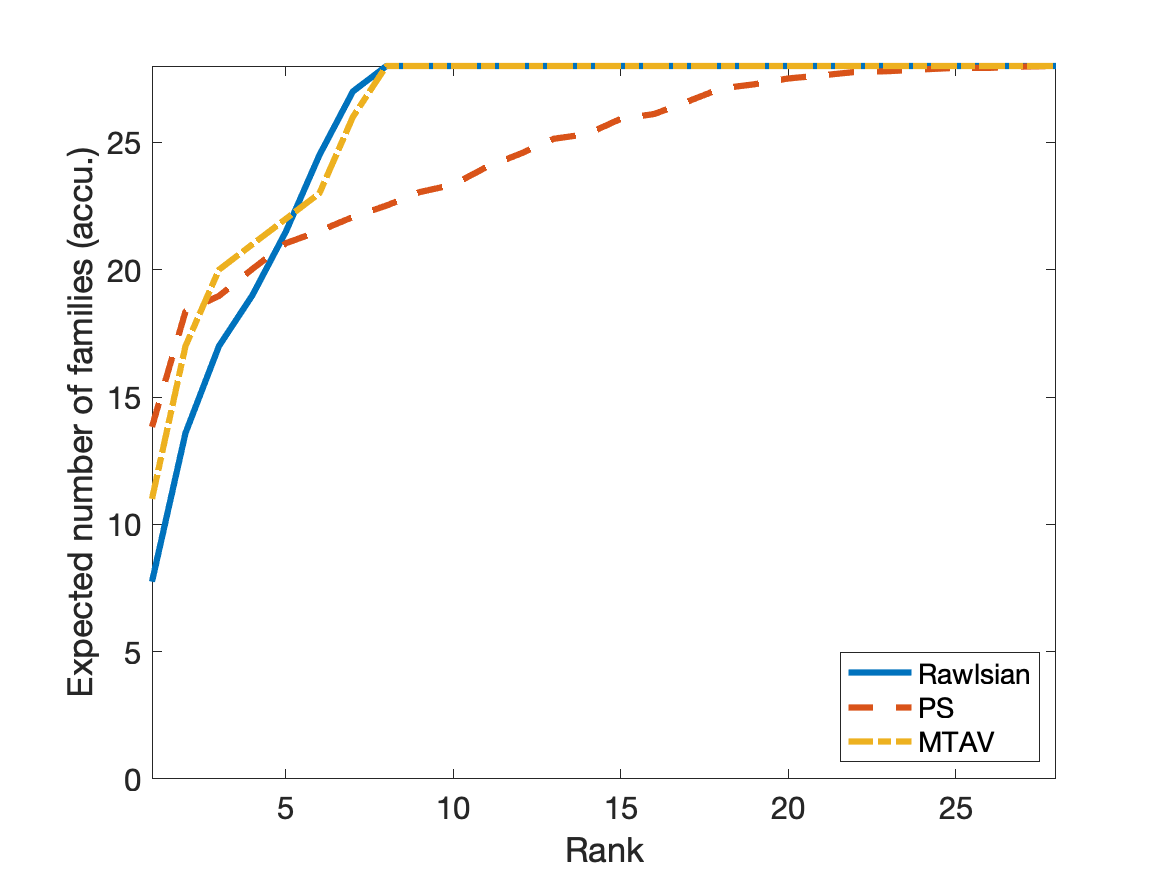}
  \caption{Cooperative $C_5$}
  \label{fig:dist}
\end{subfigure}%
\begin{subfigure}{.5\textwidth}
  \centering
  \includegraphics[width=1.1\linewidth]{./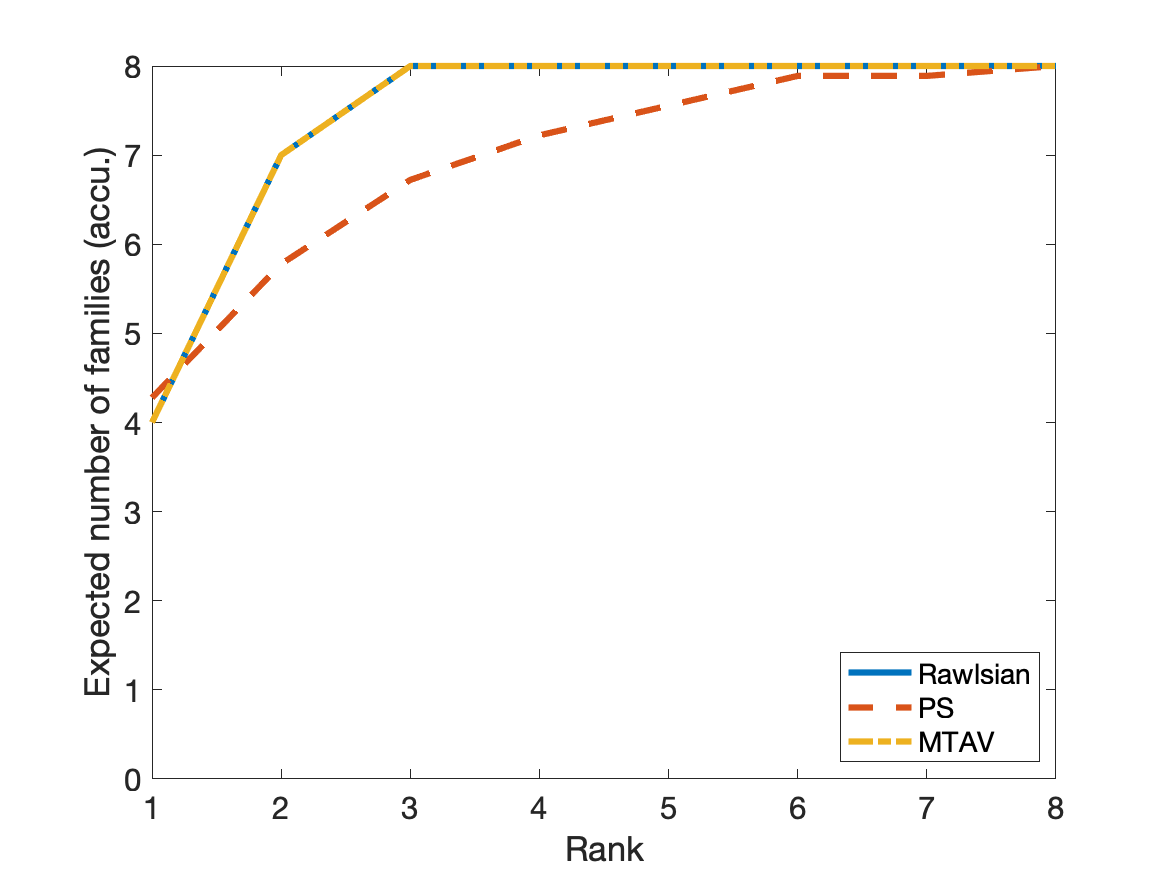}
  \caption{Cooperative $C_{6}$}
  \label{fig:sub2}
\end{subfigure}
\caption{}
\end{figure}

\begin{figure}[htp]
\centering
\begin{subfigure}{.5\textwidth}
  \centering
  \includegraphics[width=1.1\linewidth]{./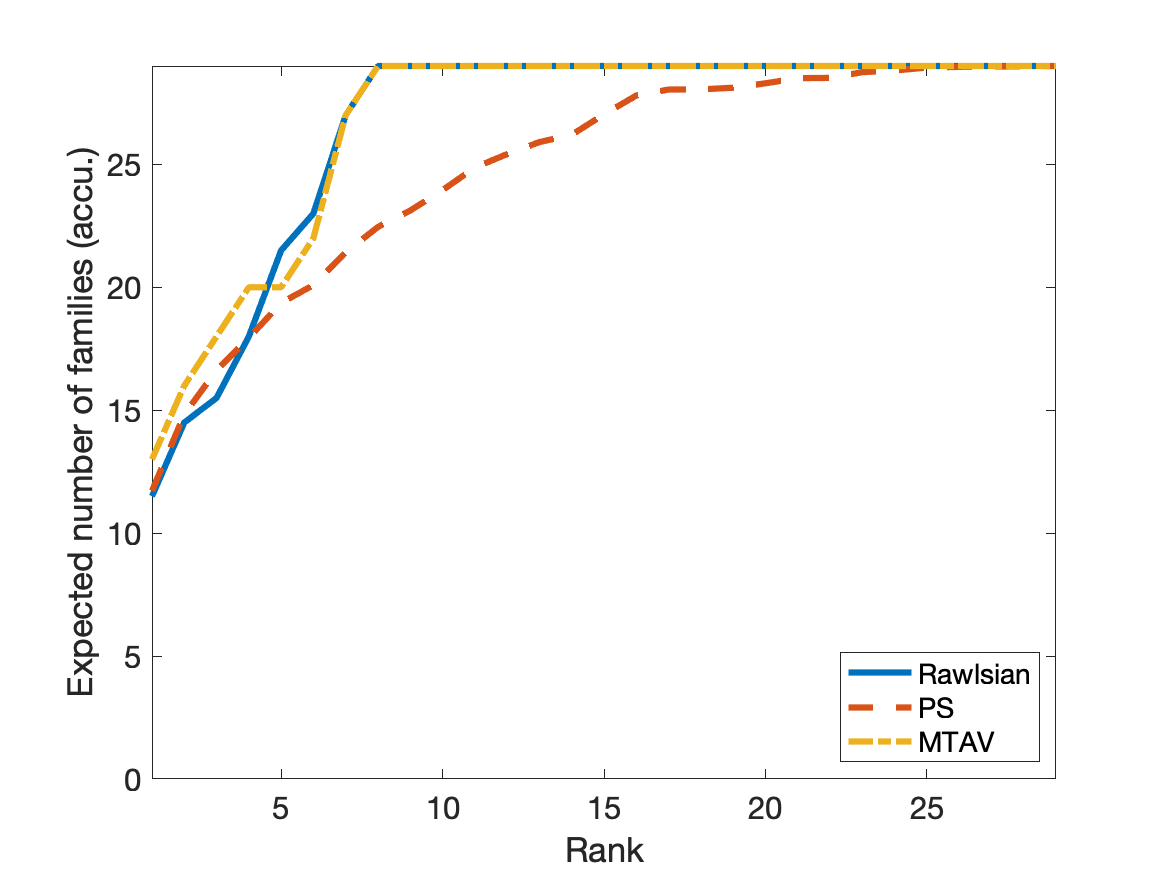}
  \caption{Cooperative $C_7$}
  \label{fig:dist}
\end{subfigure}%
\begin{subfigure}{.5\textwidth}
  \centering
  \includegraphics[width=1.1\linewidth]{./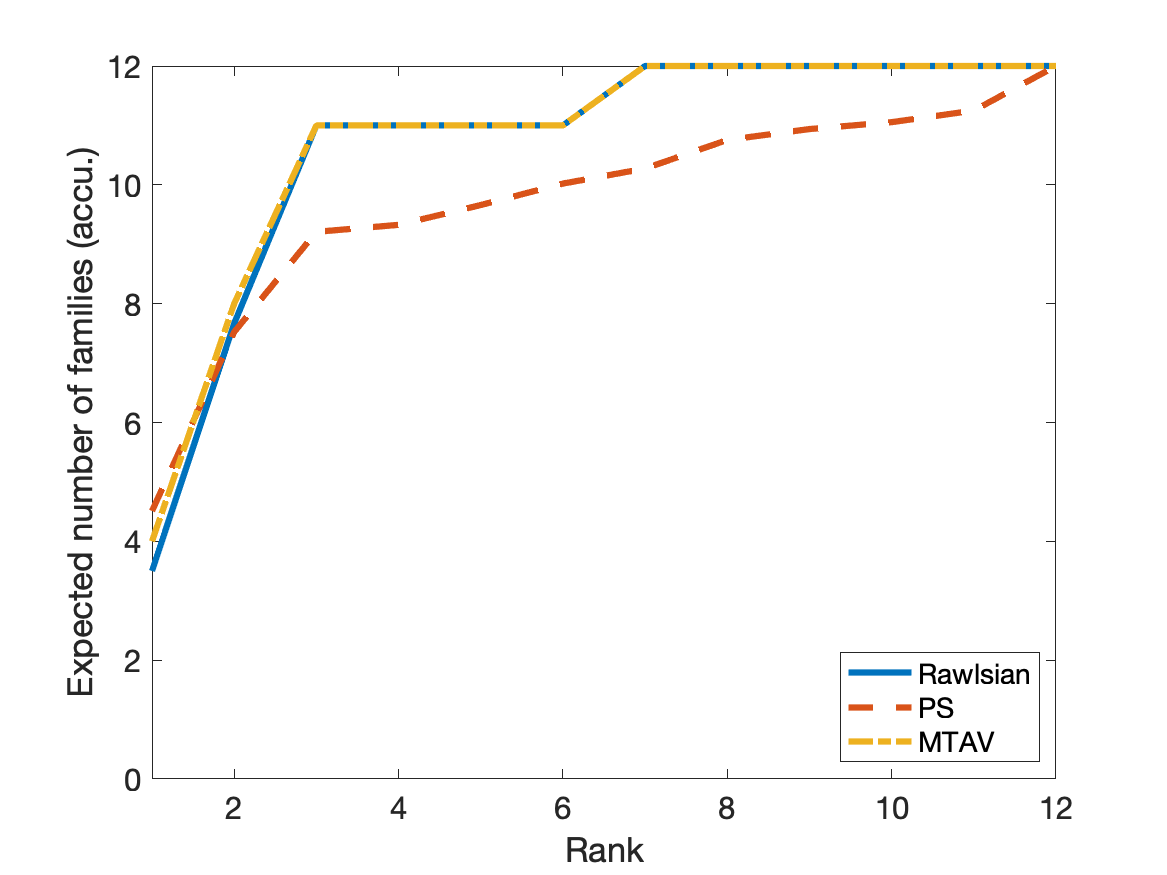}
  \caption{Cooperative $C_{8}$}
  \label{fig:sub2}
\end{subfigure}
\caption{}
\end{figure}

\begin{figure}[htp]
\centering
\begin{subfigure}{.5\textwidth}
  \centering
  \includegraphics[width=1.1\linewidth]{./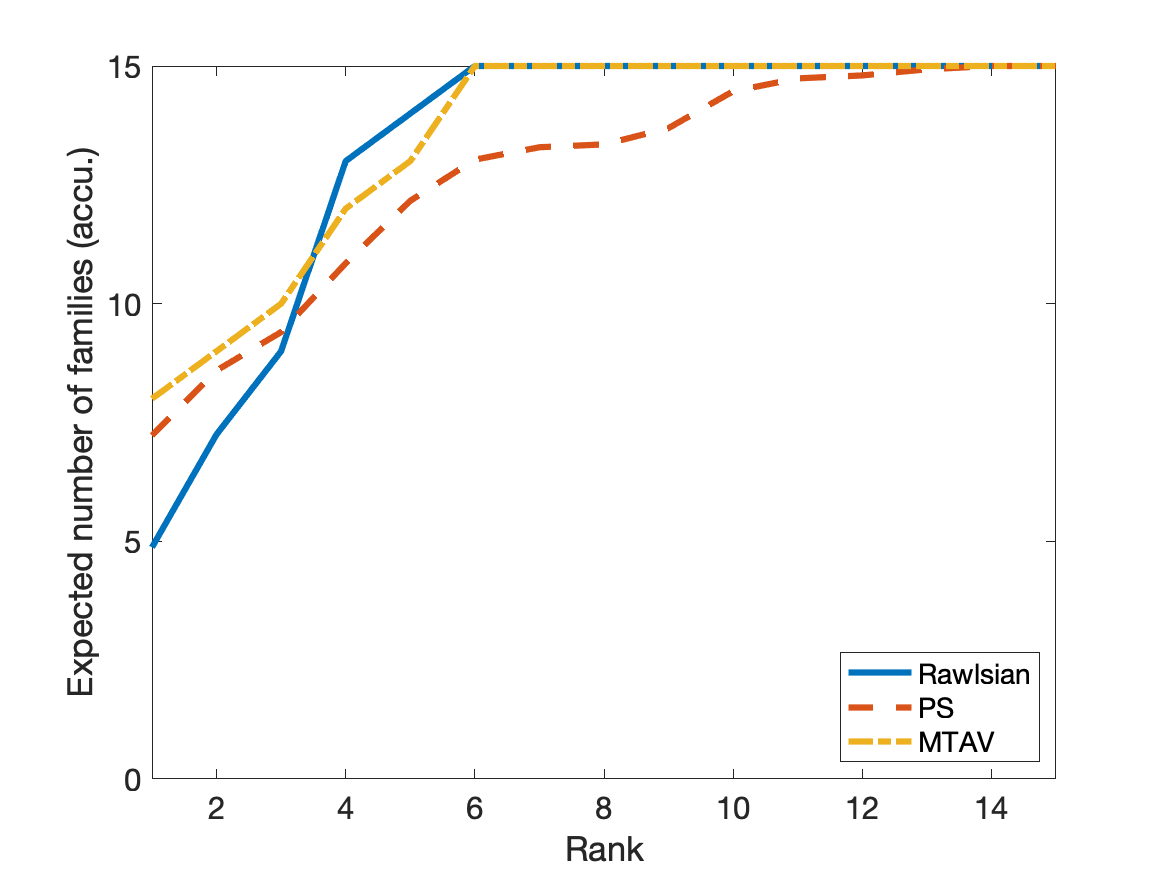}
  \caption{Cooperative $C_9$}
  \label{fig:dist}
\end{subfigure}%
\begin{subfigure}{.5\textwidth}
  \centering
  \includegraphics[width=1.1\linewidth]{./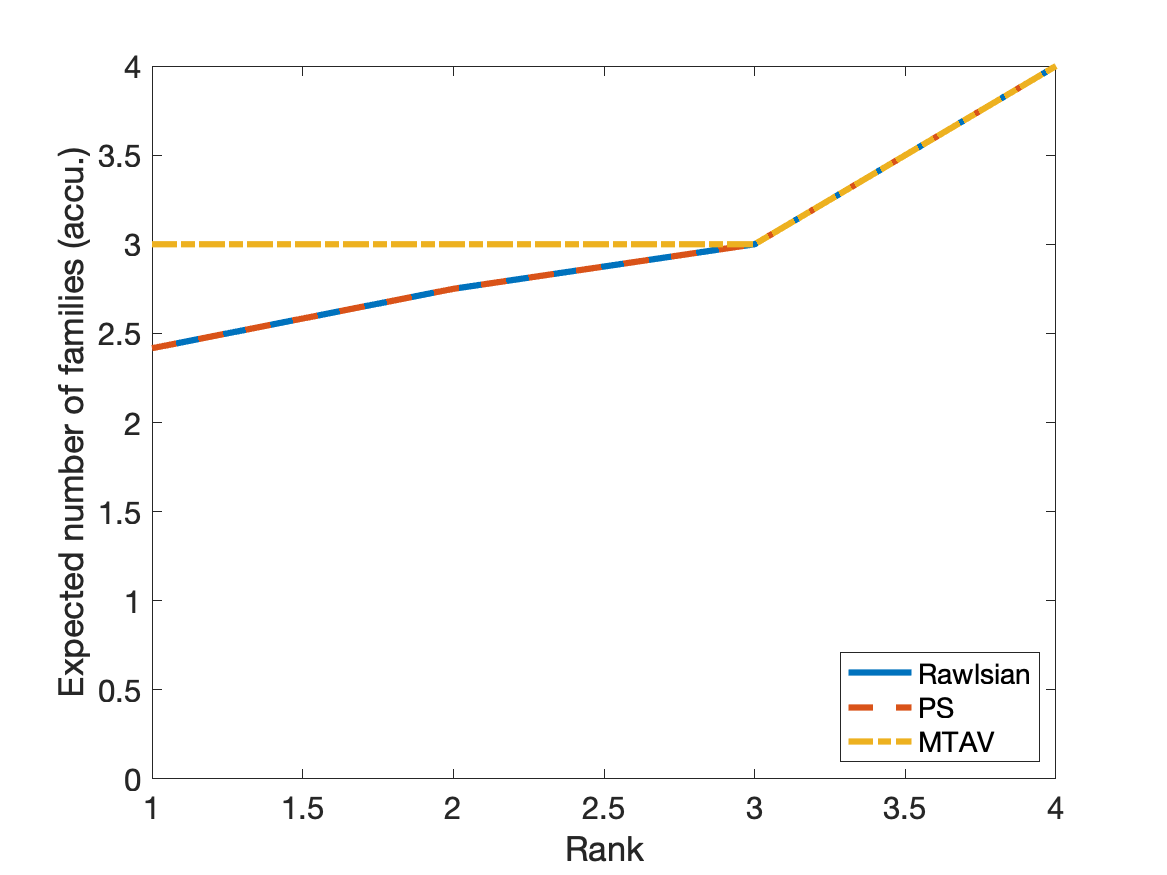}
  \caption{Cooperative $C_{10}$}
  \label{fig:sub2}
\end{subfigure}
\caption{}
\end{figure}

\begin{figure}[htp]
\centering
\begin{subfigure}{.5\textwidth}
  \centering
  \includegraphics[width=1.1\linewidth]{./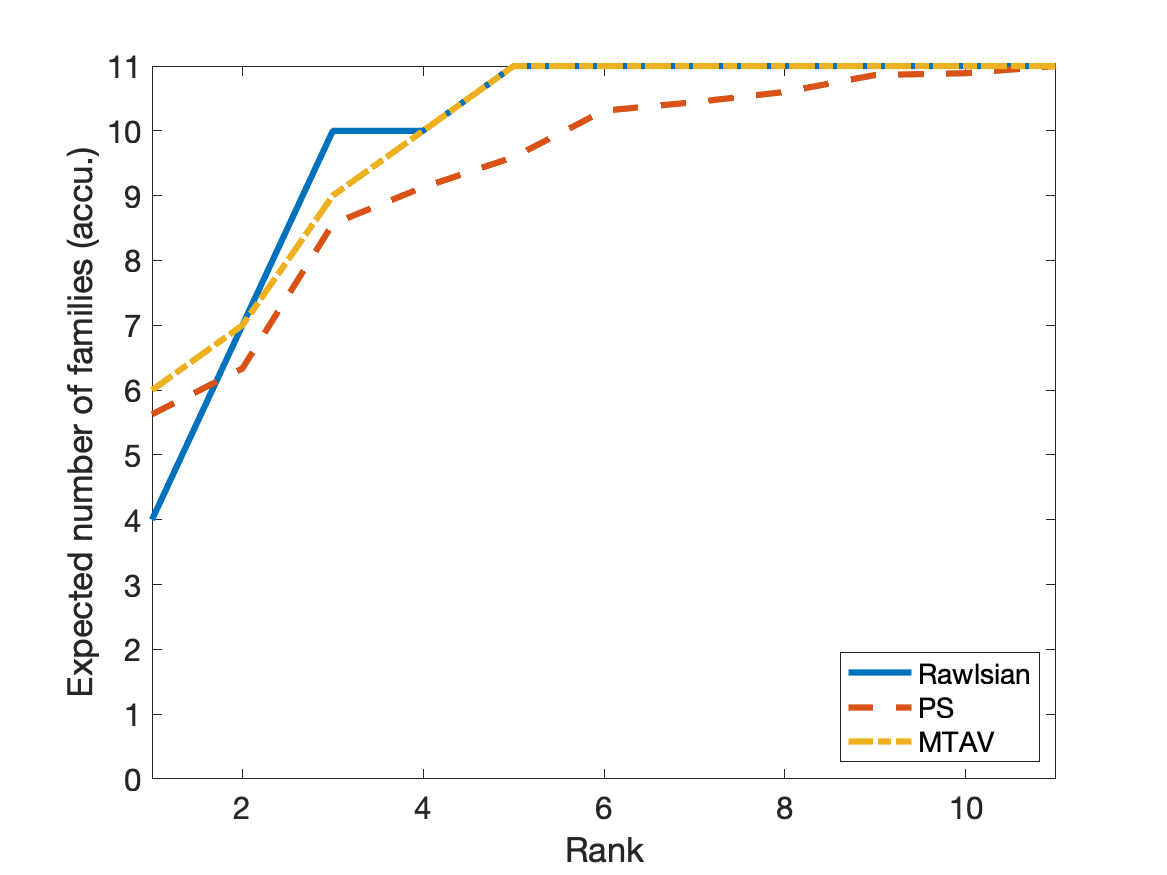}
  \caption{Cooperative $C_{11}$}
  \label{fig:dist}
\end{subfigure}%
\begin{subfigure}{.5\textwidth}
  \centering
  \includegraphics[width=1.1\linewidth]{./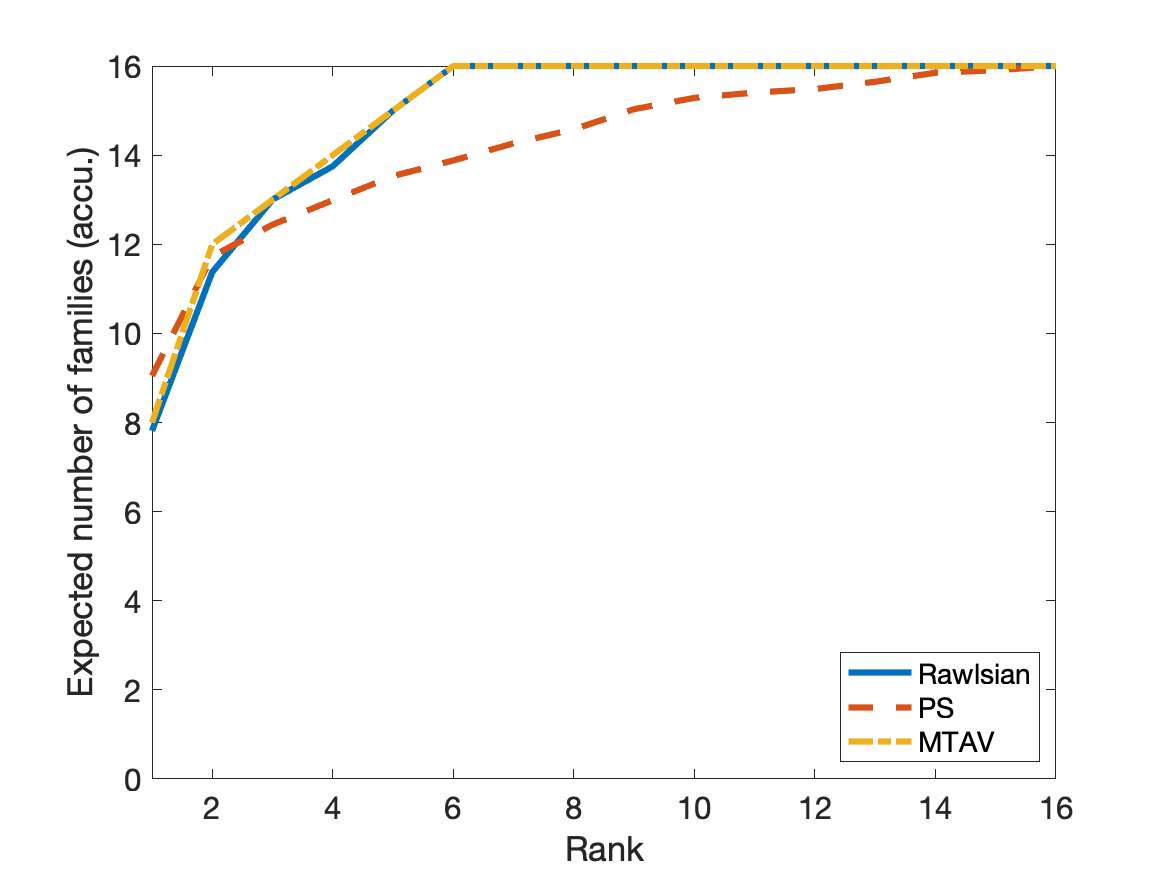}
  \caption{Cooperative $C_{12}$}
  \label{fig:sub2}
\end{subfigure}
\caption{}
\end{figure}

\begin{figure}[htp]
\centering
\begin{subfigure}{.5\textwidth}
  \centering
  \includegraphics[width=1.1\linewidth]{./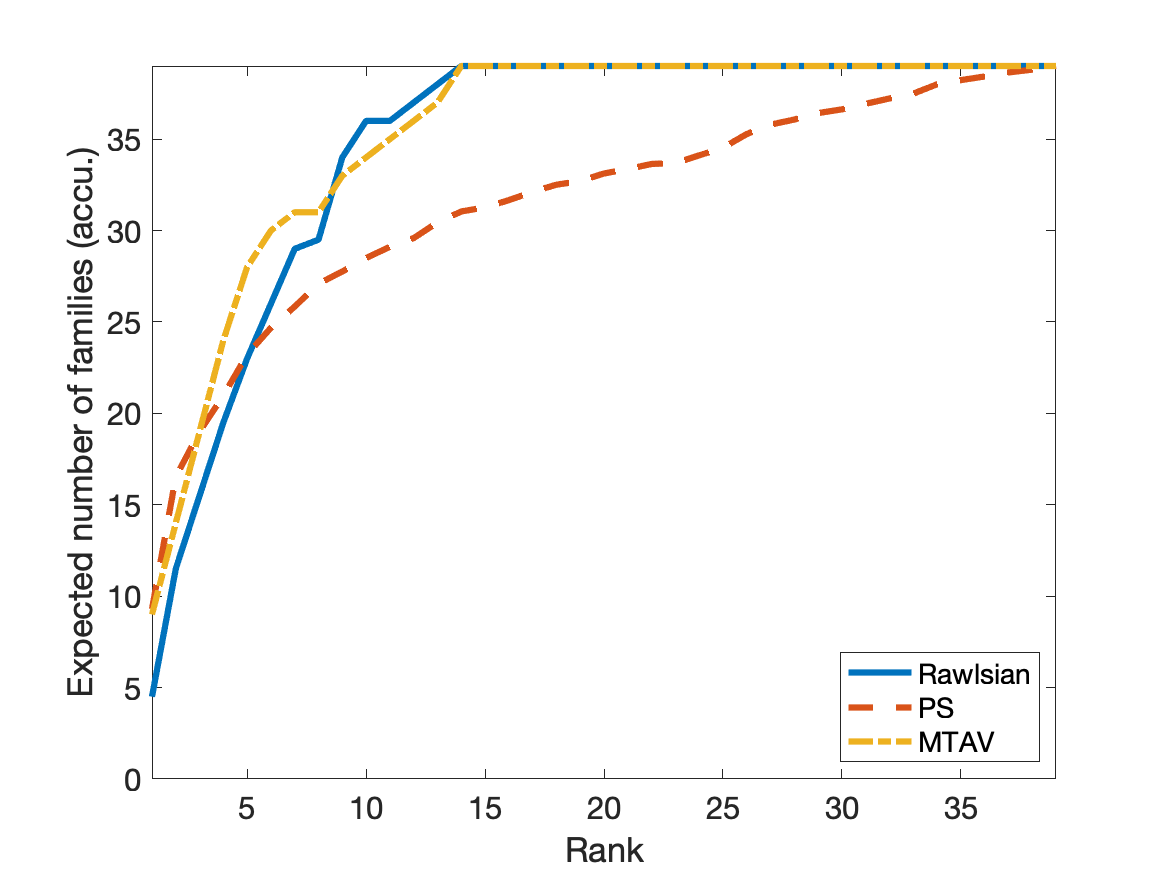}
  \caption{Cooperative $C_{13}$}
  \label{fig:dist}
\end{subfigure}%
\begin{subfigure}{.5\textwidth}
  \centering
  \includegraphics[width=1.1\linewidth]{./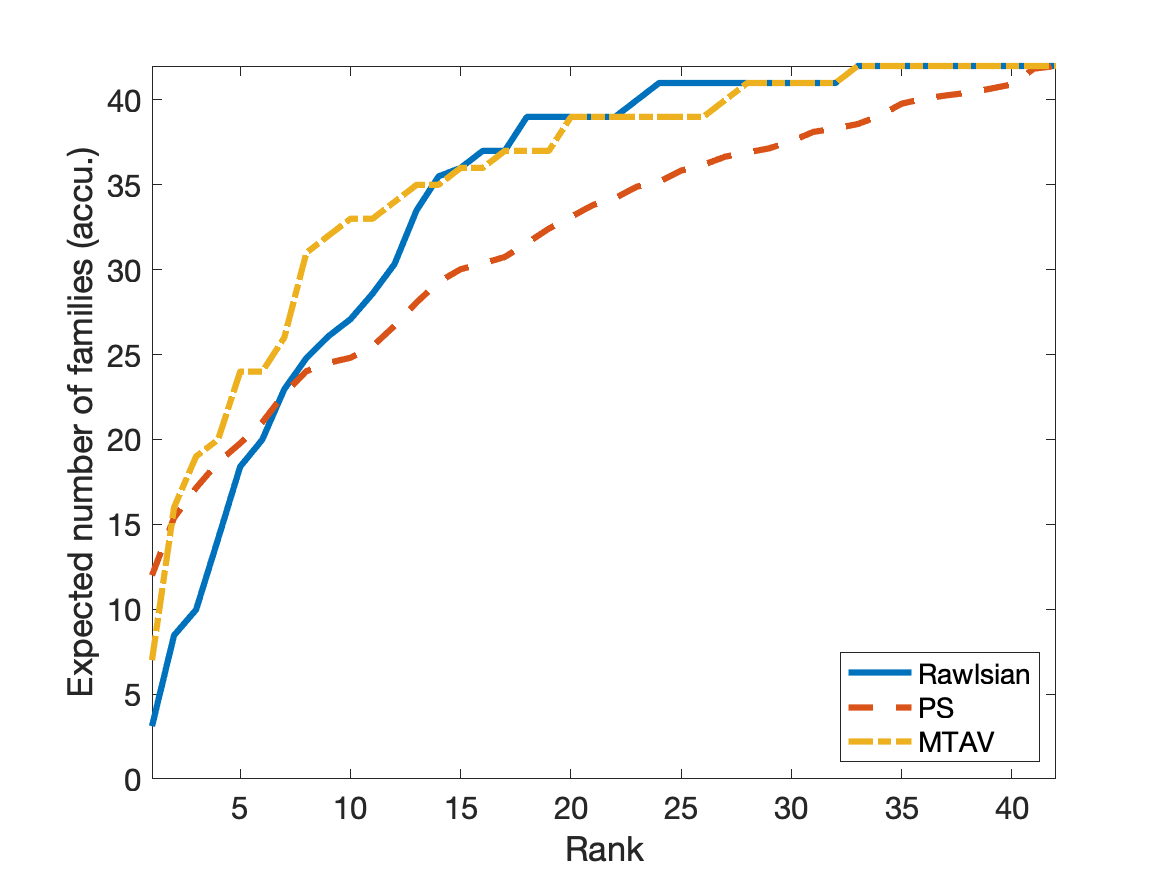}
  \caption{Cooperative $C_{14}$}
  \label{fig:sub2}
\end{subfigure}
\caption{}
\end{figure}

\begin{figure}[htp]
\centering
\begin{subfigure}{.5\textwidth}
  \centering
  \includegraphics[width=1.1\linewidth]{./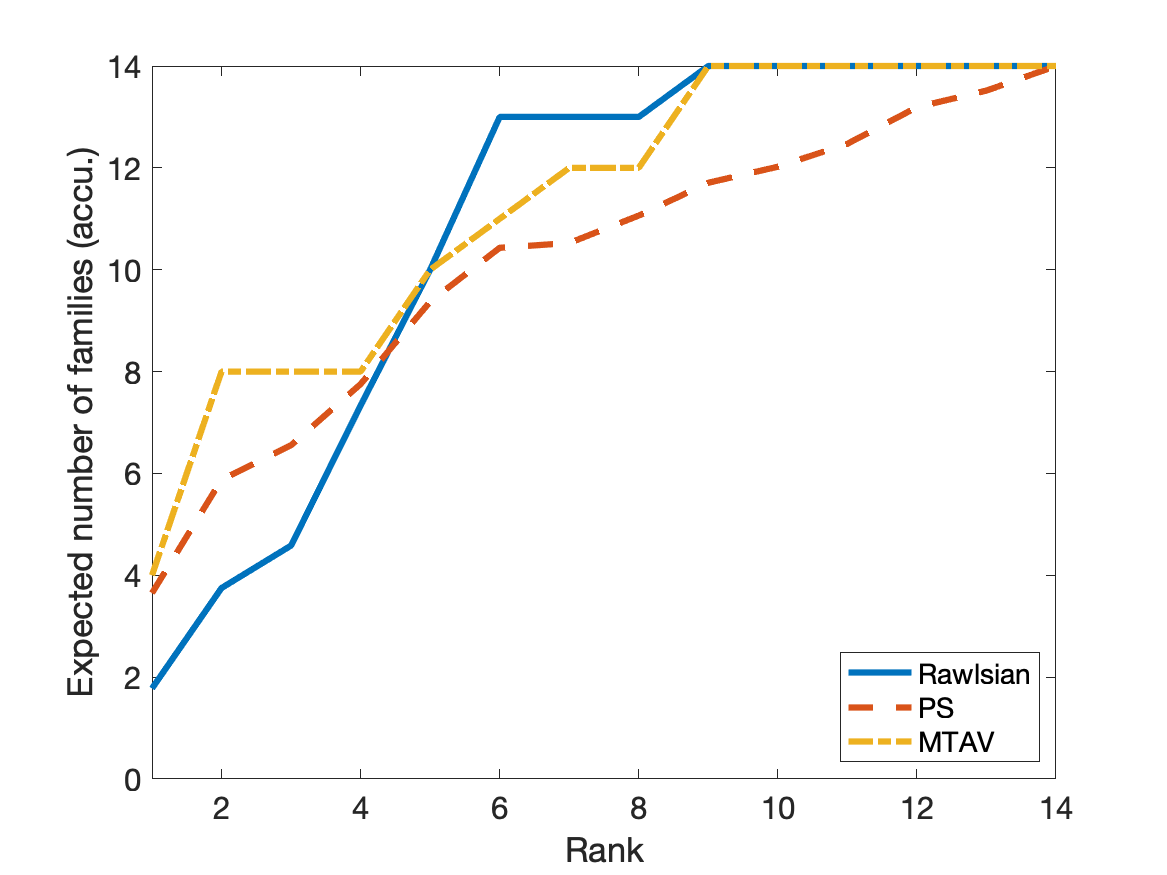}
  \caption{Cooperative $C_{15}$}
  \label{fig:dist}
\end{subfigure}%
\begin{subfigure}{.5\textwidth}
  \centering
  \includegraphics[width=1.1\linewidth]{./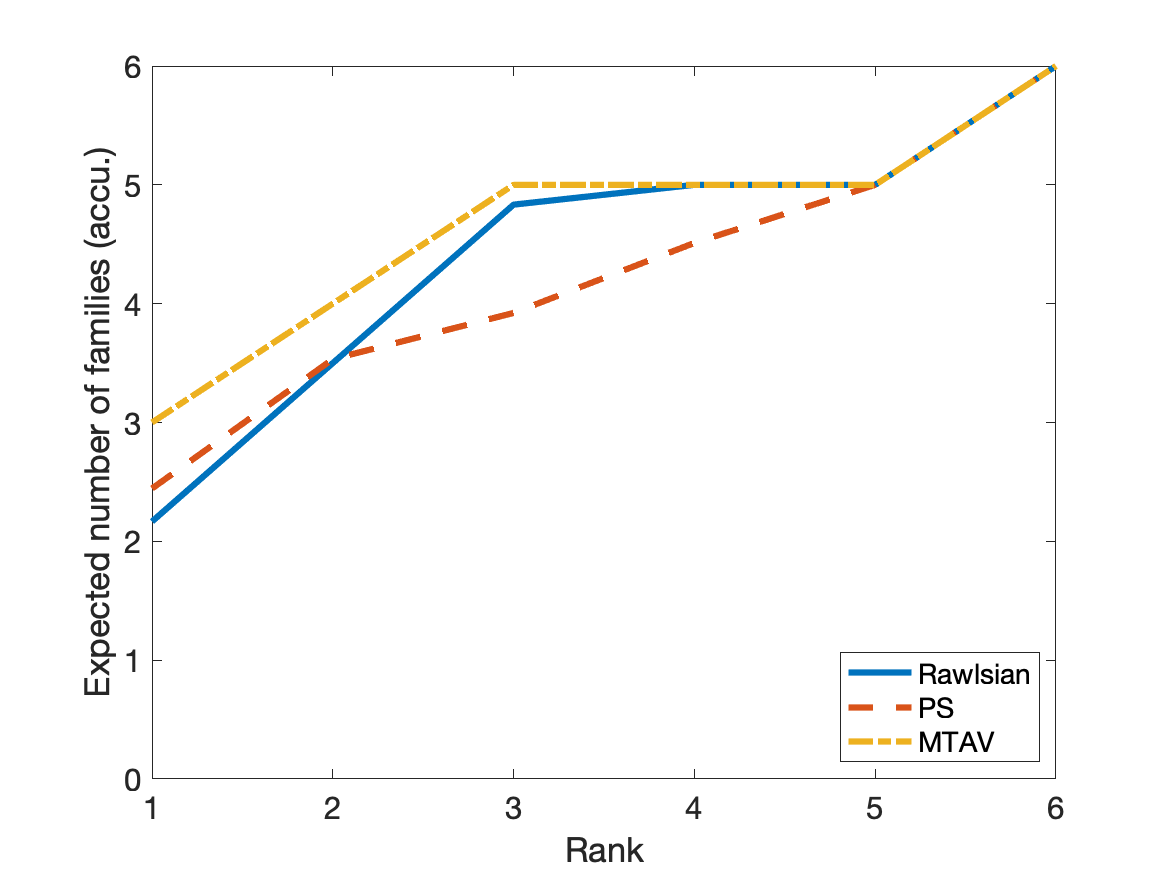}
  \caption{Cooperative $C_{16}$}
  \label{fig:sub2}
\end{subfigure}
\caption{}
\end{figure}

\begin{figure}[htp]
\centering
\begin{subfigure}{.5\textwidth}
  \centering
  \includegraphics[width=1.1\linewidth]{./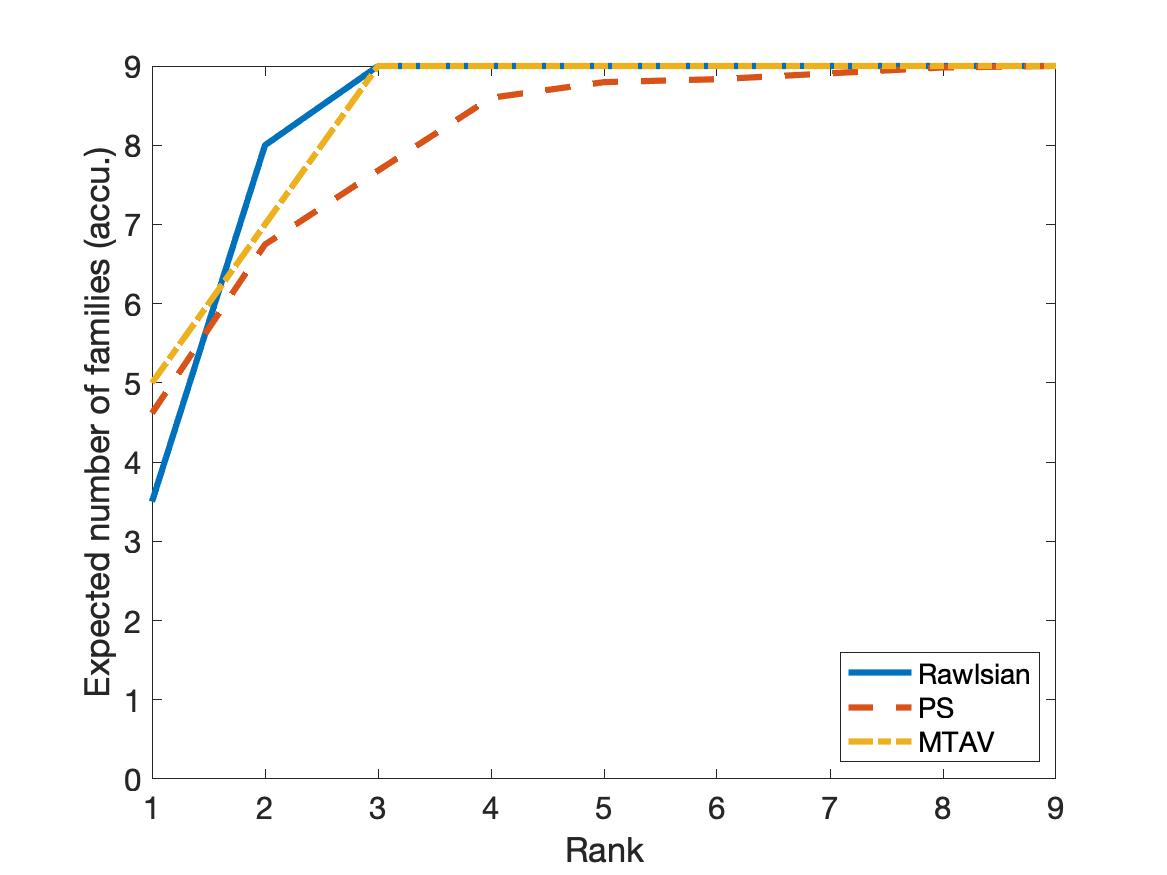}
  \caption{Cooperative $C_{17}$}
  \label{fig:dist}
\end{subfigure}%
\begin{subfigure}{.5\textwidth}
  \centering
  \includegraphics[width=1.1\linewidth]{./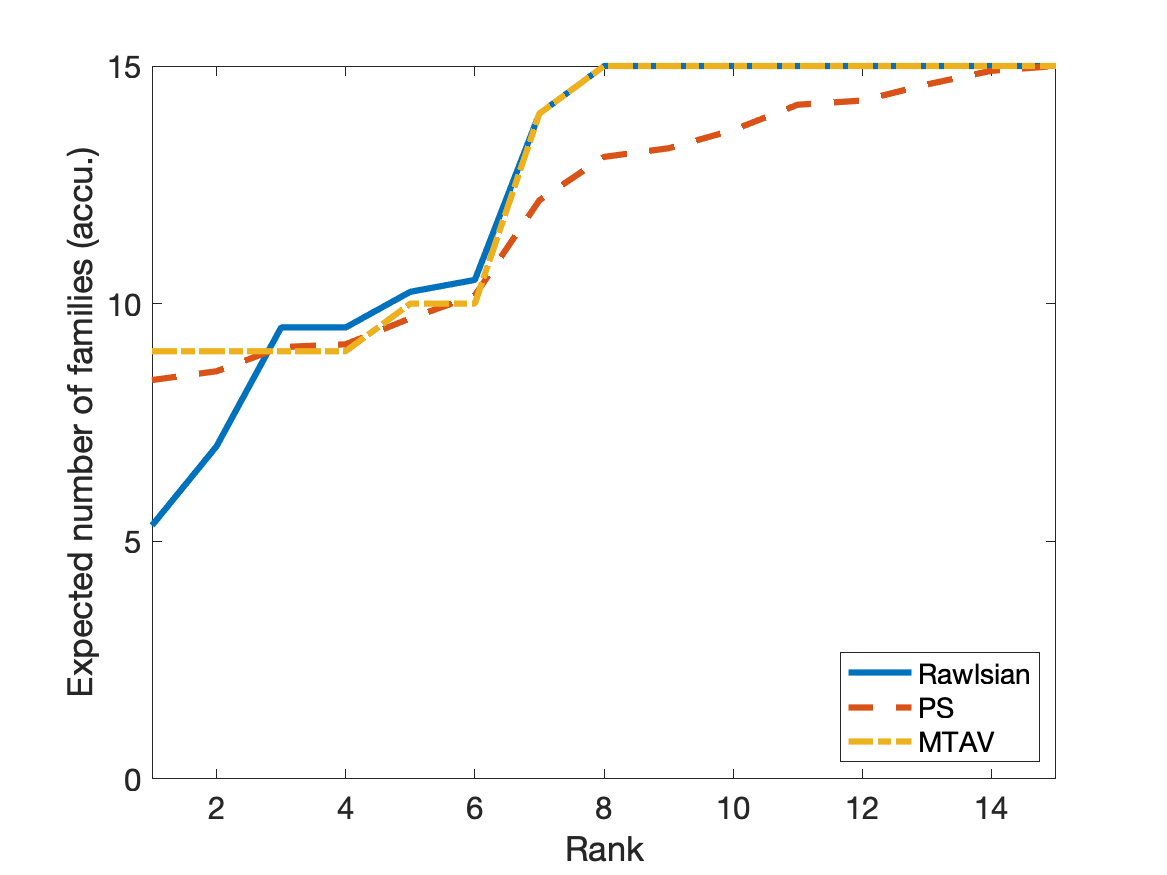}
  \caption{Cooperative $C_{18}$}
  \label{fig:sub2}
\end{subfigure}
\caption{}
\end{figure}

\begin{figure}[htp]
\centering
\begin{subfigure}{.5\textwidth}
  \centering
  \includegraphics[width=1.1\linewidth]{./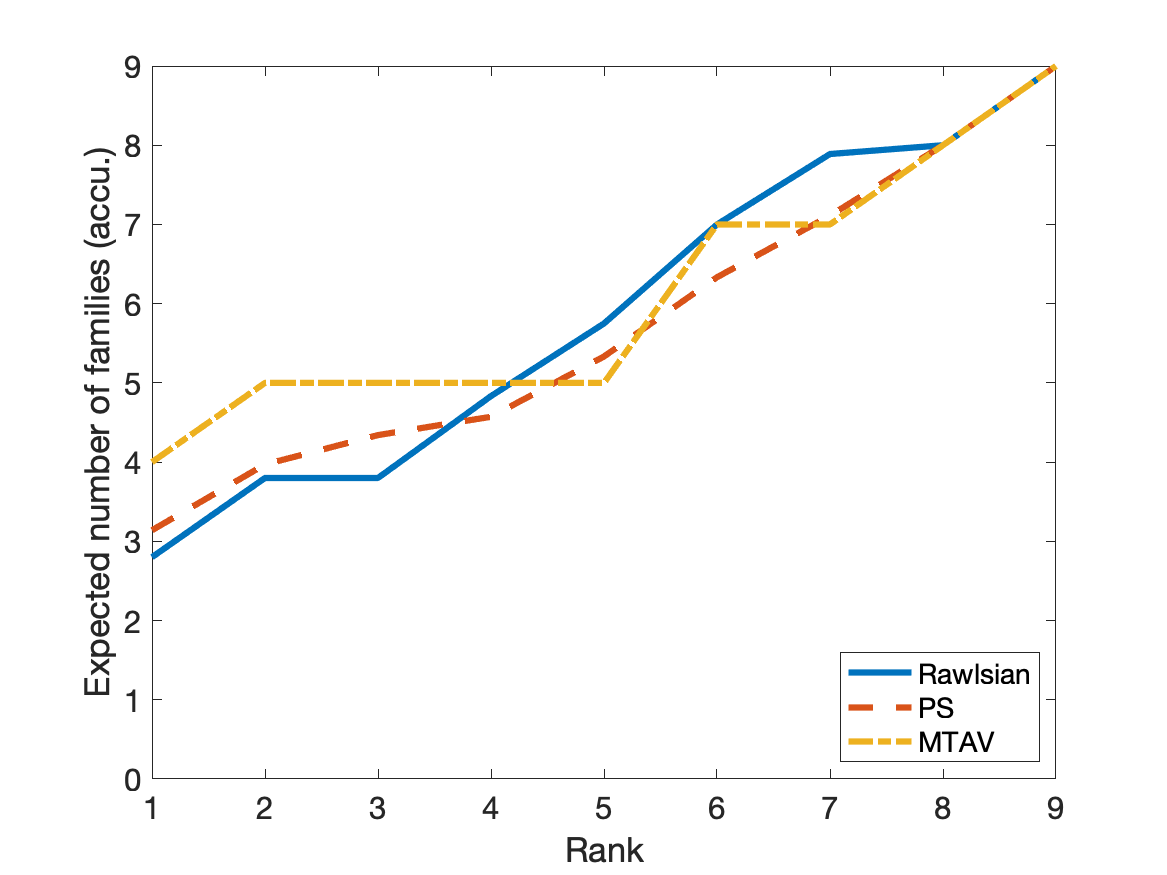}
  \caption{Cooperative $C_{19}$}
  \label{fig:dist}
\end{subfigure}%
\begin{subfigure}{.5\textwidth}
  \centering
  \includegraphics[width=1.1\linewidth]{./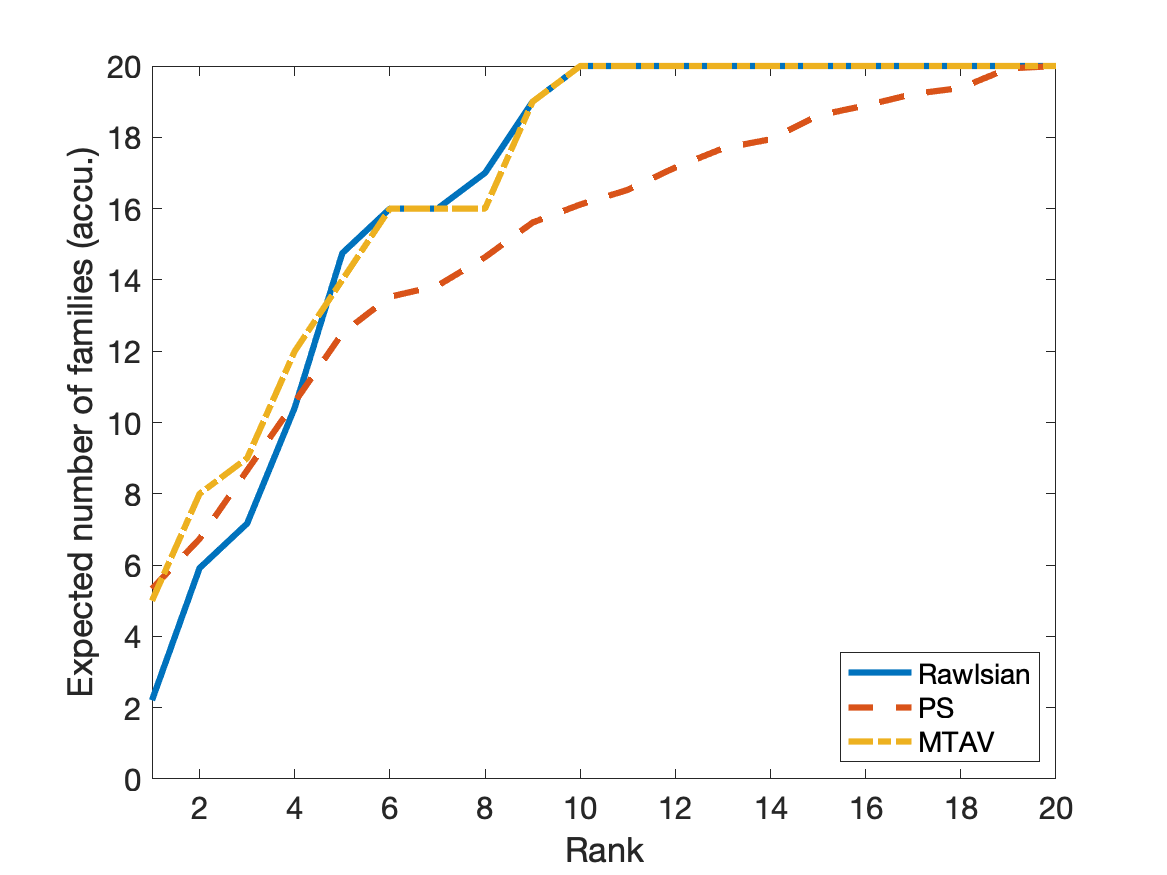}
  \caption{Cooperative $C_{20}$}
  \label{fig:sub2}
\end{subfigure}
\caption{}
\end{figure}

\begin{figure}[htp]
\centering
\begin{subfigure}{.5\textwidth}
  \centering
  \includegraphics[width=1.1\linewidth]{./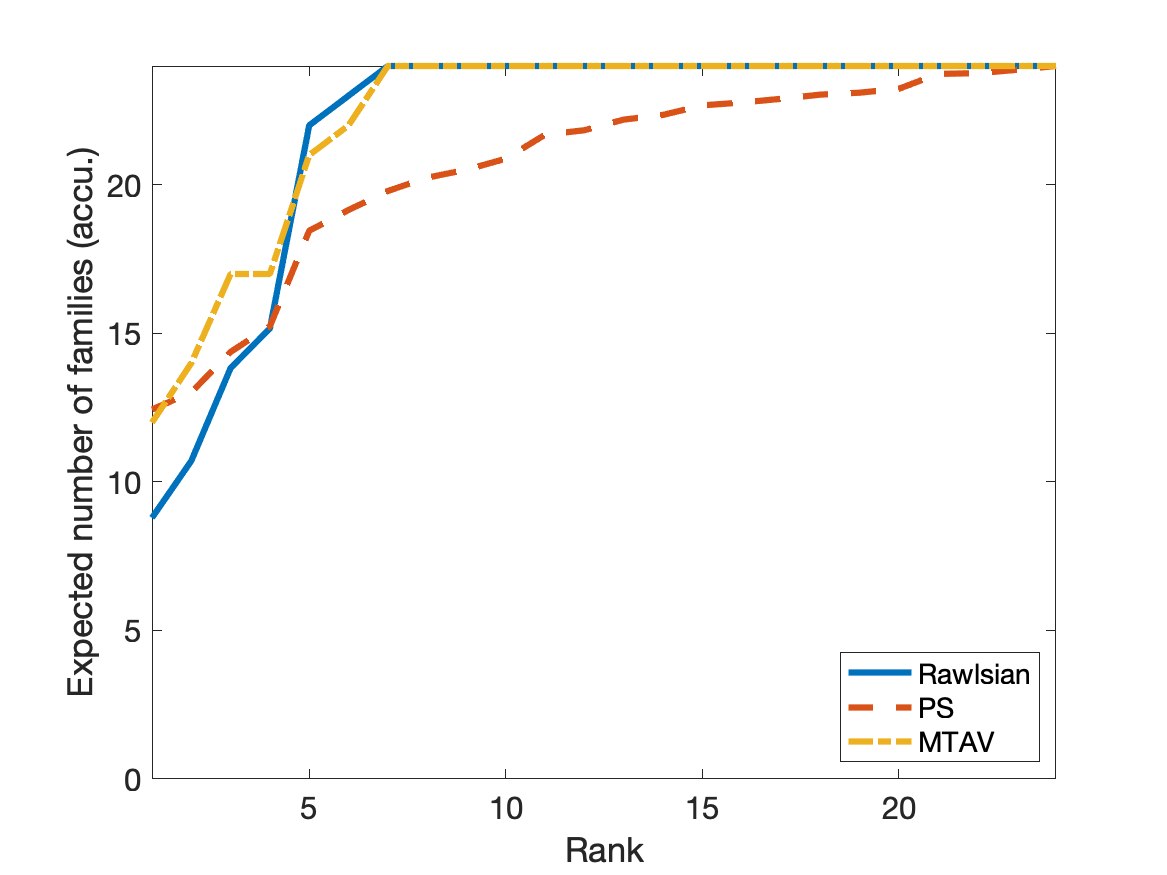}
  \caption{Cooperative $C_{21}$}
  \label{fig:dist}
\end{subfigure}%
\begin{subfigure}{.5\textwidth}
  \centering
  \includegraphics[width=1.1\linewidth]{./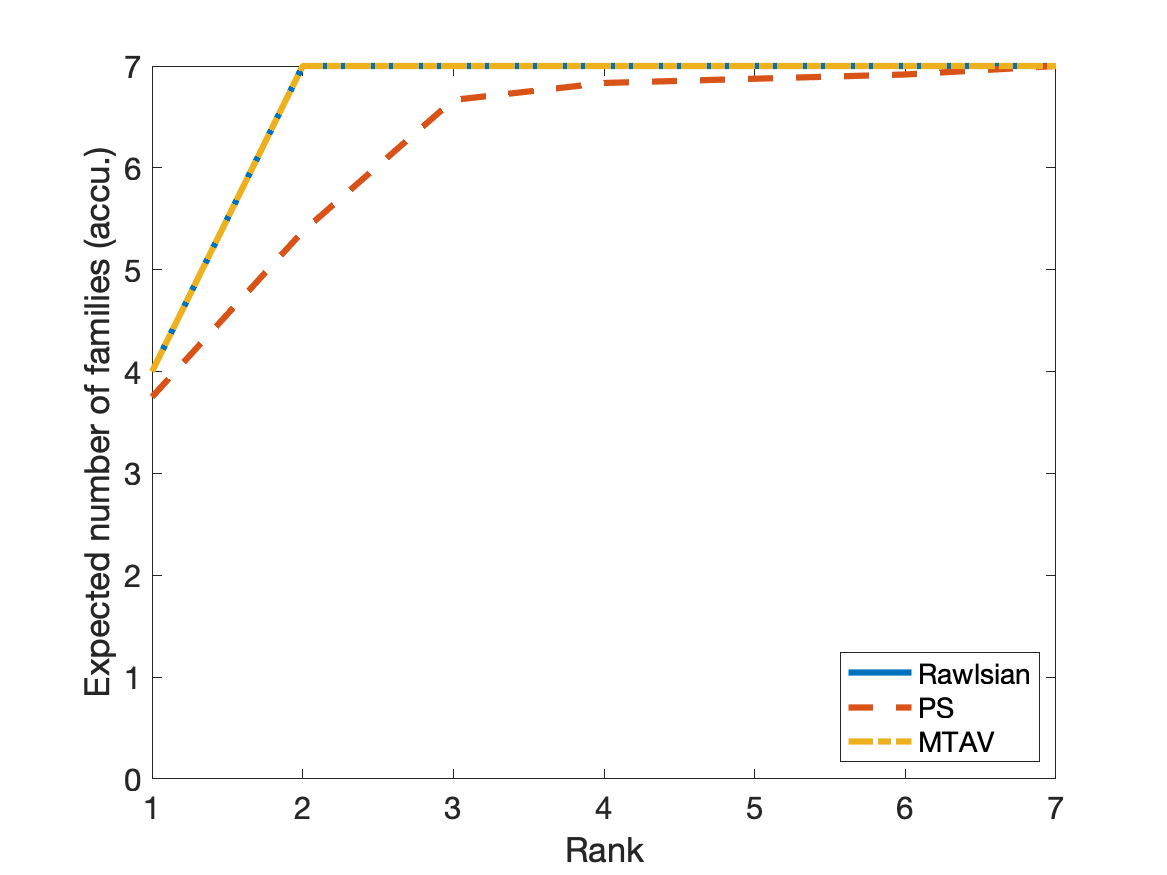}
  \caption{Cooperative $C_{22}$}
  \label{fig:sub2}
\end{subfigure}
\caption{}
\end{figure}

\begin{figure}[htp]
\centering
\begin{subfigure}{.5\textwidth}
  \centering
  \includegraphics[width=1.1\linewidth]{./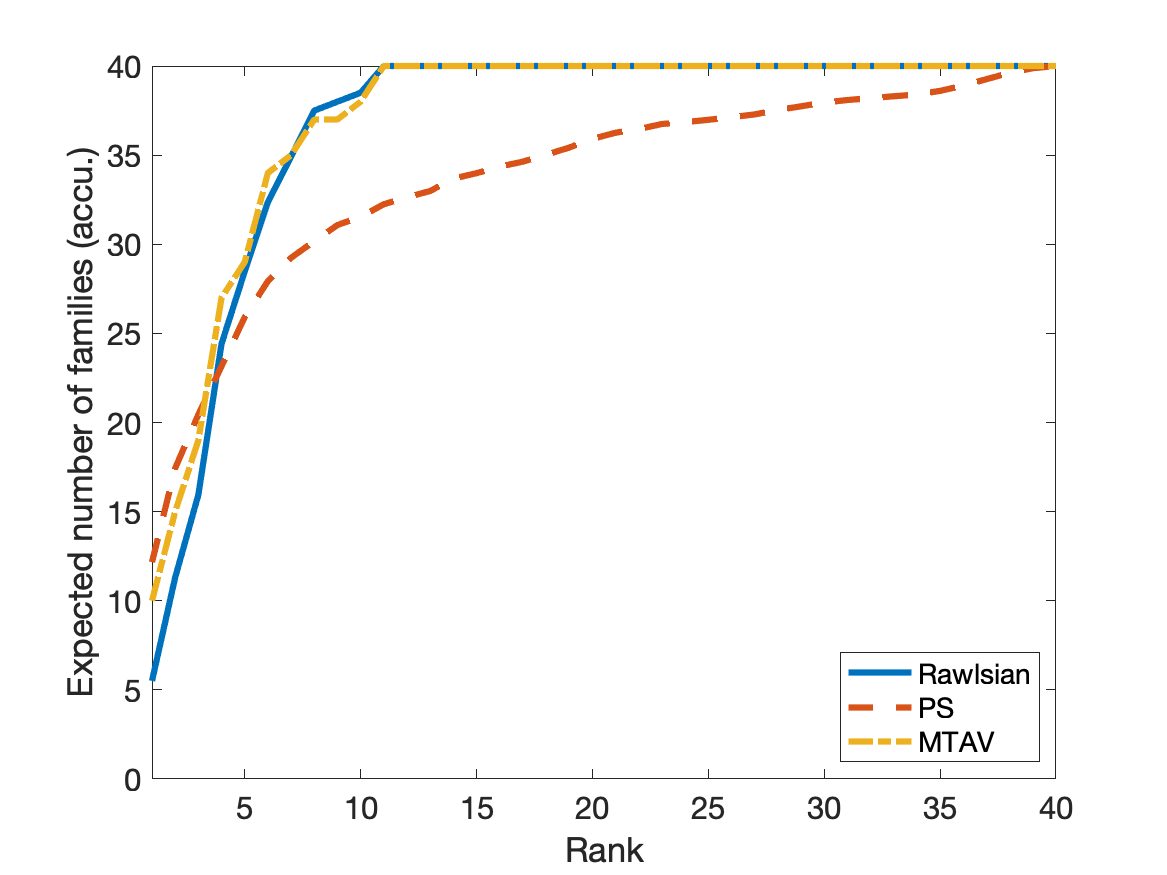}
  \caption{Cooperative $C_{23}$}
  \label{fig:dist}
\end{subfigure}%
\begin{subfigure}{.5\textwidth}
  \centering
  \includegraphics[width=1.1\linewidth]{./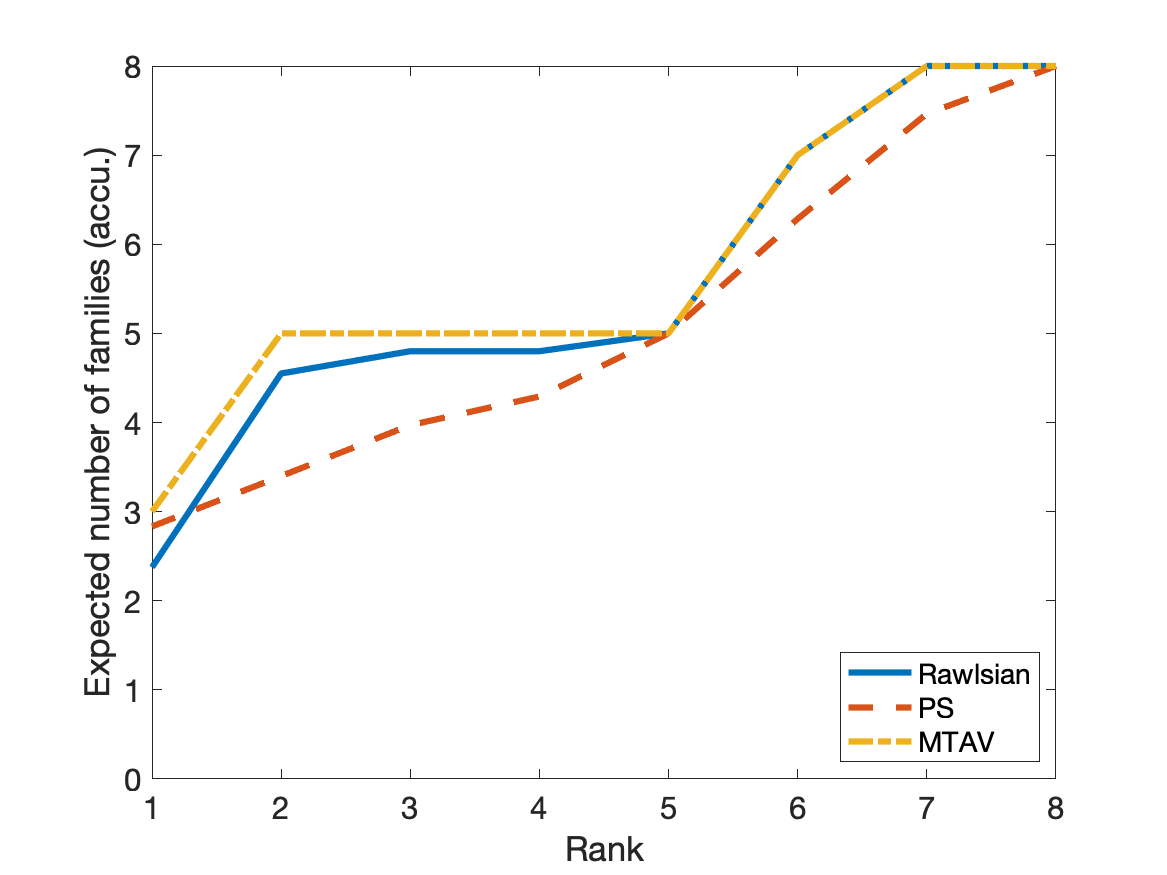}
  \caption{Cooperative $C_{24}$}
  \label{fig:sub2}
\end{subfigure}
\caption{}
\end{figure}
\newpage

\section{Proof of Proposition \ref{prop:LB_Rawls}} \label{proofs_large2}

We use the notation introduced in the proof of Proposition~\ref{large}. Fix \(\varepsilon\in(0,1)\), and for each \(n\), let
\[
k_n^-=\lfloor (1-\varepsilon)\log n\rfloor.
\]

\begin{lemma}
\[
\mathbb P(\widetilde X_{n,k_n^-}=1)\to 0.
\]
\end{lemma}

\begin{proof}
For this, it is enough to show that, with probability converging to one,
$\widetilde{\mathcal G}(n,k_n^-)$ has an isolated object. Fix an object $o$. The probability that a given agent does not rank $o$ among her top
$k_n^-$ objects is
\[
1-\frac{k_n^-}{n}.
\]
Since agents' preferences are independently drawn, the probability that no agent ranks
$o$ among her top $k_n^-$ objects is
\[
\Pr(o \text{ is isolated})=\left(1-\frac{k_n^-}{n}\right)^n.
\]

Let $Z_n$ denote the number of isolated objects in
$\widetilde{\mathcal G}(n,k_n^-)$.

We split the proof into three claims.

\begin{claim}\label{large2}
\[
\mathbb E[Z_n] \to+\infty.
\]

\end{claim}

\begin{proof}

For each object \(o\in O\), define the indicator variable
\[
I_o=
\begin{cases}
1 & \text{if object } o \text{ is isolated},\\
0 & \text{otherwise}.
\end{cases}
\]
Then the total number of isolated objects is
\[
Z_n=\sum_{o\in O} I_o.
\]
Taking expectations,
\[
\mathbb E[Z_n]
=
\mathbb E\left[\sum_{o\in O} I_o\right]
=
\sum_{o\in O}\mathbb E[I_o]
= 
\sum_{o \in O}\Pr(o \text{ is isolated}).
\]

Hence, 
\[
\mathbb E[Z_n]
=
n\left(1-\frac{k_n^-}{n}\right)^n.
\]

Since $k_n^-=(1-\varepsilon)\log n - \delta_n$, where $0 \leq \delta_n < 1$, we have
\[
\left(1-\frac{k_n^-}{n}\right)^n
\sim
e^{-k_n^-}
=
n^{-(1-\varepsilon)}e^{\delta_n}.
\]

Since $\delta_n \geq 0$, we know that $e^{\delta_n} \geq 1$. Then,
\[
\mathbb E[Z_n]\sim n^\varepsilon e^{\delta_n} \geq n^\varepsilon\to+\infty.
\]

\end{proof}

\begin{claim}
\[
\frac{\operatorname{Var}(Z_n)}{\mathbb E[Z_n]^2}\to 0.
\]
\end{claim}

\begin{proof}
Let
\[
p_n\equiv \Pr(o \text{ is isolated})
=
\left(1-\frac{k_n^-}{n}\right)^n.
\]
For two distinct objects \(o\neq o'\), the probability that a given agent ranks neither
\(o\) nor \(o'\) among her top \(k_n^-\) objects is
\[
\frac{\binom{n-2}{k_n^-}}{\binom{n}{k_n^-}}
=
\frac{(n-k_n^-)(n-k_n^- -1)}{n(n-1)}.
\]
Therefore, since agents' preferences are independently drawn,
\[
q_n
\equiv
\Pr(o \text{ and } o' \text{ are both isolated})
=
\left(
\frac{(n-k_n^-)(n-k_n^- -1)}{n(n-1)}
\right)^n.
\]

Note that the expression
\[
\mathbb E[Z_n(Z_n-1)]
\]
is the expected number of ordered pairs of isolated objects.

Then,
\[
\mathbb E[Z_n(Z_n-1)]
=
n(n-1)q_n.
\]

We now compute $\operatorname{Var}(Z_n)$,

\[
\operatorname{Var}(Z_n)
=
\mathbb E[Z_n^2]-\mathbb E[Z_n]^2
=
\mathbb E[Z_n(Z_n-1)]+\mathbb E[Z_n]-\mathbb E[Z_n]^2.
\]
Dividing by \(\mathbb E[Z_n]^2=(np_n)^2\), we obtain
\[
\frac{\operatorname{Var}(Z_n)}{\mathbb E[Z_n]^2}
=
\frac{(n-1)}{n}\frac{q_n}{p_n^2} +
\frac{1}{np_n}
-1.
\]

By Claim \ref{large2}, the second term converges to zero because \(np_n=\mathbb E[Z_n]\to+\infty\).

To finish the proof of the claim we show that
\[
\frac{q_n}{p_n^2}
=
\left(
\frac{n(n-k_n^- -1)}{(n-1)(n-k_n^-)}
\right)^n
\to 1.
\]

Indeed, note that
\[
(n-1)(n-k_n^-)=n(n-k_n^--1)+k_n^-.
\]
Therefore,
\[
\frac{n(n-k_n^--1)}{(n-1)(n-k_n^-)}
=
1-\frac{k_n^-}{(n-1)(n-k_n^-)}.
\]
Hence,
\[
\frac{q_n}{p_n^2}
=
\left(
1-\frac{k_n^-}{(n-1)(n-k_n^-)}
\right)^n.
\]

Since
\[
k_n^-=\lfloor(1-\varepsilon)\log n\rfloor,
\]
we have that \(k_n^-=O(\log n)\). Therefore,

\[
\frac{k_n^-}{(n-1)(n-k_n^-)}
=
\frac{O(\log n)}{O(n^2)}
=
O\left(\frac{\log n}{n^2}\right).
\]

Because $\frac{k_n^-}{(n-1)(n-k_n^-)} \to 0$ as $n \to \infty$, we can apply the standard asymptotic equivalence $(1-x)^n \sim e^{-n x}$. Substituting our bound into the exponent gives:
\[
n \frac{k_n^-}{(n-1)(n-k_n^-)} = n \cdot O\left(\frac{\log n}{n^2}\right) = O\left(\frac{\log n}{n}\right).
\]
Since $\frac{\log n}{n} \to 0$, it follows that 
\[
\frac{q_n}{p_n^2} = e^{-O\left(\frac{\log n}{n}\right)} \to e^0 =  1.
\]
which implies
\[
\frac{\operatorname{Var}(Z_n)}{\mathbb E[Z_n]^2}
=
\frac{1}{np_n}
+
\frac{(n-1)}{n}\frac{q_n}{p_n^2} 
-1 \rightarrow 0.
\]

\end{proof}

\begin{claim} \label{claimlarge3}
\[
\Pr(Z_n>0)\to 1.
\]
\end{claim}

\begin{proof}
Since \(Z_n\geq 0\) and \(\mathbb E[Z_n]>0\), if \(Z_n=0\), then
\[
\left|Z_n-\mathbb E[Z_n]\right|
=
\left|0-\mathbb E[Z_n]\right|
=
\mathbb E[Z_n].
\]
Therefore, whenever the event \(\{Z_n=0\}\) occurs, the event
\[
\left\{
\left|Z_n-\mathbb E[Z_n]\right|
\geq
\mathbb E[Z_n]
\right\}
\]
also occurs. Equivalently,
$
\{Z_n=0\}
\subseteq
\left\{
\left|Z_n-\mathbb E[Z_n]\right|
\geq
\mathbb E[Z_n]
\right\}.
$
Taking probabilities on both sides gives
\[
\Pr(Z_n=0)
\leq
\Pr\left(
\left|Z_n-\mathbb E[Z_n]\right|
\geq
\mathbb E[Z_n]
\right).
\]
By Chebyshev's inequality,
\[
\Pr\left(
\left|Z_n-\mathbb E[Z_n]\right|
\geq
\mathbb E[Z_n]
\right)
\leq
\frac{\operatorname{Var}(Z_n)}{\mathbb E[Z_n]^2}.
\]
Hence,
\[
\Pr(Z_n=0)
\leq
\frac{\operatorname{Var}(Z_n)}{\mathbb E[Z_n]^2}.
\]
Therefore, as
\[
\frac{\operatorname{Var}(Z_n)}{\mathbb E[Z_n]^2}\to 0,
\]
we obtain
\[
\Pr(Z_n=0)\to 0,
\]
and consequently
\[
\Pr(Z_n>0)\to 1.
\]

\end{proof}

We finish the proof of the lemma by noting that, by Claim \ref{claimlarge3}, the graph $\widetilde{\mathcal G}(n,k_n^-)$ has an isolated object with probability converging to one, and therefore no perfect matching. Thus,
\[
\mathbb P(\widetilde X_{n,k_n^-}=1)
=
\mathbb P\left(
\widetilde{\mathcal G}(n,k_n^-)\text{ has a perfect matching}
\right)
\to 0.
\]

\end{proof}

\begin{proof}[Proof of Proposition \ref{prop:LB_Rawls}]
Note that the event \(\{\widetilde X_{n,k}=1\}\) is nested in \(k\), for every \(k\leq k_n^-\). Indeed, if \(k\leq k_n^-\), then every edge of \(\widetilde{\mathcal G}(n,k)\) is also an edge of \(\widetilde{\mathcal G}(n,k_n^-)\). Hence, if \(\widetilde{\mathcal G}(n,k)\) has a perfect matching, then so does \(\widetilde{\mathcal G}(n,k_n^-)\).

Then
\[
\mathbb P(\widetilde X_{n,k}=1)
\leq
\mathbb P(\widetilde X_{n,k_n^-}=1).
\]

By the proof of Proposition \ref{large}, expression \ref{fact1} we have that:

\[
\mathbb{E}(r^{\mathrm{Rawls}}_{\max})
=
n-\sum_{k=1}^{n-1}\mathbb{P}(\widetilde X_{n,k}=1).
\]

Using this identity we obtain:

\[
\begin{aligned}
\mathbb{E}(r^{\mathrm{Rawls}}_{\max})
&=
1+\sum_{k=1}^{n-1}\left[1-\mathbb{P}(\widetilde X_{n,k}=1)\right]\\
&\geq
\sum_{k=1}^{k_n^-}\left[1-\mathbb{P}(\widetilde X_{n,k}=1)\right]\\
&\geq
k_n^-\left[1-\mathbb{P}(\widetilde X_{n,k_n^-}=1)\right].
\end{aligned}
\]
Because \(k_n^-\to\infty\) and \(\mathbb{P}(\widetilde X_{n,k_n^-}=1)\to 0\), it follows that
\[
\mathbb{E}(r^{\mathrm{Rawls}}_{\max})\to\infty.
\]
\end{proof}

\bibliographystyleapp{apalike} 
\bibliographyapp{rawls} 
\end{document}